\documentclass{article}

\usepackage{amsmath}
\usepackage{amsthm}
\usepackage{appendix}
\usepackage{bbold}
\usepackage{amssymb}

\usepackage{color}

\usepackage{hyperref}

\usepackage{tikz}
\usepackage{pgfplots}
\usepackage{pgf}
\usetikzlibrary{patterns}
\usepackage{float}
\usepackage{graphicx}

\usepackage{braket}

 \theoremstyle{plain}
  \newtheorem{thm}{Theorem}
  
  \newtheorem{lemma}[thm]{Lemma}
  \newtheorem{cor}[thm]{Corollary}
  \newtheorem{remark}[thm]{Remark}
  \newtheorem*{thm*}{Theorem}
  \newtheorem*{prop*}{Proposition}
  \newtheorem*{lemma*}{Lemma}
  \newtheorem*{cor*}{Corollary}
  \newtheorem*{remark*}{Remark}

\usepackage{hhline}
\newcommand{\specialcell}[2][c]{%
  \begin{tabular}[#1]{@{}l@{}}#2\end{tabular}}

\newcommand{\Tr}{\mbox{\rm Tr}}

\usepackage{algorithm}
\floatname{algorithm}{Protocol}
\usepackage{algcompatible}

\theoremstyle{definition}
\newtheorem{defn}[thm]{Definition}
 \newtheorem*{conj*}{Conjecture}

\usepackage{fullpage}

\usepackage{todonotes}

\usepackage{authblk}

\begin{document}

\title{Simple and tight device-independent security proofs}

\author[1]{Rotem Arnon-Friedman}
\author[1]{Renato Renner}
\author[2]{Thomas Vidick}
\affil[1]{Institute for Theoretical Physics, ETH-Z\"urich, CH-8093, Z\"urich, Switzerland}
\affil[2]{Department of Computing and Mathematical Sciences, California Institute of Technology, Pasadena, CA, USA}

\date{}

\maketitle

\begin{abstract}
	Device-independent security is the gold standard for quantum cryptography: not only is security based entirely on the laws of quantum mechanics, but it holds irrespective of any a priori assumptions on the quantum devices used in a protocol, making it particularly applicable in a quantum-wary environment. While the existence of device-independent protocols for tasks such as randomness expansion and quantum key distribution has recently been established, the underlying proofs of security remain very challenging, yield rather poor key rates, and demand very high-quality quantum devices, thus making them all but impossible to implement in practice. 

	We introduce a technique for the analysis of device-independent cryptographic protocols. We provide a flexible protocol and give a security proof that provides  quantitative bounds that are asymptotically  tight, even in the presence of general quantum adversaries.	At a high level our approach amounts to establishing a reduction to the scenario in which the untrusted device operates in an identical and independent way in each round of the protocol. This is achieved by leveraging the sequential nature of the protocol, and makes use of a newly developed tool, the ``entropy accumulation theorem'' of Dupuis \textit{et al.}~\cite{dupuis2016entropy}.
	
	As concrete applications we give simple and modular security proofs for device-independent quantum key distribution and randomness expansion protocols based on the CHSH inequality. For both tasks we establish essentially optimal asymptotic key rates and noise tolerance.
	In view of recent experimental progress, which has culminated in loophole-free Bell tests, it is likely that these protocols can be practically implemented in the near future.

\end{abstract}

\section{Introduction}

Classical cryptography relies on computational assumptions, such as the hardness of factoring, to deliver a wide range of functionalities, from secure communication to secure distributed computation and program obfuscation. The advent of quantum information in the 1980s brought forward a completely different possibility: security based only on the fundamental laws of physics. The quantum protocols for key distribution by Bennett and Brassard~\cite{bennett1984proceedings} and  Ekert~\cite{ekert1991quantum} allow mutually trustful users connected only by an authenticated classical channel, and an arbitrary quantum channel, to establish a private key whose security is guaranteed by the laws of quantum mechanics. With their private key, the users can then communicate with perfect security using, e.g., a one-time pad.

Quantum information is a double-edged sword. A typical protocol for quantum key distribution (QKD) requires the users, Alice and Bob, to manipulate quantum states: for example, in Ekert's protocol Alice has to prepare multiple entangled pairs of photons, and send one photon from each pair to Bob; both users then perform specific measurements on their respective photons in order to generate the classical key. The first proofs of security for QKD crucially relied on the fact that each user's internal operations were implemented in a specific way: the state preparation implemented by Alice, the measurements performed by Bob, all had to follow the low-level prescription given in the protocol. Initial implementations of QKD revealed how delicate these assumptions are. This is not only a question of the quality of the devices used. A wide range of side-channel attacks~\cite{fung2007phase,lydersen2010hacking,weier2011quantum,gerhardt2011full} were able to successfully exploit the very phenomena of quantum mechanics on which the security of QKD relies, 
 such as the no-cloning or uncertainty principles, to provide attacks that did not respect some of the assumptions made by the security proofs, that were difficult, if not impossible, to verify in practice (such as the assumption that Alice prepares a single pair of photons at a time, and not a more complex system with additional, undetected degrees of freedom that could leak information to an eavesdropper). 
 
The paradigm of device-independence offers an uncompromising solution to this conundrum. A cryptographic protocol is termed device-independent (DI) if its security guarantees hold irrespective of the quality, or trustworthiness, of the  physical devices used to implement the protocol~\cite{mayers1998quantum, barrett2005no} (see~\cite{ekert2014ultimate} for a perspective article). 
Security in such protocols should be based only on the statistics observed by the honest parties executing the protocol. In other words, any execution of the protocol should  contain a  ``proof'' that the generated key is secure, a proof that remains valid as long as very mild assumptions on the physical devices used are satisfied (informally, that no information is exchanged between the users' and eavesdroppers' laboratories, arguably an unavoidable requirement). 
	
	
Although the formulation of the DI paradigm appeared only much later in work of Mayers and Yao, the possibility for device-independence was arguably already present in Ekert's protocol. Ekert's intuition was to tie privacy of the users' key to the non-local effects that led to the generation of the key (measurement of a Bell pair). Ekert observed that quantum entanglement allows distant parties to generate bits that are correlated in such a strong way that it (seemingly) precludes any correlation with a third party~--- a phenomenon now known as the monogamy of correlations. 

The framework for the study of non-locality was put in place by Bell in the 1960s~\cite{bell1964einstein}. Motivated by questions in the foundations of quantum mechanics (including a proposal for an experiment that could in principle test the EPR ``paradox''~\cite{einstein1935can}), Bell introduced the notion of what is now known as a Bell inequality  (see~\cite{scarani2013device,brunner2014bell} for excellent reviews on the topic). In the context of device-independence we interpret a Bell inequality~\cite{bell1964einstein} as the specification of a small  game\footnote{For an explicit example of a game see Section~\ref{sec:non-local_games}.} that can be played by the honest parties using their respective quantum devices. What makes the game interesting is that it is designed in a way such that any classical strategy for the devices (i.e., any model for their actions that can be implemented as a convex combination of deterministic strategies) leads to a success probability $\omega_c$ in the game such that $\omega_c <1$. In contrast, there exists a quantum strategy (i.e., one in which the devices determine outcomes in the game by performing local measurements on a shared entangled state) that achieves a greater success probability, $\omega_q > \omega_c$. The use of such a game has the following major immediate consequence: if the honest parties observe that their devices are able to attain a success probability that is strictly larger than $\omega_c$, they can conclude that their devices must be non-classical --- the devices must share entanglement. 
This provides a first step in the implementation of the DI program: a statistical test that can be performed with the devices and that guarantees some element of quantumness. Early results in device-independence went further by establishing a quantitative relationship between the devices' success probability and the amount of secret randomness produced during the game~\cite{pironio2010random,acin2012randomness}, leading to a ``statistical test for information-theoretically secure randomness''~\cite{Colbeck09}, a task that is provably impossible to achieve using classical systems alone. 
	
	In the past decade an extended line of works has explored the application of the device-independence paradigm to multiple cryptographic tasks. A partial list includes QKD~\cite{barrett2005no,pironio2009device,vazirani2014fully}, randomness expansion~\cite{pironio2010random,vazirani2012certifiable,coudron2013infinite,miller2014robust} and amplification~\cite{colbeck2012free,gallego2013full,chung2014physical,brandao2016realistic,kessler2017device}, verified quantum computation~\cite{gheorghiu2015robustness,hajduvsek2015device,coladangelo2017verifier}, bit commitment~\cite{aharon2015device} and weak string erasure~\cite{kaniewski2016device}. For virtually all these tasks a proof of security ultimately amounts to bounding the knowledge that an adversary (a malicious party, or an eavesdropper) can gain about the output of the protocol. This knowledge, or uncertainty, is modeled using a notion of entropy called the smooth conditional min-entropy~\cite{Ren05}. 
	In the case of QKD, for example, the output is the raw key $K$, and proving security is essentially equivalent\footnote{From that point onward standard classical post-processing steps, e.g., error correction and privacy amplification, suffice to prove the security of the protocol; see Section~\ref{sec:diqkd} for the details.} to establishing a lower bound on the smooth conditional min-entropy $H_{\min}^{\varepsilon}(K|E)$, where $E$ is the quantum system held by Eve, which can be initially correlated to the device producing $K$ (for formal definitions see Section~\ref{sec:prelim}). 
	
	Evaluating the smooth min-entropy $H_{\min}^{\varepsilon}(K|E)$ of a large system is often difficult, especially in the DI setting where not much is known about the way $K$ is produced. 
	One assumption commonly used to simplify this task is that the bits of $K=K_1,\dotsc,K_n$ are created in an independent and identical way and hence $K$ itself is an independent and identically distributed (i.i.d.\@) random variable. That is, it is assumed that the device held by Alice and Bob makes the same measurements on the same quantum states in every round of the protocol. This means that the device is initialized with some (unknown) state which has a tensor product structure $\rho_{AB}^{\otimes n}$, and that the measurements have a tensor product structure as well. In that case, the total entropy in $K$ can be easily related to the sum of the entropies in each round separately.\footnote{Formally the bound can be calculated using the quantum asymptotic equipartition property~\cite{tomamichel2009fully} for example.} A bound on the entropy accumulated in one round can usually be derived using the expected winning probability in the game played in that round, which in turn can be easily estimated during the protocol in the i.i.d.\@ case using standard Chernoff-type bounds since the same game is just being played repeatedly with the same strategy.  
	 
	Unfortunately, even though quite convenient (and, in many cases, seemingly necessary) for the analysis, the i.i.d.\@ assumption is a very strong one in the DI scenario.
	In particular, under such an assumption the device cannot use any internal memory (i.e., its actions in one round cannot depend on the previous rounds) or even display time-dependent behavior (due to inevitable imperfections for example).

	Without this assumption about the device, however, not much is a priori known about the structure of $K$, the expected winning probability in one round of the protocol, nor the way the total entropy of $K$ is accumulated one round after the other (as the device might correlate the different rounds in an almost arbitrary way). Therefore, security proofs that estimated $H_{\min}^{\varepsilon}(K|E)$ directly for the most general case had to use far more complicated techniques and statistical analysis compared to the i.i.d.\@ case.\footnote{This led to non-optimal proofs, both readability- and parameter-wise (e.g., key rates or amount of tolerable noise). See Section~\ref{sec:related_work} for a discussion of related works.}

\subsection{Results and contributions}

We introduce a general framework, consisting of a flexible protocol and analysis, for obtaining DI proofs of security for a broad range of cryptographic tasks. Our technique takes advantage of the sequential nature of the protocol, as well as the specific way in which classical statistics are collected by users of the protocol, to establish a reduction to the i.i.d.\@ setting. A major advantage of our approach is that the reduction is virtually lossless in terms of parameters.	
Hence, our result establishes the a priori surprising fact that general quantum adversaries are no stronger than an adversary restricted to i.i.d.\@ attacks. 
As a consequence, we are able to extend tight results known for, e.g., DIQKD, under the i.i.d.\@ assumption, to the most general setting. This yields the best rates known for any protocol for a DI cryptographic task. 

To further discuss our results we state an informal version of our main theorem, that describes the entropy generation guarantees of our protocol (see Lemma~\ref{lem:main_soundness} for a formal statement, and Theorem~\ref{thm:main_generation_chsh} for the specialization of the protocol to the CHSH inequality).

\begin{thm}[Main theorem, informal]\label{thm:main_generation_informal}
Fix a choice of parameters, including an underlying non-local game, for Protocol~\ref{pro:randomness_generation}. Then there exist constants $c_1,c_2>0$ such that the following holds. 
Let $D$ be any device and $\rho_{|\Omega}$ the state generated using Protocol~\ref{pro:randomness_generation}, conditioned on the protocol not aborting. 
Then for any $\varepsilon_1,\varepsilon_2\in(0,1)$, either the protocol aborts with probability greater than $1-\varepsilon_1$ or
	\begin{equation}\label{eq:main_thm-inf}
		 H^{\varepsilon_2}_{\min} \left( \mathbf{A B} | \mathbf{X Y T F} E \right)_{\rho_{|\Omega}} > c_1 n - c_2\sqrt{n \log(1/\varepsilon_1\varepsilon_2)} \;.
	\end{equation}
\end{thm}

We remark that there are multiple implementations of devices that when used in an execution of Protocol~\ref{pro:randomness_generation} lead to a negligible probability of the protocol aborting (this is formalized in our completeness statement; see Section~\ref{sec:EA_completeness}). Importantly, devices that are within reach of current state-of-the-art technology also belong to this set of devices. Thus, Theorem~\ref{thm:main_generation_informal} gives a  non-trivial bound on the entropy produced by such devices. This was not achieved by previous works as discussed in Section~\ref{sec:related_work}.

Let us explain~\eqref{eq:main_thm-inf}. The registers $\mathbf{AB}$ contain the classical outputs generated by the device during the protocol. The registers $\mathbf{X Y T F}$ contain the classical inputs selected by the users, as well as auxiliary classical information exchanged during the protocol, that may be leaked to the adversary. $E$ is a quantum register that describes the adversary's quantum system, that may be correlated with the initial state of the devices. Thus,~\eqref{eq:main_thm-inf} gives a very precise bound on the amount of the smooth min-entropy present in the users' outputs at the end of the protocol, conditioned on all information available to the adversary. (As we discuss later, this formulation is flexible enough that it can be applied to obtain guarantees not only for the task of randomness generation, but also for quantum key distribution and other cryptographic tasks.)


We give explicit formulas for computing the constants $c_1$ and $c_2$ that appear in~\eqref{eq:main_thm-inf}, as a function of the parameters of the protocol (such as the fraction of rounds used for testing and the threshold value based on which the decision to accept or reject is made). 
Importantly, the constant $c_1$ that governs the leading-order term equals the optimal constant, i.e., the same leading constant that would be obtained under the i.i.d.\@ assumption, which by the asymptotic equipartition property is the Shannon entropy accumulated in one round of the protocol. Thus our result implies that general quantum adversaries do not force weaker rates compared to those achieved in less general scenarios. That is, it is possible to achieve rate vs.\@ noise tradeoffs which are as good as those achieved in much more restricted settings such as under the i.i.d.\@ assumption. 

To determine the constant $c_1$ the user of our result  must perform only one further crucial optimization: identify a so-called ``min-tradeoff function'',  a convex, differentiable function that lower bounds  the conditional Shannon entropy generated in a single round of the protocol, as a function of the game value. Informally, the requirement that the min-tradeoff function is differentiable and convex allows to account for lower-order fluctuations in the entropy generated that arise from finite statistics.  In Section~\ref{sec:entropy-chsh} we give a min-tradeoff function that can be used when the game that underlies the protocol is the CHSH game of Clauser et al.~\cite{clauser1969proposed}. Other use cases may require other min-tradeoff functions; indeed in Section~\ref{sec:related_work} below we survey recent works that applied our results to a variety of scenarios by computing an appropriate min-tradeoff function.

As already mentioned, beyond the first-order term in~\eqref{eq:main_thm-inf} our result also provides control over the constant~$c_2$ in front of the second-order term. Such control is a necessary condition for any application where finite values of $n$ need to be considered, such as in cryptography, and even more so quantum cryptography, where values of $n$ that can be achieved in practice remain relatively small. (See e.g. Figure~\ref{fig:qkd_rates_n_mod}, where one can see that finite-size effects can play an important role up to even moderately large values of $n\approx 10^{10}$.) 
As loophole-free Bell tests (a necessity for DI cryptography) are finally being realized~\cite{hensen2015loophole,shalm2015strong,giustina2015significant}, 
 the ability to derive essentially optimal values for $c_1$ and $c_2$ considerably decreases the gap between theory and experiments, thereby marking an important step towards practical DI protocols and their implementations.
	
We provide two concrete applications for Theorem~\ref{eq:main_thm-inf}. To begin with, we consider a DIQKD protocol based on the CHSH game, and prove its security. The achieved key rates and noise tolerance are significantly higher than in previous works. For large enough number of rounds $n$, the key rate as a function of the noise tolerance essentially coincides with the optimal result of~\cite{pironio2009device}, derived for the restricted i.i.d.\@ and asymptotic case. In particular, as in~\cite{pironio2009device}, we show that the protocol can tolerate up to the optimal error rate of~$7.1\%$ while still producing a positive key rate. (For comparison\footnote{The noise models of the two works are a bit different; the value of $1.6\%$ is the relevant one after equating the models.}, in~\cite{vazirani2014fully} the maximal noise tolerance was~$1.6\%$). 
	Moreover, the achieved key rates are comparable to those achieved in device-\emph{dependent} QKD protocols~\cite{scarani2008quantum,scarani2008security} already starting from $n=10^6$. (For further details and plots see Section~\ref{sec:qkd_curves}). 
		As a second application we consider a randomness expansion protocol based on the CHSH inequality. Here as well, we obtain an expansion rate which is essentially the same as the optimal rate achieved in~\cite{pironio2010random} in the case of \emph{classical} adversaries only, while our result holds against \emph{quantum} adversaries. This is much better than the rates obtained in previous works~\cite{vazirani2012certifiable,miller2014robust,miller2014universal}.
	
\paragraph*{Main ideas of the proof.}
As expressed earlier, the main difficulty in the analysis is to overcome the lack of any a priori independence assumptions on the quantum state shared by the users' devices, as well as a potential eavesdropper. Towards this we first leverage the sequential nature of the protocol. 
Our approach is to show that the random variables that model events observed by the users (such as the classical input/output behavior of their device in successive rounds) obey a natural Markov property. Using that property, we are able to apply a newly developed tool, the ``entropy accumulation theorem''~\cite{dupuis2016entropy} (EAT), to act as  a replacement for the chain rule for the conditional smooth min-entropy. The EAT allows us to quantify how entropy ``accumulates'' across many random variables generated through a certain iterative quantum processes as long as it fulfills a number of conditions that are tied to the Markov property (see Section~\ref{sec:eat-thm} for the exact statement). 
	As a result, we obtain a modular protocol that can be used as a ``skeleton'' for many DI cryptographic tasks; the protocol comes with fine-tuned guarantees on the entropy that is generated throughout, as a function of quantities that can be estimated based on the analysis of a single round of the protocol. 
	Next, we provide a concrete instantiation of the protocol based on the CHSH inequality. By combining the results of~\cite{pironio2009device}, derived for the i.i.d.\@ case, with our analysis of the general protocol we obtain a lower bound on the generated entropy rate when using the CHSH inequality as a basis for the protocol. 
	Finally, we apply our results to prove security of a DIQKD protocol that we propose, with essentially optimal key rate and noise tolerance. 

\subsection{Related and subsequent work}\label{sec:related_work}

	The idea of basing the security of cryptographic protocols (QKD especially) on the violation of Bell inequalities originates in the celebrated work of Ekert~\cite{ekert1991quantum}. Later, Mayers and Yao~\cite{mayers1998quantum} recognized that devices maximally violating a Bell inequality (they considered a variant of the CHSH inequality) could be fully characterized, up to local degrees of freedom, and thus need not be trusted a priori. Barrett \textit{et al.}~\cite{barrett2005no} were the first to combine both ideas together and derive a proof of security for QKD in the DI scenario. Their security proof holds even in the presence of a super-quantum adversary, limited only by the non-signalling principle. The protocol of~\cite{barrett2005no}, however, could not tolerate any amount of noise and produced just one secret bit when using the device many times (i.e. the key rate is zero). 
	
	Following these initial works a long line of research~\cite{acin2006bell,acin2006efficient,scarani2006secrecy,acin2007device,masanes2009universally,pironio2009device,hanggi2010efficient,hanggi2010device,masanes2011secure,masanes2014full} led to protocols, and proof techniques, that establish non-vanishing key rates with a positive noise tolerance in the i.i.d.\@ setting, against quantum or super-quantum adversaries (the former typically leading to better rates and noise tolerance). Most relevant for our work are the results of~\cite{pironio2009device}, where security of a DIQKD protocol was proven in the asymptotic limit, i.e., when the device is used $n\rightarrow\infty$ times, and under the i.i.d.\@ assumption described above. Their protocol is based on the CHSH inequality~\cite{clauser1969proposed}, and their analysis shows that it achieves the best possible rates under these assumptions.
	
	For the more challenging scenario presented by the non-i.i.d.\@ setting, security was first established in~\cite{vazirani2014fully}; see also~\cite{reichardt2013classical}, who give a secure protocol but with vanishing rate and no noise tolerance. A more recent proof of security by Miller and Shi~\cite{miller2014robust} is closest to our results in that it bounds the amount of entropy generated in the protocol in a round-by-round fashion, similar in spirit (but technically very different) from our use of the EAT (see Section~\ref{sec:eat-thm} for a description). The security proofs of the existing works are quite complex and achieve relatively low key rates and noise tolerance (if any). 
	
 Although it was introduced only much more recently than QKD, the first task to have received a complete proof of security in the DI setting is the task of randomness expansion. This task, first considered in~\cite{Colbeck09}, is the problem of expanding a short initial amount of seed randomness into a longer string that is information-theoretically random; aside of its practical relevance the task received attention because it is one the simplest problems that is classically impossible, yet for which quantum computing provides an information-theoretically secure solution. In the non-i.i.d.\@ setting it was shown in~\cite{pironio2010random} that a quadratic expansion was possible, but the analysis in that paper was limited to the case of classical adversaries. Security against quantum adversaries was established in~\cite{vazirani2012certifiable}, where it was shown that exponential expansion is possible. The analysis of~\cite{vazirani2012certifiable}, however, does not tolerate noise in the devices; subsequent work~\cite{miller2014robust} provided a different analysis that is able to tolerate a positive noise rate. 
	
	The maximum amount of randomness that can be generated from one system violating a specific Bell inequality by a given amount has been well-studied. In~\cite{pironio2010random} tight bounds for the CHSH game are obtained; see, e.g.,~\cite{dhara2013maximal,law2014quantum} for recent works exploring different aspects of the question. However, when using the device repeatedly, in the non-i.i.d.\@ setting, few works give explicit rates; to the best of our knowledge the only quantitative results available are from~\cite{miller2014universal} (see also~\cite{pironio2013security,fehr2013security} for an analysis in the non-i.i.d.\@ case but under the assumption that the adversary holds only classical side information), and remain relatively weak in comparison to the best one may expect from the known results under the i.i.d.\@ assumption.

	Since the initial announcement of our work in~\cite{arnon2018practical},\footnote{The publication~\cite{arnon2018practical} is an extended abstract that presents the main results reported in this submission, but has a much more limited discussion of applications, and only contains informal proof sketches.} our framework has already been applied to a variety of additional tasks, including conference key agreement~\cite{ribeiro2017fully}, randomness expansion and privatization~\cite{kessler2017device}, and randomness generation with sublinear quantum resources~\cite{bamps2017device}. Our results have been applied to the analysis of the first experimental implementations of a protocol for randomness generation in the fully DI framework~\cite{liu2017high,shen2018randomness}. In all these cases the difficulty consists in establishing a good min-tradeoff function by analyzing in detail a single round of the protocol used; our results then almost automatically imply the appropriate rate for the $n$-round protocol. More recently, the second-order terms in the EAT have been improved in~\cite{dupuis2018entropy}. 

\paragraph*{Structure of the paper.}
	The paper is organized as follows. We start with some preliminaries in Section~\ref{sec:prelim}. In Section~\ref{sec:ea_general} we show how the EAT can be used in DI protocols for a general Bell inequality. Then, in Section~\ref{sec:entropy-chsh} we explicitly calculate and plot the entropy rates for the case of the CHSH inequality. We continue in Sections~\ref{sec:diqkd} and~\ref{sec:expansion} with our  DIQKD and randomness expansion protocols, respectively. We end in Section~\ref{sec:open_questions} with some open questions.

\section{Preliminaries}\label{sec:prelim}

\subsection{General notation}\label{sec:pre_nota}

	All logarithms are in base 2. 
	Random variables (RV) are denoted by capital letters while specific values are denoted by small letters.
	We denote vectors in bold face; for example, $\mathbf{X} = X_1, \dotsc, X_n$ is a vector of RV. Sets are denoted with calligraphic fonts.
	
	The set $\{1,2,\dotsc,n\}$ is denoted by $[n]$. 
	
	Given a value $\mathbf{c}=c_1, \dotsc, c_n \in \mathcal{C}^n$, where $\mathcal{C}$ is a finite alphabet, we denote by $\mathrm{freq}_\mathbf{c}$ the probability distribution over $\mathcal{C}$ defined by $\mathrm{freq}_\mathbf{c}(\tilde{c}) = \frac{| \left\{ i | c_i = \tilde{c} \right\} |}{n}$ for $\tilde{c}\in\mathcal{C}$.

 We assume familiarity with the standard notation for quantum states and measurements; see~\cite{nielsen2002quantum} for a comprehensive introduction. We generally index pure quantum states or density matrices by the registers on which they are supported, e.g. $\rho_{AB}$ is a density matrix supported on the Hilbert space $\mathcal{H}_A\otimes\mathcal{H}_B$. If $\rho_{\mathbf{C}E}$ is a state classical on $\mathbf{C}$ we write $\Pr\left[\mathbf{c}\right]_{\rho}$ to denote the probability that $\rho$ assigns to $\mathbf{c}$. For $m\in \mathbb{N}_{+} $, $\rho_{U_m}$ denotes the completely mixed state on $m$ qubits and $\mathbb{I}$ is the identity operator. 
	
	Let $f: \mathcal{S} \rightarrow \mathbb{R}$ be a function over some set $\mathcal{S} \subset \mathbb{R}^{m}$. Then the infinity norm of the gradient of $f$ is defined as
	\begin{equation*}
		\|  \nabla f \|_\infty = \sup \left\{ \Big|\frac{\partial}{\partial x_{i}} f(\textbf{x}) \Big|: \textbf{x} \in \mathcal{S}, \, i \in \{1,\dots,m\} \right\} \,. 
	\end{equation*}
	
	For convenience all important parameters, constants, and random variables used in the paper are listed in the tables in the appendix.

\subsection{Entropies and Markov chains}

	\paragraph*{Entropies and conditional entropies.}
	$h$ is used for the binary entropy function $h(p)=-p\log (p) -(1-p)\log (1-p)$. 
	The von Neumann entropy $H(\rho)$ of a quantum state $\rho$ is given by $H(\rho)=-\Tr (\rho \log \rho)$. Given a bipartite state $\rho_{AE}\in\mathcal{H}_A\otimes\mathcal{H}_E$ the conditional von Neumann entropy is defined as $H(A|E)_{\rho_{AE}} = H(\rho_{AE}) -H(\rho_{E})$. When the state on which the entropy is evaluated is clear from the context we drop the subscript and write $H(A|E)$.
	
	\paragraph*{Min-entropy.}
	Given a state classical on $A$, $\rho_{AE}=  \sum_a p_a \ket{a}\!\!\bra{a}\otimes \rho_E^a$, the conditional min-entropy is 
	\[	
		H_{\text{min}}(A|E) = -\log p_{\text{guess}}(A|E)\;,
	\]
	where $p_{\text{guess}}(A|E)$ is the maximum probability of guessing $A$ given the quantum system $E$: 
	\[
		p_{\text{guess}}(A|E) = \max_{\{M^a_E\}_a} \, \sum_a p_a \Tr (M^a_E\rho^a_E)  \;,
	\]
	and the maximum is taken over all POVMs $\{M^a_E\}_a$ on $E$. 
	For any quantum state $\rho_{AE}$, $H(A|E) \geq H_{\text{min}}(A|E)$. 
	
	The smooth conditional min-entropy with smoothness parameter $\varepsilon$ of a state $\rho_{AE}$ is defined to be $H_{\text{min}}^\varepsilon(A|E)_{\rho_{AE}} = \max_{\sigma_{AE} \in \mathcal{B}^\varepsilon(\rho_{AE})} H_{\text{min}}(A|E)_{\sigma_{AE}}$,
	for $\mathcal{B}^\varepsilon(\rho_{AE})$ the set of sub-normalised states $\sigma_{AE}$ with $P(\rho_{AE},\sigma_{AE}) \leq \varepsilon$, where $P$ is the purified distance~\cite{tomamichel2010entropyduality}.
	
	\paragraph*{Max-entropy.}
	The quantum smooth max-entropy of a state $\rho_{AE}$ is given by 
	\[
		H^{\varepsilon}_{\max}(A|E)_{\rho_{AE}} = \log \inf_{\sigma_{AE} \in \mathcal{B}^\varepsilon(\rho_{AE})} \sup_{\tau_E} \| \sigma_{AE}^{\frac{1}{2}}\tau_E^{-\frac{1}{2}}\|_1^2 \;.
	\]
	We will also use the closely related $H^{\varepsilon}_{0}$ entropy. For classical $\mathbf{X}$ and $\mathbf{Y}$ distributed according to $\mathrm{P}_{\mathbf{X}\mathbf{Y}}$, $H_{0}(\mathbf{X}|\mathbf{Y})= \max_{\mathbf{y}} \log \left| \text{Supp}\left(\mathrm{P}_{\mathbf{X}|\mathbf{Y}=\mathbf{y}}\right)\right|$, where $\text{Supp}\left(\mathrm{P}_{\mathbf{X}|\mathbf{Y}=\mathbf{y}}\right)=\{ \mathbf{x} |\mathrm{P}_{\mathbf{X}|\mathbf{Y}=\mathbf{y}}\left(\mathbf{x}\right) > 0\}$. Its smooth version is given by 
	\[
		H^{\varepsilon}_{0}(\mathbf{X}|\mathbf{Y})=\min_{\Omega} \max_{\mathbf{y}} \log \left| \text{Supp}\left(\mathrm{P}_{\mathbf{X}|\Omega,\mathbf{Y}=\mathbf{y}}\right)\right| \;,
	\]
	where the minimum ranges over all events $\Omega$ with probability at least $1-\varepsilon$.

	\paragraph*{Markov chains.}
	A tripartite quantum state $\rho_{ABC}$ is said to fulfil the Markov chain condition $A\leftrightarrow B \leftrightarrow C$ if $I(A:C|B) = 0$, where $I(A:C|B)= H(AB) + H(BC) - H(B) - H(ABC)$ is the conditional mutual information. $I(A:C|B) = 0$ if and only if given $B$, $A$ and $C$ are independent.\footnote{There are also other equivalent ways of defining Markov chains for quantum states~\cite{hayden2004structure}, but for our purposes this definition suffices.}

\subsection{Non-local games}
\label{sec:non-local_games}

	We consider general two-player non-local games $G$.
	In a game $G$, the two players, Alice and Bob, share a bipartite quantum state. Given a question for Alice and a question for Bob, they can choose how to measure their parts of the state, and then use the measurements outcomes to supply an answer each. They win if their answers fulfil a pre-defined requirement, called the winning criterion.    
	
	More formally, a game $G$ is defined via sets of questions and answers for Alice and Bob, $\mathcal{X},\mathcal{Y}$ and $\mathcal{A},\mathcal{B}$, a distribution $\pi$ over $\mathcal{X}\times\mathcal{Y}$ (we will generally assume this is a product distribution), and a winning criterion $w:\mathcal{X}\times\mathcal{Y}\times\mathcal{A}\times\mathcal{B} \rightarrow \{0,1\}$.\footnote{A general Bell inequality would allow for an $\mathbb{R}$-valued $w$; we will not need this here.}
	
A strategy for the players in a game $G$ is specified by, first, a bipartite state $\rho_{Q_AQ_B}$, where Alice holds register $Q_A$ and Bob register $Q_B$, and second, local measurements that each player performs on his or her register in order to determine the answer to the given question.
	We use $\omega\in[0,1]$ to denote the winning probability of a strategy in the game~$G$.
	
	We sometimes use the equivalent language of Bell inequalities. The Bell functional associated to a nonlocal game is the linear function from $\mathbb{R}^{\mathcal{X}\times\mathcal{Y}\times\mathcal{A}\times\mathcal{B}}$ to $\mathbb{R}$ that maps a tuple $p$ to $\sum_{x,y,a,b}  \pi(x,y)w(x,y,a,b)p(x,y,a,b)$. In this language, the quantum value of the game is also called the  largest violation of the Bell inequality, i.e., the largest value attained by the Bell functional when evaluated on tuples $p$ that correspond to conditional distributions that can be realized by performing local measurements on an entangled state. 
	
	\paragraph{The CHSH game.}
	We use a variant of the CHSH game previously used in~\cite{pironio2009device,vazirani2014fully} in the context of DIQKD. In this game Alice has two possible inputs $\mathcal{X}=\{0,1\}$ and Bob three possible inputs $\mathcal{Y}=\{0,1,2\}$. The output sets are $\mathcal{A}=\mathcal{B}=\{0,1\}$. The input distribution $\pi_{\text{CHSH}}$ is uniform on $\mathcal{X}\times\mathcal{Y}$. The winning condition is the following:\footnote{The value of $w_{\text{CHSH}}$ for the inputs $(x,y)=(1,2)$ is left undefined, as it is never used.}
	\[
		w_{\text{CHSH}} = \begin{cases}
			1 & x,y\in\{0,1\} \text{ and } a\oplus b = x \cdot y \\
			1 & (x,y)=(0,2) \text{ and } a=b \\
			0 & \text{otherwise.}
		\end{cases}
	\]
	The optimal quantum strategy for this game is the same as in the standard CHSH game~\cite{clauser1969proposed}, except that if Bob's input is a $2$ he applies the same measurement as Alice's measurement on input~$0$. Since the underlying state is maximally entangled this ensures that their outputs will always match when $(x,y)=(0,2)$. 
	
	Conditioned on Bob's input not being $2$, the game played is the CHSH game. The optimal quantum strategy in the CHSH game achieves winning probability $\omega = \frac{2+\sqrt{2}}{4}\approx 0.85$, while the optimal classical strategy achieves a winning probability of $0.75$.
	
	Instead of describing the quantum advantage in the CHSH game in terms of the winning probability one can also work with the correlation coefficients defined by: $E_{xy} =\Pr[a=b|x,y]-\Pr[a\neq b|xy]$ for any pair of inputs $(x,y)$. The CHSH value is then given by $\beta = E_{00} + E_{01} + E_{10} - E_{11}$. 
	The relation between the winning probability in the CHSH game and the CHSH value is given by $\omega=1/2+\beta/8$. The largest values that these quantities can take in the classical case are $\beta=2$ and $\omega = \frac{3}{4}$, and the optimal quantum are $\beta = 2\sqrt{2}$ and $\omega = \frac{2+\sqrt{2}}{4}$.

\subsection{Untrusted device}\label{sec:untrusted_devices}

	In a DI protocol the honest parties interact with an \emph{untrusted device}. We now explain what is meant by this term and what are the assumptions regarding such a device. For simplicity we consider the case of two honest parties, Alice and Bob, but this can be extended to more parties in the obvious way.
	
	A device $D$ is modelled by a tripartite apparatus (including both state and measurements devices), distributed between Alice, Bob, and the adversary Eve. We think of the device as being prepared by Eve, and hence we call it untrusted. This allows Eve, in particular, to keep a purification of Alice and Bob's quantum state in a quantum register in her possession.\footnote{We emphasise that Eve is not required to measure her quantum state at any particular point. During the run of the considered protocol, Eve can eavesdrop on all the classical communication between the honest parties, and can later choose to measure her quantum register depending on this information.} 
	Although the device is untrusted we always assume that the following requirements hold (some of these requirements can be verified). 
	
	\paragraph*{The device can be used to run the considered protocol.} That is, Alice and Bob can interact with $D$ according to the relevant protocol (for an example of a protocol, see Protocol~\ref{pro:randomness_generation} below). Alice and Bob's components of $D$ implement the protocol by making sequential measurements on quantum states. In each round of the protocol, we say that the device is implementing some strategy for the game $G$ being played. The device may have memory, and thus apply a different strategy each time the game is played, depending on the previous rounds. Therefore, the measurement operators may change in each round, and the state on which the measurements are performed may be the post-measurement state from the previous round, a new state, or any combination of these two. 
	
	We sometimes use the terminology \emph{honest device} or \emph{honest implementation}. A device is said to be honest if it implements the protocol by using a certain pre-specified strategy. In that case, the actions of the device are known and fixed (noise can still be present). 
	
	\paragraph*{Communication (signalling) between the components of the device.} The communication between Alice, Bob, and Eve's components is restricted in the following way:
	\begin{enumerate}
		\item Alice and Bob's components of $D$ cannot signal to Eve's component. \label{it:com_eve}
		\item Alice and Bob can decide when to allow communication (if any) between their components. This ensures that the underlying quantum state of Alice and Bob's components of the device is (at least) bipartite and that the measurements made in the two components, in each round, are in tensor product with one another. \label{it:com_alice_bob}
		\item Alice and Bob can decide when to receive communication (if any) from Eve's component.\label{it:from_eve} 
	\end{enumerate}
	
	The requirement given in Item~\ref{it:com_eve} is necessary for DI cryptography; without it the device could directly send to Eve all the raw data it generated.
	
	Item~\ref{it:com_alice_bob} implies that Alice and Bob's component must be (at least) bipartite. This is necessary to assure that the violation of the considered Bell inequality is meaningful and implies security.

	
	Items~\ref{it:com_alice_bob} and~\ref{it:from_eve} give Alice, Bob, and Eve's components the possibility to communicate in certain stages of the protocol. This is not a restrictive nor necessary assumption. This possibility to communicate is added since it   is advantageous to actual implementations of certain protocols. To be specific, we consider the following scenario. \emph{In-between} different rounds of the protocol, Alice and Bob's components of the device are allowed to communicate freely. During the execution of a single round, however, no communication is allowed. In particular, when the game is being played, there is no communication between the components once the honest parties' inputs are chosen and until the outputs are supplied by the device.\footnote{To be more precise and concrete, in Protocol~\ref{pro:randomness_generation} for example, communication is allowed in every round $i$ right after Step~\ref{prostep:measurement} is done, and until the beginning of round $i+1$, i.e., before $T_{i+1}$ is chosen.} Furthermore, in-between rounds Eve may send information to the device, but not receive any from it.
	In actual implementations this implies that entanglement can be distributed ``on the fly'' for each round of the protocol, instead of maintaining large quantum memories. 
	
	\paragraph*{Other assumptions.} Apart from the above description of the untrusted device, we assume the following other standard assumptions used in DI cryptography:
	\begin{enumerate}
		\item The honest parties' physical locations are secure (unwanted information cannot leak outside to Eve or between their devices).
		\item The honest parties have a trusted random number generator.
		\item The honest parties have trusted classical post-processing units to make the necessary (classical) calculations during the protocol.
		\item There is an  authenticated, but public, classical channel connecting the honest parties (if necessary).
		\item Quantum physics is correct.
	\end{enumerate}

\subsection{Security definitions}\label{sec:security_defs}

	\paragraph{DIQKD.}
	 A DIQKD protocol (see Section~\ref{sec:diqkd} for a description of an explicit protocol) consists of an interaction between two trusted parties, Alice and Bob, and an untrusted device as defined in Section~\ref{sec:untrusted_devices}. 
	At the end of the protocol each party outputs a key, $\tilde{K}_A$ for Alice and $\tilde{K}_B$ for Bob.  
	The goal of the adversary, Eve, is to gain as much information as possible about Alice and Bob's keys without being detected (i.e., in the case where the protocol is not being aborted). 
	
	Correctness, secrecy, and overall security of a protocol are defined as follows (see also~\cite{portmann2014cryptographic,beaudry2015assumptions}):
	
	\begin{defn} [Correctness]
		A DIQKD protocol is said to be $\varepsilon_{corr}$-correct, when implemented using a device $D$, if Alice and Bob's keys,  $\tilde{K}_A$ and $\tilde{K}_B$ respectively, are identical with probability at least $1-\varepsilon_{corr}$. That is, $\Pr ( \tilde{K}_A\neq \tilde{K}_B ) \leq \varepsilon_{corr}$.
	\end{defn}
	
	\begin{defn}[Secrecy]
		A DIQKD protocol is said to be $\varepsilon_{sec}$-secret, when implemented using a device $D$, if for a key of length $l$, $\left(1-\Pr[\text{abort}]\right) \| \rho_{\tilde{K}_A E} - \rho_{U_l} \otimes \rho_{E} \|_1 \leq \varepsilon_{sec}$, where $E$ is a quantum register that may initially be correlated with $D$.
	\end{defn}
	$\varepsilon_{sec}$ in the above definition can be understood as the probability that some non-trivial information leaks to the adversary~\cite{portmann2014cryptographic}. 
	
	If a protocol is $\varepsilon_{corr}$-correct and $\varepsilon_{sec}$-secret (for a given $D$), then it is $\varepsilon_{\mathrm{QKD}}^s$-correct-and-secret for any $\varepsilon_{\mathrm{QKD}}^s\geq \varepsilon_{corr}+\varepsilon_{sec}$.
	
	\begin{defn}[Security]\label{def:security_QKD}
		A DIQKD protocol is said to be $(\varepsilon_{\mathrm{QKD}}^s,\varepsilon_{\mathrm{QKD}}^c,l)$-secure if:
		\begin{enumerate}
			\item (Soundness) For \emph{any} implementation of the device $D$ it is $\varepsilon_{\mathrm{QKD}}^s$-correct-and-secret.
			\item (Completeness) There exists an honest implementation of the device $D$ such that the protocol aborts with probability at most $\varepsilon_{\mathrm{QKD}}^c$. 
		\end{enumerate}
	\end{defn}
	
	The protocols that we consider below take into account possible noise in the honest implementation. That is, even when there is no adversary at all, the actual implementation of the devices might not be perfect. Thus, the \emph{completeness} of the protocol implies its \emph{robustness} to the desired amount of noise.
	
	Lastly, a remark regarding the composability of this security definition is in order.  A security definition is said to be composable~\cite{canetti2001universally,ben2004general,portmann2014cryptographic} if it implies that the protocol can be used arbitrarily and composed with other protocols (proven secure by themselves), without compromising security. 
	Obviously, if Alice and Bob wish to use the keys they produced in the DIQKD protocol in some other cryptographic protocol (i.e., they compose the two protocols), it is necessary for them to use protocols which were proven to have composable security. 
	
	For the case of (device-\emph{dependent}) QKD, Definition~\ref{def:security_QKD} was rigorously proven to be composable~\cite{portmann2014cryptographic}. This suggests that the same security definition should also be the relevant one in the DI context and, indeed, as far as we are aware, it is the definition that has been used in all prior works on DI cryptography. Nevertheless, the claim that Definition~\ref{def:security_QKD} is composable for DI protocols as well has never been rigorously proven, and the result of~\cite{barrett2013memory} suggests that this is not the case when the same devices are reused in the composition. We still use this definition as it seems like the most promising security definition to date. This implies that, as in all other works, \emph{after the end of the protocol} the device cannot be used again in general~\cite{barrett2013memory}.
	
	\paragraph{Randomness expansion.} In the task of randomness expansion there is a single user interacting sequentially with an untrusted device. At the start of the interaction the user is presented with a source $R\in\{0,1\}^r$ of uniformly random bits. The user then interacts sequentially with the device in a deterministic way (the only sources of randomness being the initial string $R$ and any randomness which may be present in the devices' outputs). At the end of the protocol the user returns a string $Z\in\{0,1\}^m$ of $m$ bits that is statistically close to uniform, conditioned on $R$ as well as any side information of the adversary. (See Section~\ref{sec:expansion} for a concrete example of a randomness expansion protocol.) More formally, we require the following.

	\begin{defn}[Security of randomness expansion]
		A protocol is called an $(\varepsilon^c_{RE},\varepsilon_{RE}^s)$-secure $r\to m$ randomness expansion protocol\footnote{All parameters $\varepsilon^c_{RE},\varepsilon_{RE}^s,r$ and $m$ will in general be function of a parameter $n$ that also parametrises the protocol and the number of rounds of interactions between the user and the device.} if, provided as input $r$ uniformly random bits:
	\begin{enumerate}
		\item (Soundness) For any implementation of the device $D$ the protocol either aborts or returns a classical string $Z\in\{0,1\}^m$ and we have
		\[
			\left( 1- \Pr[\text{abort}]\right)\| \rho_{ZRE} - \rho_{U_m} \otimes \rho_{RE} \|_1 \leq \varepsilon_{RE}^s \;,
		\]
		where $E$ is a quantum register that may initially be correlated with $D$.
		\item (Completeness)  There exists an honest implementation of the device such that the protocol aborts with probability at most $\varepsilon_{RE}^c$.
	\end{enumerate}
	\end{defn}
	
	As in the case of DIQKD, this security definition was not proven to be composable in general.
\subsection{The entropy accumulation theorem}
\label{sec:eat-thm}

	The main tool used in this work is the EAT~\cite[Theorem 4.4]{dupuis2016entropy}. Below we give the necessary details in a notation appropriate for our work (although less general than the original EAT). 
	
	We work with channels with the following properties:
	\begin{defn}[EAT channels]\label{def:eat_channels}
		EAT channels $\mathcal{N}_i:R_{i-1}\rightarrow R_i A_i B_i I_i C_i$, for $i\in[n]$, are CPTP maps such that for all $i\in [n]$:
		\begin{enumerate}
			\item $A_i,B_i,I_i$ and $C_i$ are finite-dimensional classical systems (RV). $A_i$ and $B_i$ or of dimension $d_{A_i}$ and $d_{B_i}$ respectively. $R_i$ are arbitrary quantum registers. 
			\item For any input state $\sigma_{R_{i-1}R'}$, where $R'$ is a register isomorphic to $R_{i-1}$, the output state $\sigma_{R_i A_i B_i I_i C_i R'} = \left( \mathcal{N}_i \otimes \mathbb{I}_{R'} \right) \left(\sigma_{R_{i-1}R'} \right)$ has the property that the classical value $C_i$ can be measured from the marginal $\sigma_{A_i B_i I_i}$ without changing the state. 
			\item For any initial state $\rho_{R_0 E}^0$, the final state $\rho_{\mathbf{ABIC}E} = \left( \Tr_{R_n}\circ \mathcal{N}_n \circ \dots \circ \mathcal{N}_1\right) \otimes \mathbb{I}_E \; \rho_{R_0E}^0$ fulfils the Markov chain condition $A_{1\dotsc i-1}B_{1\dotsc i-1} \leftrightarrow I_{1\dotsc i-1} E \leftrightarrow I_i$ for each $i\in[n]$. 
		\end{enumerate}
	\end{defn}
	
	\begin{defn}[Tradeoff functions]\label{def:min_tradeoff_func}
		Let $\mathcal{N}_1,\ldots,\mathcal{N}_N$ be a family of EAT channels. Let $\mathcal{C}$ denote the common alphabet of $C_1,\ldots,C_n$. 	
		A differentiable and convex function $f_{\min}$ from the set of probability distributions $p$ over $\mathcal{C}$ to the real numbers is called a \emph{min-tradeoff function} for $\{\mathcal{N}_i\}$ if it satisfies\footnote{The infimum and supremum over the empty set are defined as plus and minus infinity, respectively.} 
		\[
			f_{\min}(p) \leq \inf_{\sigma_{R_{i-1}R'}:\mathcal{N}_i(\sigma)_{C_i}=p} H\left( A_i B_i | I_i R' \right)_{\mathcal{N}_i(\sigma)} \;
		\]
		for all $i\in [n]$, where the infimum is taken over all input states of $\mathcal{N}_i$ for which the marginal on $C_i$ of the output state is the probability distribution $p$.   
		
		Similarly, a differentiable and concave function $f_{\max}$ from the set of probability distributions $p$ over $\mathcal{C}$ to the real numbers is called a \emph{max-tradeoff function} for $\{\mathcal{N}_i\}$ if it satisfies 
		\[
			f_{\max}(p) \geq \sup_{\sigma_{R_{i-1}R'}:\mathcal{N}_i(\sigma)_{C_i}=p} H\left( A_i B_i | I_i R' \right)_{\mathcal{N}_i(\sigma)} \;
		\]
		for all $i\in [n]$, where the supremum is taken over all input states of $\mathcal{N}_i$ for which the marginal on $C_i$ of the output state is the probability distribution $p$.  
	\end{defn}

	\begin{thm}[EAT~\cite{dupuis2016entropy}]\label{thm:eat}
		Let $\mathcal{N}_i:R_{i-1}\rightarrow R_i A_i B_i I_i C_i$ for $i\in [n]$ be EAT channels as in Definition~\ref{def:eat_channels}, $\rho_{\mathbf{ABIC}E} = \left( \Tr_{R_n}\circ \mathcal{N}_n \circ \dots \circ \mathcal{N}_1\right) \otimes \mathbb{I}_E \; \rho_{R_0E}$ be the final state, 
		$\Omega$ an event  defined over $\mathcal{C}^n$, $p_\Omega$ the probability of $\Omega$ in $\rho$, 
		and $\rho_{|\Omega}$ the final state conditioned on $\Omega$. Let $\varepsilon_{\text{s}} \in (0,1)$.
		 
		For $f_{\min}$ a min-tradeoff function for $\{\mathcal{N}_i\}$, as in Definition~\ref{def:min_tradeoff_func}, and any $t\in \mathbb{R}$ such that $f_{\min}\left( \mathrm{freq}_\mathbf{c} \right) \geq t$ for any $\mathbf{c}\in\mathcal{C}^n$ for which $\Pr\left[\mathbf{c}\right]_{\rho_{|\Omega}}> 0$,
		\[
			H_{\min}^{\varepsilon_{\text{s}}} \left( \mathbf{AB}|\mathbf{I}E \right)_{\rho_{|\Omega}} > n t - v\sqrt{n} \;,
		\]
		where $v = 2\left(\log(1+2 d_{A_iB_i} ) + \lceil \|  \triangledown f_{\min} \|_\infty \rceil \right)\sqrt{1-2\log (\varepsilon_{\text{s}} \cdot p_\Omega)}$ and $d_{A_iB_i}$ denotes the dimension of $A_iB_i$.
		
		Similarly, for $f_{\max}$ a max-tradeoff function for  $\{\mathcal{N}_i\}$ as in Definition~\ref{def:min_tradeoff_func} and any $t\in \mathbb{R}$ such that $f_{\max}\left(\mathrm{freq}_\mathbf{c} \right) \leq t$ for any $\mathbf{c}\in\mathcal{C}^n$ for which $\Pr\left[\mathbf{c}\right]_{\rho_{|\Omega}}> 0$,
		\[
			H_{\max}^{\varepsilon_{\text{s}}} \left( \mathbf{AB}|\mathbf{I}E \right)_{\rho_{|\Omega}} < n t + v\sqrt{n} \;,
		\]
		where $v = 2\left(\log(1+2 d_{A_iB_i} ) + \lceil \|  \triangledown f_{\max} \|_\infty \rceil \right)\sqrt{1-2\log (\varepsilon_{\text{s}} \cdot p_\Omega)}$.
	\end{thm}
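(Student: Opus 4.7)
The plan is to prove the bound via the sandwiched Rényi entropies $H_\alpha$ for some $\alpha \in (1,2)$ very close to $1$, which will interpolate between the smooth min-entropy (via standard conversion lemmas) and the von Neumann entropy (which is what the tradeoff function controls). Throughout, I work with the unconditioned state $\rho_{\mathbf{ABIC}E}$ and treat the conditioning on $\Omega$ only at the very end.

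First, I would invoke the standard lemma $H_{\min}^{\varepsilon_s}(\mathbf{AB}|\mathbf{I}E)_{\rho_{|\Omega}} \geq H_\alpha(\mathbf{AB}|\mathbf{I}E)_{\rho_{|\Omega}} - \frac{g_1(\varepsilon_s)}{\alpha - 1}$, and then relate the Rényi entropy of the conditioned state to that of the unconditioned one, paying a cost of order $\log(1/p_\Omega)/(\alpha-1)$. So it suffices to obtain a lower bound on $H_\alpha(\mathbf{AB}|\mathbf{I}E)_\rho$ that grows linearly in $n$.

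Second, I would establish the technical heart of the argument: a single-round chain rule for the sandwiched Rényi entropy that exploits the Markov condition. For each channel $\mathcal{N}_i$ and any state $\sigma$ on $R_{i-1}E$, the assumption $A_{<i}B_{<i} \leftrightarrow I_{<i}E \leftrightarrow I_i$ should yield
\[ H_\alpha\bigl(A_{\leq i}B_{\leq i} \,\big|\, I_{\leq i} E\bigr)_{\mathcal{N}_i(\sigma)} \geq H_\alpha\bigl(A_{<i}B_{<i} \,\big|\, I_{<i} E\bigr)_\sigma + \inf_{\omega} H_\alpha(A_i B_i | I_i R')_{\mathcal{N}_i(\omega)} , \]
which, iterated over $i$, reduces the global Rényi entropy to a sum of per-round infima. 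The proof of this inequality is where I expect to spend most of the effort: it requires an operator inequality built from Uhlmann dilations and data-processing of the sandwiched Rényi divergence, using the Markov condition to ensure that the newly revealed classical register $I_i$ decouples from the past conditioned on $I_{<i}E$. Without the Markov condition the chain rule fails in general, so the structural assumption in Definition~\ref{def:eat_channels} is essential precisely here.

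Third, I would convert each per-round Rényi term into the tradeoff function. A Taylor expansion of $H_\alpha$ around $\alpha = 1$ combined with the bound $|H_\alpha - H| = O((\alpha-1)\log^2 d_{A_iB_i})$ gives $H_\alpha(A_iB_i|I_i R')_{\mathcal{N}_i(\omega)} \geq H(A_iB_i|I_i R')_{\mathcal{N}_i(\omega)} - (\alpha-1)K$ with $K$ depending only on $\log(1+2d_{A_iB_i})$. Since directly plugging the realized frequency into the infimum is not allowed, I would instead pass to an affine minorant of $f_{\min}$ using the gradient bound $\|\nabla f_{\min}\|_\infty$; by linearity, the $n$-round sum of such affine lower bounds evaluated on the realized $C_i$ collapses into $n \cdot f_{\min}(\mathrm{freq}_\mathbf{c})$, which after conditioning on $\Omega$ is at least $nt$.

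Fourth, I would balance the two error terms by choosing $\alpha = 1 + c/\sqrt{n}$ for a suitable constant $c$, so that the $(\alpha-1)nK$ linearization loss and the $1/(\alpha-1)$ smoothing loss both contribute at the scale $\sqrt{n}$. Collecting constants produces exactly the stated coefficient $v$, with the $\sqrt{1-2\log(\varepsilon_s p_\Omega)}$ factor coming from the combined smoothing and conditioning penalties. The max-entropy statement is obtained by duality: if $\rho_{ABE}$ is purified by a register $E'$, then $H_{\max}^{\varepsilon_s}(\mathbf{AB}|\mathbf{I}E) = -H_{\min}^{\varepsilon_s}(\mathbf{AB}|\mathbf{I}E')$, and the same argument with $\alpha \in (\tfrac{1}{2},1)$ and $f_{\max}$ as an affine majorant gives the matching upper bound. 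The main obstacle, as noted, is step two — the other steps are essentially algebraic once the Rényi chain rule under the Markov condition is in hand.
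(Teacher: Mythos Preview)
The paper does not prove Theorem~\ref{thm:eat}: it is quoted from~\cite{dupuis2016entropy} (their Theorem~4.4) and used as a black-box tool throughout. There is therefore no ``paper's own proof'' to compare against.

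That said, your sketch is a faithful outline of the argument in~\cite{dupuis2016entropy}: pass to sandwiched R\'enyi entropies $H_\alpha$, establish a one-step chain rule that hinges on the Markov condition, replace the tradeoff function by an affine minorant so that the per-round bounds telescope into $n\cdot f_{\min}(\mathrm{freq}_{\mathbf{c}})$, and finally optimise $\alpha-1\sim 1/\sqrt{n}$ to balance the linearisation loss against the smoothing/conditioning penalty. Two small caveats on accuracy. First, in the actual proof the conditioning on $\Omega$ and the affine replacement are handled together rather than as a separate additive $\log(1/p_\Omega)/(\alpha-1)$ penalty on the unconditioned R\'enyi entropy: one augments the channel with a ``score'' register derived from the affine function and bounds the R\'enyi entropy of the joint object, so that after conditioning on $\Omega$ the affine part is deterministically $\geq t$. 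Your phrasing suggests these are independent steps, which would not by itself give the clean $nt$ term. Second, your chain-rule inequality as written (with a bare $\inf_\omega H_\alpha$ on the right) is not what is proved; the single-step bound in~\cite{dupuis2016entropy} already incorporates the affine tradeoff, and the pure R\'enyi chain rule you state does not hold without that modification. These are refinements rather than a different strategy---your identification of step two as the crux is correct.
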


	To gain a bit of intuition on how Theorem~\ref{thm:eat} is going to be used note the following. The event $\Omega$ will usually be the event of the considered protocol not aborting (or a closely related event). The relevant state for which the smooth min- or max-entropy is going to be evaluated is $\rho_{|\Omega}$. To use the theorem, it should be possible to define \emph{some} EAT channels $\{\mathcal{N}_i\}$ that produce the final state $\rho$ from the initial state $\rho_{R_0}$ by applying the channels sequentially; these channels are not necessarily the channels used in the actual protocol to produce $\rho$. The tradeoff functions can be seen as a bound on the entropy accumulated in \emph{one} round $i$, and, if such a bound $t$ exists, then Theorem~\ref{thm:eat} asserts that the total amount of entropy, accumulated in all rounds $i=1$ to $n$ together, is roughly $n$ times $t$. It is in this sense that the theorem essentially allows us to perform a reduction to the  i.i.d.\@ setting.

\section{Device-independent entropy accumulation protocol}\label{sec:ea_general}

The main task in proving security of DIQKD and other protocols is to prove a bound on the (smooth) min-entropy of the raw data held by Alice and Bob, conditioned on all the information available to the adversary Eve. The goal of this section is to show how the EAT (Theorem~\ref{thm:eat}) can be used in a general DI setting to achieve such a bound.  

For this we consider the entropy accumulation protocol, described as Protocol~\ref{pro:randomness_generation} below. Although we call it a ``protocol'',  one should see it more as a mathematical tool which allows us to use the EAT rather than an actual protocol to be implemented.\footnote{In particular, in a setting with two distinct parties, Alice and Bob, communication is required to actually implement it. We ignore this here as it is not relevant for the analysis.} 
To be more specific, the EAT channels (as in Definition~\ref{def:eat_channels}) will be defined via the steps made in the entropy accumulation protocol. 
The relevance of the protocol stems from the fact that the final state at the end of the protocol, on which a smooth min-entropy bound can be proven using the EAT, is the same state as (or can easily be related to) the final state in the actual protocol to be executed (depending on the specific application).    

\subsection{The protocol}

Protocol~\ref{pro:randomness_generation} is used to generate raw data for Alice and Bob by using an untrusted device $D$. 
It is based on an arbitrary non-local game $G$ as defined in Section~\ref{sec:non-local_games}, together with a definition of test and generation inputs for Alice and Bob. The test inputs, $\mathcal{X}_t \subset \mathcal{X}$ and $\mathcal{Y}_t \subset \mathcal{Y}$, are used by the parties during the test rounds ($T_i=1$ below) from which the Bell violation is estimated, while the generation inputs, $\mathcal{X}_g \subset \mathcal{X}$ and $\mathcal{Y}_g \subset \mathcal{Y}$, are used in the other rounds (the sets are not necessarily disjoint). We also assume that $\mathcal{X}_g\subset \mathcal{X}_t$, as it is important that, given a value in $\mathcal{X}_g$, the device is not able to infer the value of $T_i$. Ideally, one should use a game $G$ for which Alice and Bob's outputs are perfectly correlated (or anti-correlated) with sufficiently high probability when the parties use the generation inputs.\footnote{In a DIQKD protocol (or other tasks with two separated honest parties) this requirement is used to ensure a good key rate, as the output bits in the generation rounds will be the main contributors to the final key. For tasks such as randomness expansion, where there is only one honest party, it is not necessary to generate matching outputs.}

\begin{algorithm}
\caption{Entropy accumulation protocol}
\label{pro:randomness_generation}
\begin{algorithmic}[1]
	\STATEx \textbf{Arguments:} 
		\STATEx\hspace{\algorithmicindent} $G$ -- two-player non-local game
		\STATEx\hspace{\algorithmicindent} $\mathcal{X}_g\subset \mathcal{X}_t \subset \mathcal{X}$ -- generation and test inputs for Alice
		\STATEx\hspace{\algorithmicindent} $\mathcal{Y}_g,\mathcal{Y}_t \subset \mathcal{Y}$ -- generation and test inputs for Bob
		\STATEx\hspace{\algorithmicindent} $D$ -- untrusted device of (at least) two components that can play $G$ repeatedly
		\STATEx\hspace{\algorithmicindent} $n \in \mathbb{N}_+$ -- number of rounds
		\STATEx\hspace{\algorithmicindent} $\gamma \in (0,1]$ -- expected fraction of test rounds 
		\STATEx\hspace{\algorithmicindent} $\omega_{\mathrm{exp}}$ -- expected winning probability in $G$ for an honest (perhaps noisy) implementation    
		\STATEx\hspace{\algorithmicindent} $\delta_{\mathrm{est}} \in (0,1)$ -- width of the statistical confidence interval for the estimation test
		
	\STATEx
	
	\STATE For every round $i\in[n]$ do Steps~\ref{prostep:choose_Ti}-\ref{prostep:calculate_Ci_EA}:
		\STATE\hspace{\algorithmicindent}Alice chooses $T_i\in\{0,1\}$ at random such that $\Pr(T_i=1)=\gamma$, and sends her choice of $T_i$ to Bob over a public authenticated classical channel. 
		\STATE\hspace{\algorithmicindent}If $T_i=0$ Alice and Bob choose inputs $X_i\in\mathcal{X}_g$ and $Y_i\in \mathcal{Y}_g$ respectively. If $T_i=1$ they choose inputs $X_i\in\mathcal{X}_t$ and $Y_i\in \mathcal{Y}_t$. 
		\STATE\hspace{\algorithmicindent}Alice and Bob use $D$ with $X_i,Y_i$ and record their outputs as $A_i$ and $B_i$ respectively. \label{prostep:measurement}
		\STATE\hspace{\algorithmicindent}(Optional symmetrisation step:) Alice and Bob choose together a (random) value $F_i$, and respectively update their outputs $A_i,B_i$ depending on $F_i$. \label{prostep:symmetry} 
		\STATE\hspace{\algorithmicindent}If $T_i=0$ then Bob updates $B_i$ to $B_i = \perp$, and they set $C_i=\perp$. If $T_i=1$ they set $C_i =w\left(A_i,B_i,X_i,Y_i\right)$.\label{prostep:calculate_Ci_EA}
	\STATE Alice and Bob abort if $\sum_i C_i < \left(\omega_{\mathrm{exp}}\gamma - \delta_{\mathrm{est}}\right) \cdot n\;$. \label{prostep:abort_general_EA}
\end{algorithmic}
\end{algorithm}

We now define the EAT channels using the rounds of the protocol (where one round includes Steps~\ref{prostep:choose_Ti}-\ref{prostep:calculate_Ci_EA} in Protocol~\ref{pro:randomness_generation}).  For this, the following notation is used. For every $i\in\{0\}\cup[n]$, the (unknown) quantum state of the device $D$ shared by Alice and Bob after round $i$ of the protocol is denoted by $\rho_{Q_AQ_B}^i$. We denote the register holding this state by $R_i$. In particular, $R_0\equiv Q_AQ_B$ at the start of the protocol. At Step~\ref{prostep:measurement} in Protocol~\ref{pro:randomness_generation}, the quantum state of the devices is changed from $\rho_{Q_AQ_B}^{i-1}$ in $R_{i-1}$ to $\rho_{Q_AQ_B}^{i}$ in $R_i$ by the use of the device.\footnote{To be a bit more precise, the quantum state is changed in two steps. First, the relevant measurement of Step~\ref{prostep:measurement} is done (where it is assumed that the measurements of the different components are in tensor product). Then, after $A_i$ and $B_i$ are recorded, the different components of the device are allowed to communicate. Thus, some further changes can be made to the post-measurement state even based on the memory of all components together.}
Our EAT channels are then $\mathcal{N}_i:R_{i-1}\rightarrow R_i A_i B_i X_i Y_i T_i C_i$ defined by the CPTP map describing the $i$-th round of Protocol~\ref{pro:randomness_generation}, as implemented by the untrusted device $D$ (see Figure~\ref{fig:maps}). We prove in Lemma~\ref{lem:main_soundness} below that they indeed satisfy the conditions given in Definition~\ref{def:eat_channels}. 

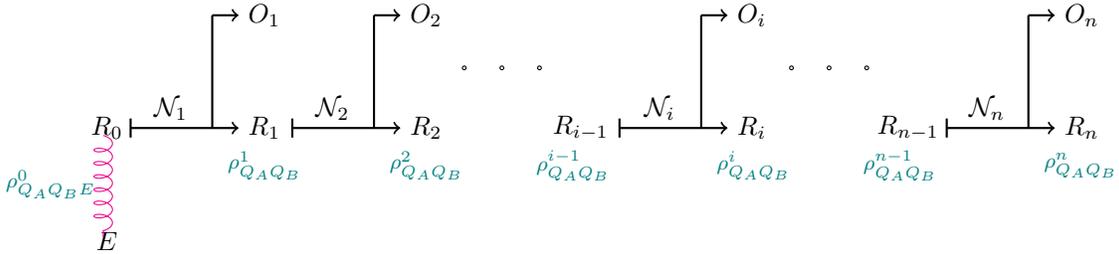
\begin{figure}
	\centering
	\begin{tikzpicture}
		
		\draw (0,1.5) node {$R_0$};
		\draw (0,0) node {$E$};
		\draw[decorate, magenta, decoration={coil,amplitude=3.5pt, segment length=5pt}] (-0.05,1.4) -- (-0.05,0.05);
		\draw[teal] (-0.75,0.75) node[font=\footnotesize] {$\rho^0_{Q_AQ_BE}$};
		
		\begin{scope}[shift={(2,0)}]
			\draw[thick,|->] (-1.7,1.5) -- (-0.25,1.5) node[anchor=west] {$R_1$};
			\draw[thick] (-0.6,1.5) -- (-0.6,3);
			\draw[thick, ->] (-0.6,3) -- (-0.25,3) node[anchor=west] {$O_1$};
			\draw (-1.15,1.75) node {$\mathcal{N}_1$};
			
			\draw[teal] (0.1,1) node[font=\footnotesize] {$\rho^1_{Q_AQ_B}$};
		\end{scope}
		
		\begin{scope}[shift={(4.15,0)}]
			\draw[thick,|->] (-1.7,1.5) -- (-0.25,1.5) node[anchor=west] {$R_2$};
			\draw[thick] (-0.6,1.5) -- (-0.6,3);
			\draw[thick, ->] (-0.6,3) -- (-0.25,3) node[anchor=west] {$O_2$};
			\draw (-1.15,1.75) node {$\mathcal{N}_2$};
			
			\draw[teal] (0.1,1) node[font=\footnotesize] {$\rho^2_{Q_AQ_B}$};
		\end{scope}
		
		\begin{scope}[shift={(4.75,0)}]
			\draw (0,2.3) circle (0.25mm);
			\draw (0.5,2.3) circle (0.25mm);
			\draw (1,2.3) circle (0.25mm);
		\end{scope}
		
		\begin{scope}[shift={(8.5,0)}]
			\draw (-2.2,1.5) node {$R_{i-1}$};
			\draw[thick,|->] (-1.7,1.5) -- (-0.25,1.5) node[anchor=west] {$R_i$};
			\draw[thick] (-0.6,1.5) -- (-0.6,3);
			\draw[thick, ->] (-0.6,3) -- (-0.25,3) node[anchor=west] {$O_i$};
			\draw (-1.15,1.75) node {$\mathcal{N}_i$};
			
			\draw[teal] (-2.3,1) node[font=\footnotesize] {$\rho^{i-1}_{Q_AQ_B}$};
			\draw[teal] (0.1,1) node[font=\footnotesize] {$\rho^i_{Q_AQ_B}$};
		\end{scope}
		
		\begin{scope}[shift={(9.1,0)}]
			\draw (0,2.3) circle (0.25mm);
			\draw (0.5,2.3) circle (0.25mm);
			\draw (1,2.3) circle (0.25mm);
		\end{scope}
		
		\begin{scope}[shift={(12.85,0)}]
			\draw (-2.2,1.5) node {$R_{n-1}$};
			\draw[thick,|->] (-1.7,1.5) -- (-0.25,1.5) node[anchor=west] {$R_n$};
			\draw[thick] (-0.6,1.5) -- (-0.6,3);
			\draw[thick, ->] (-0.6,3) -- (-0.25,3) node[anchor=west] {$O_n$};
			\draw (-1.15,1.75) node {$\mathcal{N}_n$};
			
			\draw[teal] (-2.3,1) node[font=\footnotesize] {$\rho^{n-1}_{Q_AQ_B}$};
			\draw[teal] (0.1,1) node[font=\footnotesize] {$\rho^n_{Q_AQ_B}$};.
		\end{scope}

	\end{tikzpicture}
\caption{The EAT channels $\mathcal{N}_i:R_{i-1}\rightarrow R_i A_i B_i X_i Y_i T_i C_i$. In the figure, $O_i = A_i B_i X_i Y_i T_i C_i$. The initial quantum state shared by Alice, Bob, and Eve is $\rho^0_{Q_AQ_BE}$ and the sequence of maps $\mathcal{N}_i$ creates the state $\rho^n_{Q_AQ_BE\mathbf{O}}$.} \label{fig:maps}
\end{figure}

In the following we are interested in the state of Alice, Bob, and Eve after the $n$-th round of the protocol, both before and after Alice and Bob decide whether to abort or not in Step~\ref{prostep:abort_general_EA}. The state \emph{before} Step~\ref{prostep:abort_general_EA} is denoted by
\begin{equation}\label{eq:final_state_before_abort}
	\rho_{\mathbf{ABXYTC}E} = \left( \Tr_{R_n}\circ \mathcal{N}_n \circ \dots \circ \mathcal{N}_1\right) \otimes \mathbb{I}_E \; \rho_{Q_AQ_BE}^0 \;.
\end{equation}

In Step~\ref{prostep:abort_general_EA} Alice and Bob decide whether they should abort the protocol or not according to the estimated Bell violation in the test rounds. Let $\Omega$ denote the event that they do not abort\footnote{Note that $C_j \in\{0,1,\perp\}$; the quantity $\sum_jC_j$ should be understood as $\sum_{j|C_j=1}1$.}, i.e.,
\begin{equation}\label{eq:good_event_def}
	\Omega =  \Big\{\sum_j C_j \geq \left(\omega_{\mathrm{exp}}\gamma - \delta_{\mathrm{est}}\right) \cdot n\;\Big\}\;.
\end{equation}

The final state, \emph{conditioned on not aborting}, is denoted by $\rho_{\mathbf{ABXYTC}E|\Omega}$ or just $\rho_{|\Omega}$ to ease notation.  Below we bound the entropy which is accumulated in this state during the rounds of the protocol.  

\subsection{Completeness}
\label{sec:EA_completeness}

Suppose that Alice and Bob execute Protocol~\ref{pro:randomness_generation} with a device $D$ which performs i.i.d.\@ measurements on a tensor product state $\rho_{Q_AQ_B}^{\otimes n}$ such that the winning probability achieved in game $G$ by the device $D$ executed on a single state $\rho_{Q_AQ_B}$ is $\omega_{\mathrm{exp}}$. We call any such implementation an \emph{honest implementation}. The following lemma bounds the probability of Protocol~\ref{pro:randomness_generation} aborting in an honest implementation.

\begin{lemma}\label{lem:ea_completeness}
	Protocol~\ref{pro:randomness_generation} is complete with completeness error $\varepsilon^c_{EA} \leq \exp(- 2 n \delta_{\mathrm{est}}^2 ) $.
	That is, the probability that the protocol aborts for an honest implementation of the devices $D$ is at most $\varepsilon^c_{EA}$.
\end{lemma}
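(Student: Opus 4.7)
The plan is to reduce the completeness claim to a standard Hoeffding-type tail bound applied to the test statistic $\sum_j C_j$, exploiting the i.i.d.\@ structure guaranteed by the definition of an honest implementation.

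First I would fix notation and unfold what an honest implementation does in a single round of Protocol~\ref{pro:randomness_generation}. Because the device shares $\rho_{Q_A Q_B}^{\otimes n}$ and applies the same measurement in each round, and because the trusted randomness $T_i, X_i, Y_i, F_i$ used in round~$i$ is freshly sampled and independent of all previous rounds, the tuples $(T_i, X_i, Y_i, A_i, B_i)$ are i.i.d.\@ across $i \in [n]$. Consequently the scores $C_i \in \{0, 1, \perp\}$ produced in Step~\ref{prostep:calculate_Ci_EA} are i.i.d.\@ as well.

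Next I would compute the distribution of a single $C_i$. With probability $1-\gamma$ we have $T_i = 0$ and $C_i = \perp$, which contributes $0$ to $\sum_j C_j$. With probability $\gamma$ we have $T_i = 1$, and conditioned on this the indicator $C_i = w(A_i, B_i, X_i, Y_i)$ is a Bernoulli random variable with mean $\omega_{\mathrm{exp}}$, by definition of the expected winning probability of the honest strategy. Hence, letting $\tilde C_i := \mathbb{1}[C_i = 1]$, we get i.i.d.\@ Bernoulli random variables with $\mathbb{E}[\tilde C_i] = \gamma \omega_{\mathrm{exp}}$, and $\sum_j C_j = \sum_j \tilde C_j$.

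Finally I would apply Hoeffding's inequality to the bounded i.i.d.\@ variables $\tilde C_j \in [0,1]$:
\[
	\Pr\!\left[ \sum_{j=1}^n \tilde C_j < (\gamma \omega_{\mathrm{exp}} - \delta_{\mathrm{est}})\, n \right] \;\leq\; \exp\!\bigl( -2 n \delta_{\mathrm{est}}^2 \bigr),
\]
which is exactly the probability that the protocol aborts in Step~\ref{prostep:abort_general_EA} under the honest implementation, yielding $\varepsilon^c_{EA} \leq \exp(-2 n \delta_{\mathrm{est}}^2)$.

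There is no real obstacle here; the only point worth being careful about is justifying the i.i.d.\@ structure of the $C_i$'s in spite of the optional symmetrisation step and the inter-round communication allowed in Section~\ref{sec:untrusted_devices}. Both are innocuous in the honest case: the symmetrisation uses fresh local randomness $F_i$ independent across rounds, and since the honest device is stateless and acts on an independent fresh copy of $\rho_{Q_AQ_B}$ each round, any classical messages it could exchange between rounds are irrelevant to the marginal distribution of $(T_i, X_i, Y_i, A_i, B_i)$. Once this is noted, the Hoeffding bound applies directly.
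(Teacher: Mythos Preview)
Your proposal is correct and follows essentially the same approach as the paper: observe that in the honest implementation the $C_i$ are i.i.d.\ with mean $\omega_{\mathrm{exp}}\gamma$, then apply Hoeffding's inequality to bound the abort probability by $\exp(-2n\delta_{\mathrm{est}}^2)$. Your write-up is more explicit about why the i.i.d.\ structure survives the symmetrisation step and the allowed inter-round communication, but the argument is the same.
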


\begin{proof}
	Alice and Bob abort in Step~\ref{prostep:abort_general_EA} when the sum of the $C_i$ is not sufficiently high (this happens when the estimated Bell violation is too low or when not enough test rounds were chosen). In the honest implementation $C_i$ are i.i.d.\@ RVs with $\mathbb{E}\left[ C_i\right] = \omega_{\mathrm{exp}}\gamma$.  Therefore, we can use Hoeffding's inequality:
	\begin{equation}\label{eq:completeness_error_EA}
		\varepsilon^c_{EA} =  \Pr \left[ \sum_j C_j \geq \left(\omega_{\mathrm{exp}}\gamma - \delta_{\mathrm{est}}\right) \cdot n\ \right] \leq \exp(- 2 n \delta_{\mathrm{est}}^2 )  \;. \qedhere
	\end{equation}
\end{proof}

\subsection{Soundness}
\label{sec:EA_soundness}

The EAT, Theorem~\ref{thm:eat}, almost immediately provides a general lower bound on the amount of entropy generated by Protocol~\ref{pro:randomness_generation}. We state the result as Lemma~\ref{lem:main_soundness} below; in Section~\ref{sec:entropy-chsh} we will obtain a more refined bound based on an instantiation of the protocol with the game $G$ taken to be the CHSH game. 

\begin{lemma}\label{lem:main_soundness}
	Let $D$ be any device, and for $i\in [n]$ let $I_i = X_iY_iT_iF_i$ and $\mathcal{N}_i:R_{i-1}\rightarrow R_i A_i B_i I_i C_i$ the CPTP map implemented by the $i$-th round of Protocol~\ref{pro:randomness_generation}. Let $\rho$ be the state generated by the protocol (as defined in Equation~\eqref{eq:final_state_before_abort}), $\Omega$ the event that the protocol does not abort (as defined in Equation~\eqref{eq:good_event_def}), and $\rho_{|\Omega}$ the state conditioned on $\Omega$. Let $f_{min}$ be a real-valued differentiable function defined on the set of probability distributions $p$ over the alphabet $\{\perp,0,1\}$ of $C_i$ such that
	\begin{equation}\label{eq:eat_f_min_bound}
		\forall i\in[n]\qquad	f_{\min}(p) \leq \inf_{\sigma_{R_{i-1}R'}:\mathcal{N}_i(\sigma)_{C_i}=p} H\left( A_i B_i | X_i Y_i T_i F_i R' \right)_{\mathcal{N}_i(\sigma)} \;,
	\end{equation}
	where the infimum over an empty set is defined as infinity. 	
	Then, for any $\varepsilon_{\mathrm{EA}},\varepsilon_{\text{s}}\in (0,1)$, either the protocol aborts with probability $1-\Pr(\Omega)\geq 1-\varepsilon_{\mathrm{EA}}$ or,
	\begin{equation}\label{eq:main_lb}
		 H^{\varepsilon_{\text{s}}}_{\min} \left( \mathbf{A B} | \mathbf{X Y T F} E \right)_{\rho_{|\Omega}} > nt- v\sqrt{n},
	\end{equation}
	  	where 
	\begin{align*}
		t &= \min_{p:\, p(1) \geq\omega_{\mathrm{exp}}\gamma-\delta_{\mathrm{est}}} \, f_{\min}(p) \;,\\
		v &= 2\left(\log(1+2 d_{A_iB_i} ) + \lceil \|  \triangledown f_{\min} \|_\infty \rceil \right)\sqrt{1-2\log (\varepsilon_{\text{s}} \cdot\varepsilon_{\mathrm{EA}})}\;
	\end{align*}
	and $d_{A_iB_i}$ denotes the dimension of $A_iB_i$.
\end{lemma}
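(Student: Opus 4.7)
The strategy is to apply the EAT (Theorem~\ref{thm:eat}) directly to the channels $\mathcal{N}_i$ induced by one round of Protocol~\ref{pro:randomness_generation}, after verifying the three conditions of Definition~\ref{def:eat_channels}. First, all of $A_i,B_i,X_i,Y_i,T_i,F_i,C_i$ are classical finite-dimensional random variables, so condition~1 is immediate. For condition~2, the score $C_i$ is a deterministic function of $(A_i,B_i,X_i,Y_i,T_i)$: it equals $w(A_i,B_i,X_i,Y_i)$ on a test round and $\perp$ otherwise, so it can be recomputed from the classical marginal on $A_iB_iI_i$ without disturbing the quantum register $R_i$. For condition~3, the Markov chain $A_{1\dots i-1}B_{1\dots i-1}\leftrightarrow I_{1\dots i-1}E\leftrightarrow I_i$ with $I_i=X_iY_iT_iF_i$ holds because at round $i$ Alice and Bob sample $T_i$, then $X_i,Y_i$, and then the symmetrisation coin $F_i$ using their trusted random number generators, independently of all prior outputs of the device and of Eve's register $E$. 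In fact $I_i$ is outright independent of $(A_{1\dots i-1}B_{1\dots i-1},I_{1\dots i-1},E)$, which implies the required conditional independence.

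Next I translate the non-abort event $\Omega$ defined in~\eqref{eq:good_event_def} into a constraint on empirical frequencies. For any $\mathbf{c}\in\{\perp,0,1\}^n$ with $\Pr[\mathbf{c}]_{\rho_{|\Omega}}>0$, the membership in $\Omega$ forces $\sum_j\mathbf{1}_{c_j=1}\geq(\omega_{\mathrm{exp}}\gamma-\delta_{\mathrm{est}})n$, i.e.\ the distribution $\mathrm{freq}_\mathbf{c}$ on $\{\perp,0,1\}$ satisfies $\mathrm{freq}_\mathbf{c}(1)\geq\omega_{\mathrm{exp}}\gamma-\delta_{\mathrm{est}}$. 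Combined with hypothesis~\eqref{eq:eat_f_min_bound} on $f_{\min}$, this shows that the value
\[
    t \;=\; \min_{p:\,p(1)\geq\omega_{\mathrm{exp}}\gamma-\delta_{\mathrm{est}}} f_{\min}(p)
\]
is a lower bound on $f_{\min}(\mathrm{freq}_\mathbf{c})$ for every $\mathbf{c}$ of positive weight under $\rho_{|\Omega}$, which is exactly the threshold required by Theorem~\ref{thm:eat}.

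Finally I case-split on $p_\Omega=\Pr(\Omega)$. If $p_\Omega<\varepsilon_{\mathrm{EA}}$, then the first alternative of the lemma holds and we are done. Otherwise $p_\Omega\geq\varepsilon_{\mathrm{EA}}$, and a direct application of Theorem~\ref{thm:eat} to the channels $\{\mathcal{N}_i\}$, the event $\Omega$, the min-tradeoff function $f_{\min}$, and the threshold $t$ yields
\[
    H^{\varepsilon_{\text{s}}}_{\min}(\mathbf{AB}\mid\mathbf{XYTF}E)_{\rho_{|\Omega}} \;>\; nt - v'\sqrt{n},
\]
with $v'=2\bigl(\log(1+2d_{A_iB_i})+\lceil\|\triangledown f_{\min}\|_\infty\rceil\bigr)\sqrt{1-2\log(\varepsilon_{\text{s}}\cdot p_\Omega)}$. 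Since $\sqrt{1-2\log(\varepsilon_{\text{s}}\cdot p_\Omega)}$ is monotonically decreasing in $p_\Omega$, using $p_\Omega\geq\varepsilon_{\mathrm{EA}}$ gives $v'\leq v$ as stated in the lemma, and~\eqref{eq:main_lb} follows.

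The main place to be careful is the Markov condition in Step~3, because the protocol explicitly allows the components of $D$ to communicate between rounds and Eve to send classical side information into $D$ between rounds. The key observation that makes the argument go through is that the Markov requirement only concerns the fresh randomness $I_i$ that is generated \emph{after} round $i-1$ has ended: this randomness comes from Alice's and Bob's trusted RNGs and is therefore independent of the joint state of the device and Eve at that point, regardless of any prior inter-round communication. The rest of the proof is a routine bookkeeping translation between the abort condition and the min-tradeoff function, together with the case split on $p_\Omega$ that is standard for turning EAT bounds into protocol statements.
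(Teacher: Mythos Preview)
Your proof is correct and follows the same approach as the paper: verify the EAT-channel conditions of Definition~\ref{def:eat_channels} (in particular the Markov chain via independence of the freshly sampled $I_i$), translate the non-abort event $\Omega$ into the frequency constraint $\mathrm{freq}_\mathbf{c}(1)\geq\omega_{\mathrm{exp}}\gamma-\delta_{\mathrm{est}}$, and invoke Theorem~\ref{thm:eat}. Your write-up is in fact more explicit than the paper's, which leaves the case split on $p_\Omega$ versus $\varepsilon_{\mathrm{EA}}$ and the monotonicity of $v$ in $p_\Omega$ implicit.
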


\begin{proof}
In order to apply the EAT we first verify that  the conditions stated in Definition~\ref{def:eat_channels} are fulfilled.  Using that $C_i$ is a function of $A_i,B_i,X_i,$ and $Y_i$ the first two conditions in Definition~\ref{def:eat_channels} clearly hold. Moreover, the  Markov chain condition
	\[
		\forall i\in[n], \quad A_{1\dotsc i} B_{1\dotsc i} \leftrightarrow X_{1\dotsc i}Y_{1\dotsc i}T_{1\dotsc i}F_{1\dotsc i}E \leftrightarrow X_{i+1}Y_{i+1}T_{i+1}F_{i+1} \
	\]
	holds as well since the values of $X_{i+1}$, $Y_{i+1}$, $T_{i+1}$, and $F_{i+1}$ are chosen independently of everything else at each round. To conclude, note the event $\Omega$ of the protocol not aborting implies that the fraction of successful game rounds $\mathrm{freq}_\mathbf{c}(1)$ is at least $\omega_{\mathrm{exp}}\gamma- \delta_{\mathrm{est}}$ for any $\mathbf{c}$ for which $\Pr\left[\mathbf{c}\right]_{\rho_{|\Omega}}> 0$. \qedhere
\end{proof}

The main work remaining for a successful use of Protocol~\ref{pro:randomness_generation} for entropy generation consists in obtaining a good lower bound in Equation~\eqref{eq:main_lb}, i.e., devising an appropriate min-tradeoff function $f_{\min}$ satisfying Equation~\eqref{eq:eat_f_min_bound}. In order to understand the task to be accomplished note that $\mathcal{N}_i$ defines $X_i, Y_i, T_i$, and $F_i$, so although the infimum in Equation~\eqref{eq:eat_f_min_bound} is taken over all states $\sigma$ the distributions of $X_i, Y_i,T_i$, and $F_i$ are fixed. Moreover, the infimum is only taken over states with $\mathcal{N}_i(\sigma)_{C_i}=p$, a condition which fixes the Bell violation achieved by $\sigma$ under the bipartite measurement performed by the device.
This is precisely the sense in which the EAT can be understood as providing a reduction to the i.i.d.\@ case. 

Lower bounds of the form of Equation~\eqref{eq:eat_f_min_bound} of different quality can be obtained depending on the specific Bell inequality employed in the protocol. A general method consists in using the chain rule to write  
\begin{align}
	H\left( A_i B_i | X_i Y_i T_i F_iR' \right)_{\mathcal{N}_i(\sigma)} &= H\left( A_i | X_i Y_i T_i F_iR' \right)_{\mathcal{N}_i(\sigma)} + H\left( B_i | X_i Y_i T_i F_iR' A_i \right)_{\mathcal{N}_i(\sigma)} \label{eq:one_round_chain_rule} \\
	&\geq H_{\min}\left( A_i | X_i Y_i T_i F_iR' \right)_{\mathcal{N}_i(\sigma)} \nonumber 
\end{align}
Note that here the random variable $F_i$ depends on the (optional) symmetrisation step, and was introduced precisely to enable an easier lower bound on the quantities above; we will show how it can be used in the specific case of the CHSH game in the next section.

\begin{figure}
\centering
\begin{tikzpicture}
	\begin{axis}[
		height=7cm,
		width=11cm,
		xlabel=$\omega$,
		xmin=0.75,
		xmax=0.853553,
		ymax=1,
		ymin=0,
	     xtick={0.76,0.78,0.80,0.82,0.84},
          ytick={0,0.2,0.4,0.6,0.8,1},
		legend style={at={(0.12,0.95)},anchor=north,legend cell align=left,font=\footnotesize} 
	]
	

	\addplot[blue,thick,smooth] coordinates {
	(0.75, 0.) (0.752071, 0.0120347) (0.754142, 0.0242374) (0.756213, 0.0366113) (0.758284, 0.0491598) (0.760355, 0.0618861) (0.762426, 0.074794) (0.764497, 0.0878872) (0.766569, 0.10117) (0.76864, 0.114646) (0.770711, 0.128319) (0.772782, 0.142196) (0.774853, 0.156279) (0.776924, 0.170575) (0.778995, 0.185089) (0.781066, 0.199826) (0.783137, 0.214793) (0.785208, 0.229996) (0.787279, 0.245441) (0.78935, 0.261137) (0.791421, 0.277091) (0.793492, 0.293311) (0.795563, 0.309806) (0.797635, 0.326586) (0.799706, 0.343661) (0.801777, 0.361042) (0.803848, 0.378741) (0.805919, 0.396771) (0.80799, 0.415147) (0.810061, 0.433884) (0.812132, 0.452998) (0.814203, 0.47251) (0.816274, 0.49244) (0.818345, 0.51281) (0.820416, 0.533648) (0.822487, 0.554982) (0.824558, 0.576846) (0.82663, 0.599279) (0.828701, 0.622324) (0.830772, 0.646033) (0.832843, 0.670469) (0.834914, 0.695705) (0.836985, 0.721832) (0.839056, 0.748965) (0.841127, 0.777251) (0.843198, 0.806888) (0.845269, 0.838156) (0.84734, 0.871481) (0.849411, 0.907587) (0.851482, 0.948007) (0.853553, 1.)
	};
	\addlegendentry{$H$}
	
	\addplot[red,thick,smooth,dotted] coordinates {
	(0.75, 0.) (0.752071, 0.00603838) (0.754142, 0.0122045) (0.756213, 0.0185027) (0.758284, 0.0249374) (0.760355, 0.0315132) (0.762426, 0.0382352) (0.764497, 0.0451088) (0.766569, 0.0521395) (0.76864, 0.0593332) (0.770711, 0.0666965) (0.772782, 0.0742361) (0.774853, 0.0819592) (0.776924, 0.0898735) (0.778995, 0.0979875) (0.781066, 0.10631) (0.783137, 0.114851) (0.785208, 0.12362) (0.787279, 0.132628) (0.78935, 0.141889) (0.791421, 0.151414) (0.793492, 0.161218) (0.795563, 0.171316) (0.797635, 0.181726) (0.799706, 0.192467) (0.801777, 0.203558) (0.803848, 0.215022) (0.805919, 0.226885) (0.80799, 0.239174) (0.810061, 0.251921) (0.812132, 0.265161) (0.814203, 0.278934) (0.816274, 0.293285) (0.818345, 0.308265) (0.820416, 0.323936) (0.822487, 0.340367) (0.824558, 0.357638) (0.82663, 0.375848) (0.828701, 0.395113) (0.830772, 0.415575) (0.832843, 0.437409) (0.834914, 0.460839) (0.836985, 0.486151) (0.839056, 0.513728) (0.841127, 0.544096) (0.843198, 0.578018) (0.845269, 0.616671) (0.84734, 0.662045) (0.849411, 0.718043) (0.851482, 0.794796) (0.853553, 1.)
	};
	\addlegendentry{$H_{\min}$}
			
	\end{axis}  
\end{tikzpicture}

\caption{The lower bounds on $H\left( A_i | X_i Y_i T_i R'_{i-1} \right)$ and $H_{\min}\left( A_i | X_i Y_i T_i R'_{i-1} \right)$ as a function of the Bell violation for the CHSH inequality. The bound on $H\left( A_i | X_i Y_i T_i R'_{i-1} \right)$ is given in Equation~\eqref{eq:one_box_entropy_final} while the bound on $H_{\min}\left( A_i | X_i Y_i T_i R'_{i-1} \right)$ can be taken from~\cite{masanes2011secure}; both bounds are asymptotically tight. For non-optimal Bell violation the min-entropy is significantly lower than the entropy.}\label{fig:h_vs_min}
\end{figure}

A bound using the min-entropy $H_{\min}$, instead of $H$ itself, is not tight in general, and one can expect to lose quite a lot by performing the relaxation above (see for example Figure~\ref{fig:h_vs_min}). The advantage, however, is that a lower bound on $H_{\min}\left( A_i | X_i Y_i T_i F_iR' \right)_{\mathcal{N}_i(\sigma)}$ can be found using general techniques based on the semidefinite programming (SDP) hierarchies of~\cite{navascues2008convergent}. 
For a slightly better bound one should not drop the second term in Equation~\eqref{eq:one_round_chain_rule}. A bound on $H_{\min}\left( A_i B_i | X_i Y_i T_i F_i R'  \right)_{\mathcal{N}_i(\sigma)}$ (usually called ``global randomness'') can also be found using the SDP hierarchies (see, e.g.,~\cite{pironio2010random}).  For further details and references see~\cite[Section IV-C]{brunner2014bell}.

\section{A bound for the CHSH game}\label{sec:entropy-chsh}

In this section we devise a specific min-tradeoff function $f_{min}$ which, through an application of Lemma~\ref{lem:main_soundness}, leads to a concrete bound on the entropy generated by Protocol~\ref{pro:randomness_generation} when the game $G$ is the CHSH game (described in Section~\ref{sec:non-local_games}). 

We use Protocol~\ref{pro:randomness_generation} with the following choices: $\mathcal{X}_g=\{0\}, \; \mathcal{X}_t=\{0,1\}, \;\mathcal{Y}_g=\{2\}$, and $\mathcal{Y}_t=\{0,1\}$.

In order to fully specify the protocol it suffices to describe the symmetrisation step.  In this step, Alice and Bob choose together a uniform bit $F_i$, and they both flip their output bits if and only if $F_i=1$.
This symmetrisation is helpful in the proof of the main theorem below. The downside is that it costs a lot of randomness to implement, which can be problematic for some applications such as randomness expansion. At the end of the section we show that the step is in fact not necessary in any real implementation of the protocol.

The proof of Theorem~\ref{thm:main_generation_chsh} shows that the rate of entropy generation is governed by the following functions, where $h$ is the binary entropy and~$\gamma,p(1)\in(0,1]$ such that $ p(1)/\gamma\geq 3/4$:\footnote{We define the functions $g$ and  $f_{min}$ only in the regime in which the protocol does not abort, i.e., $ p(1)/\gamma\geq 3/4$. Any extension of $g$ to the regime $p(1)/\gamma\in[0,3/4]$ that keeps the function differential can be used for mathematical completeness.}
\begin{align}
	&g(p) =  \begin{cases} 
			 1 - h\left( \frac{1}{2} + \frac{1}{2}\sqrt{16 \; \frac{p(1)}{\gamma} \left(\frac{p(1)}{\gamma}-1\right) +3}  \right)&  \frac{p(1)}{\gamma}\in\left[\frac{3}{4},\frac{2+\sqrt{2}}{4}\right] \\
			1 & \frac{p(1)}{\gamma}\in\left[\frac{2+\sqrt{2}}{4},1\right]\;,
			\end{cases}\notag\\
	&f_{\min}\left(p,p_t\right) = \begin{cases}
	g\left(p\right)&  p(1) \leq p_t(1) \;  \\
	\frac{\mathrm{d}}{\mathrm{d}p(1)} g(p)\big|_{p_t}  \cdot p(1)+ \Big( g(p_t) -	\frac{\mathrm{d}}{\mathrm{d}p(1)} g(p)\big|_{p_t} \cdot p_t(1) \Big)& p(1)> p_t(1)\;,
	\end{cases} \nonumber \\
	&\eta(p,p_t,\varepsilon_{\text{s}},\varepsilon_{\text{e}}) =  f_{\min}\left(p, p_t\right) - \frac{1}{\sqrt{n}}2\left( \log 13 + \Big\lceil\frac{\mathrm{d}}{\mathrm{d}p(1)} g(p)\big|_{p_t} \Big\rceil \right)\sqrt{1-2 \log (\varepsilon_{\text{s}} \cdot \varepsilon_{\text{e}})}\;, \nonumber\\
	&\eta_{\mathrm{opt}}(\varepsilon_{\text{s}}, \varepsilon_{\text{e}}) = \max_{\frac{3}{4} < \frac{p_t(1)}{\gamma} < \frac{2+\sqrt{2}}{4}} \; \eta(\omega_{\mathrm{exp}}\gamma - \delta_{\mathrm{est}},p_t,\varepsilon_{\text{s}},\varepsilon_{\text{e}})\;. \label{eq:eta_opt}
\end{align}

\begin{thm}[Main theorem]\label{thm:main_generation_chsh}
Let $D$ be any device, $\rho$ the state (as defined in Equation~\eqref{eq:final_state_before_abort}) generated using Protocol~\ref{pro:randomness_generation} when the game $G$ is the CHSH game, $\Omega$ (as defined in Equation~\eqref{eq:good_event_def}) the event that the protocol does not abort, and $\rho_{|\Omega}$ the state conditioned on $\Omega$.
Then, for any $\varepsilon_{\mathrm{EA}},\varepsilon_{\text{s}}\in (0,1)$, either the protocol aborts with probability greater than $1-\varepsilon_{\mathrm{EA}}$ or
	\begin{equation}\label{eq:main_thm}
		 H^{\varepsilon_{\text{s}}}_{\min} \left( \mathbf{A B} | \mathbf{X Y T F} E \right)_{\rho_{|\Omega}} > n\cdot \eta_{\mathrm{opt}}(\varepsilon_{\text{s}},\varepsilon_{\mathrm{EA}}) \;,
	\end{equation}
	where  $\eta_{\mathrm{opt}}$ is defined in Equation~\eqref{eq:eta_opt}. 
\end{thm}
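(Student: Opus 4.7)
The strategy is to apply Lemma~\ref{lem:main_soundness} with a min-tradeoff function $f_{\min}$ tailored to the CHSH game. The main work is to produce a tight single-round entropy bound and then ``linearise'' it so that it fits the EAT framework.

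First I would establish the single-round bound
\[
    H(A_i | X_i Y_i T_i F_i R')_{\mathcal{N}_i(\sigma)} \;\geq\; g\bigl(\mathcal{N}_i(\sigma)_{C_i}\bigr)
\]
for any input state $\sigma$, by invoking the tight single-round result of Pironio \emph{et al.}~\cite{pironio2009device} for the CHSH inequality. The role of the symmetrisation step with $F_i$ is to force the marginal distribution of $A_i$ to be uniform, so that the entropy equals $1 - h(p_{\text{guess}})$ and Pironio et al.'s bound $p_{\text{guess}} \leq \tfrac{1}{2}\bigl(1 + \sqrt{16\omega(\omega-1)+3}\bigr)$ applies directly, where $\omega$ is the CHSH winning probability of $\sigma$ under the bipartite measurements used in round $i$. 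Since the EAT channel fixes $\Pr[C_i = \perp] = 1 - \gamma$ and $\Pr[C_i = 1] = \gamma \cdot \omega$, this winning probability is exactly $p(1)/\gamma$, matching the argument of $g$. The chain rule then yields $H(A_i B_i | X_i Y_i T_i F_i R') \geq g(p)$, the required single-round input to the EAT.

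Next, since $g$ is convex on the relevant interval $[3/4, (2+\sqrt{2})/4]$ (as one can verify directly, e.g.\ because $g'$ monotonically increases from $0$ to $+\infty$ on this interval), its tangent at any interior point $p_t$ lies below $g$ everywhere. The function $f_{\min}(\cdot, p_t)$ from the theorem coincides with $g$ for $p(1) \leq p_t(1)$ and with the tangent at $p_t$ otherwise; it is therefore a valid lower bound on $g$, and hence a valid min-tradeoff function for $\{\mathcal{N}_i\}$. Because $f_{\min}$ is monotone increasing in $p(1)$, the infimum over $p$ with $p(1) \geq \omega_{\mathrm{exp}}\gamma - \delta_{\mathrm{est}}$ is attained at the boundary, giving $t = f_{\min}(\omega_{\mathrm{exp}}\gamma - \delta_{\mathrm{est}}, p_t)$. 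Plugging into Lemma~\ref{lem:main_soundness} with $d_{A_i B_i} = 2 \cdot 3 = 6$ (since $B_i \in \{0,1,\perp\}$), which yields $\log(1 + 2 d_{A_i B_i}) = \log 13$, and $\|\nabla f_{\min}\|_\infty = g'(p_t)$ (the tangent's slope on one side, and bounded by $g'(p_t)$ on the other by convexity), produces $nt - v\sqrt{n} = n\cdot\eta(\omega_{\mathrm{exp}}\gamma - \delta_{\mathrm{est}}, p_t, \varepsilon_\text{s}, \varepsilon_\mathrm{EA})$. Optimising over $p_t$ in its allowed range delivers $n \cdot \eta_{\mathrm{opt}}$, as required.

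The main obstacle is the first step. Pironio et al.'s result was originally phrased for single-copy strategies with specific quantum states; to use it here one must verify that it genuinely applies to the single-round EAT channel $\mathcal{N}_i$ in the presence of the side system $R'$ (a purification of the device's prior history and Eve's register) and for arbitrary memory-containing devices. The symmetrisation step is what allows the bound to be on the von Neumann entropy rather than on the significantly weaker min-entropy (see Figure~\ref{fig:h_vs_min}), and is in turn what makes the resulting $\eta_{\mathrm{opt}}$ essentially tight.
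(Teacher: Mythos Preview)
Your overall strategy matches the paper's proof: derive the single-round bound $H(A_iB_i\mid X_iY_iT_iF_iR')\geq g(p)$ from~\cite{pironio2009device}, linearise $g$ via its tangent at a point $p_t$ to obtain a differentiable min-tradeoff function with bounded gradient, feed this into Lemma~\ref{lem:main_soundness} with $d_{A_iB_i}=6$ (hence the $\log 13$), and optimise over $p_t$. Your observations that $g$ is convex, that $f_{\min}$ is monotone in $p(1)$, and that $\|\nabla f_{\min}\|_\infty=g'(p_t)$ are correct and in fact make explicit points the paper leaves implicit.

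There is, however, a genuine slip in your description of the single-round step. The identity ``entropy equals $1-h(p_{\text{guess}})$'' is false: for a uniform bit $A_i$ with quantum side information one has $H(A_i\mid F_iR')=1-\chi(A_i:F_iR')$, where $\chi$ is the \emph{Holevo quantity}, not the binary entropy of the guessing probability. What~\cite{pironio2009device} actually establishes is the Holevo bound $\chi(A_i:F_iR'\mid X_i{=}x)\leq h\bigl(\tfrac12+\tfrac12\sqrt{\beta^2/4-1}\bigr)$ (Equation~\eqref{eq:chi-bound} in the paper). The expression $\tfrac12\bigl(1+\sqrt{16\omega(\omega-1)+3}\bigr)$ that you call $p_{\text{guess}}$ is the \emph{argument of $h$} in that Holevo bound, not a guessing probability; indeed it \emph{increases} with $\omega$, whereas the guessing probability must decrease (cf.\ Figure~\ref{fig:h_vs_min}, where the $H_{\min}$ curve derived from the true $p_{\text{guess}}$ is strictly below the $H$ curve). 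The paper's actual route is: split over $X_i\in\{0,1\}$ (Equation~\eqref{eq:entropy_input_split}); use the bipartite non-signalling structure of $D$ to drop $Y_i,T_i$ from the conditioning; use the symmetrisation bit $F_i$ to ensure $H(A_i\mid X_i{=}x)=1$, giving $H(A_i\mid F_iR',X_i{=}x)=1-\chi$ (Equation~\eqref{eq:h-chain-1}); then apply the Holevo bound~\eqref{eq:chi-bound}. Once you replace your guessing-probability narrative with this Holevo-quantity argument, the rest of your proof goes through exactly as in the paper.
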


The rate $\eta_{\mathrm{opt}}$ as a function of the expected Bell violation $\omega_{\mathrm{exp}}$ is plotted in Figure~\ref{fig:eta_rates} for $\gamma=1$ and several choices of values for $\varepsilon_{\mathrm{EA}}$, $\delta_{\mathrm{est}}$, and $n$. 
For comparison, we also plot  in Figure~\ref{fig:eta_rates} the asymptotic rate ($n\rightarrow\infty$) under the assumption that the state of the device is an (unknown) i.i.d.\@ state $\rho_{Q_AQ_BE}^{\otimes n}$. In this case, the quantum asymptotic equipartition property~\cite[Theorems~1 and~9]{tomamichel2009fully} implies that the optimal rate is the Shannon entropy accumulated in one round of the protocol (as given in Equation~\eqref{eq:one_box_entropy_final}). This rate, appearing as the dashed line in Figure~\ref{fig:eta_rates}, is an upper bound on the entropy that can be accumulated. One can see that as the number of rounds in the protocol increases our rate $\eta_{\mathrm{opt}}$ approaches this optimal rate.

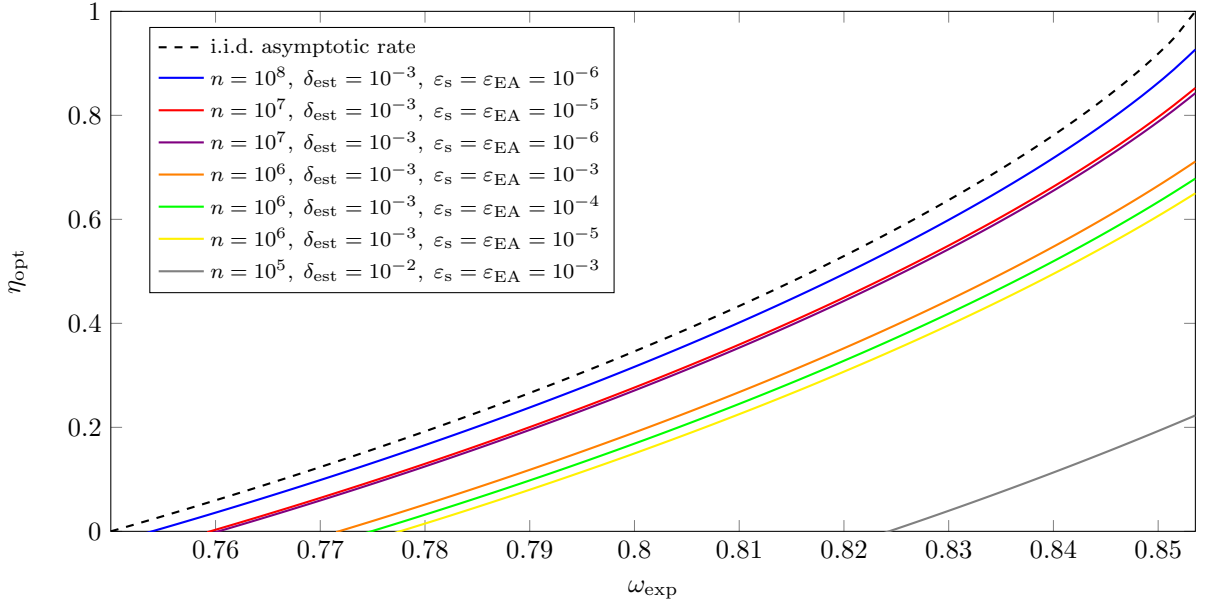
\begin{figure}
\centering
\begin{tikzpicture}
	\begin{axis}[
		height=8.5cm,
		width=16cm,
		xlabel=$\omega_{\mathrm{exp}}$,
		ylabel=$\eta_{\mathrm{opt}}$,
		xmin=0.75,
		xmax=0.853553,
		ymax=1,
		ymin=0,
	     xtick={0.76,0.77,0.78,0.79,0.80,0.81,0.82,0.83,0.84,0.85},
          ytick={0,0.2,0.4,0.6,0.8,1},
		legend style={at={(0.25,0.97)},anchor=north,legend cell align=left,font=\footnotesize} 
	]
	

	\addplot[black,thick,smooth,dashed] coordinates {
	(0.75, 0.) (0.752071, 0.0120347) (0.754142, 0.0242374) (0.756213, 0.0366113) (0.758284, 0.0491598) (0.760355, 0.0618861) (0.762426, 0.074794) (0.764497, 0.0878872) (0.766569, 0.10117) (0.76864, 0.114646) (0.770711, 0.128319) (0.772782, 0.142196) (0.774853, 0.156279) (0.776924, 0.170575) (0.778995, 0.185089) (0.781066, 0.199826) (0.783137, 0.214793) (0.785208, 0.229996) (0.787279, 0.245441) (0.78935, 0.261137) (0.791421, 0.277091) (0.793492, 0.293311) (0.795563, 0.309806) (0.797635, 0.326586) (0.799706, 0.343661) (0.801777, 0.361042) (0.803848, 0.378741) (0.805919, 0.396771) (0.80799, 0.415147) (0.810061, 0.433884) (0.812132, 0.452998) (0.814203, 0.47251) (0.816274, 0.49244) (0.818345, 0.51281) (0.820416, 0.533648) (0.822487, 0.554982) (0.824558, 0.576846) (0.82663, 0.599279) (0.828701, 0.622324) (0.830772, 0.646033) (0.832843, 0.670469) (0.834914, 0.695705) (0.836985, 0.721832) (0.839056, 0.748965) (0.841127, 0.777251) (0.843198, 0.806888) (0.845269, 0.838156) (0.84734, 0.871481) (0.849411, 0.907587) (0.851482, 0.948007) (0.853553, 1)	
	};
	\addlegendentry{i.i.d.\@ asymptotic rate}
	
	\addplot[blue,thick,smooth] coordinates {
	(0.753051, -0.00519263) (0.755102, 0.00675606) (0.757153, 0.0188591) (0.759204, 0.0311305) (0.761255, 0.0435732) (0.763306, 0.0561908) (0.765357, 0.0689867) (0.767409, 0.0819646) (0.76946, 0.0951283) (0.771511, 0.108482) (0.773562, 0.12203) (0.775613, 0.135776) (0.777664, 0.149726) (0.779715, 0.163884) (0.781766, 0.178256) (0.783817, 0.192847) (0.785868, 0.207663) (0.787919, 0.222711) (0.78997, 0.237996) (0.792021, 0.253526) (0.794072, 0.269309) (0.796123, 0.285352) (0.798175, 0.301665) (0.800226, 0.318255) (0.802277, 0.335134) (0.804328, 0.352312) (0.806379, 0.369801) (0.80843, 0.387613) (0.810481, 0.405761) (0.812532, 0.424261) (0.814583, 0.44313) (0.816634, 0.462384) (0.818685, 0.482044) (0.820736, 0.502133) (0.822787, 0.522674) (0.824838, 0.543696) (0.82689, 0.56523) (0.828941, 0.587313) (0.830992, 0.609986) (0.833043, 0.633297) (0.835094, 0.657305) (0.837145, 0.682076) (0.839196, 0.707695) (0.841247, 0.734266) (0.843298, 0.761921) (0.845349, 0.790836) (0.8474, 0.821253) (0.849451, 0.853531) (0.851502, 0.888251) (0.853553, 0.926527)
	};
	\addlegendentry{$n=10^8,\; \delta_{\mathrm{est}}=10^{-3},\;\varepsilon_{\text{s}}=\varepsilon_{\mathrm{EA}}=10^{-6}$}

	\addplot[red,thick,smooth] coordinates {
	(0.759204, -0.00167058) (0.761255, 0.01049) (0.763306, 0.0228199) (0.765357, 0.0353224) (0.767409, 0.0480009) (0.76946, 0.0608589) (0.771511, 0.0739003) (0.773562, 0.0871289) (0.775613, 0.100549) (0.777664, 0.114164) (0.779715, 0.12798) (0.781766, 0.142001) (0.783817, 0.156232) (0.785868, 0.170679) (0.787919, 0.185347) (0.78997, 0.200242) (0.792021, 0.21537) (0.794072, 0.230739) (0.796123, 0.246355) (0.798175, 0.262226) (0.800226, 0.278361) (0.802277, 0.294768) (0.804328, 0.311457) (0.806379, 0.328437) (0.80843, 0.345721) (0.810481, 0.363319) (0.812532, 0.381245) (0.814583, 0.399513) (0.816634, 0.418137) (0.818685, 0.437136) (0.820736, 0.456528) (0.822787, 0.476333) (0.824838, 0.496574) (0.82689, 0.517277) (0.828941, 0.538472) (0.830992, 0.56019) (0.833043, 0.582471) (0.835094, 0.605358) (0.837145, 0.628901) (0.839196, 0.653163) (0.841247, 0.678215) (0.843298, 0.704149) (0.845349, 0.731076) (0.8474, 0.759141) (0.849451, 0.788541) (0.851502, 0.81955) (0.853553, 0.852586)
	};
	\addlegendentry{$n=10^7,\; \delta_{\mathrm{est}}=10^{-3},\;\varepsilon_{\text{s}}=\varepsilon_{\mathrm{EA}}=10^{-5}$}
	
	\addplot[violet,thick,smooth] coordinates {
	(0.759204, -0.00635168) (0.761255, 0.00576902) (0.763306, 0.0180583) (0.765357, 0.0305193) (0.767409, 0.0431556) (0.76946, 0.0559705) (0.771511, 0.0689678) (0.773562, 0.0821515) (0.775613, 0.0955254) (0.777664, 0.109094) (0.779715, 0.122862) (0.781766, 0.136833) (0.783817, 0.151014) (0.785868, 0.165409) (0.787919, 0.180023) (0.78997, 0.194863) (0.792021, 0.209936) (0.794072, 0.225246) (0.796123, 0.240803) (0.798175, 0.256613) (0.800226, 0.272684) (0.802277, 0.289025) (0.804328, 0.305646) (0.806379, 0.322556) (0.80843, 0.339766) (0.810481, 0.357288) (0.812532, 0.375135) (0.814583, 0.39332) (0.816634, 0.411858) (0.818685, 0.430766) (0.820736, 0.450063) (0.822787, 0.469767) (0.824838, 0.489902) (0.82689, 0.510493) (0.828941, 0.531567) (0.830992, 0.553158) (0.833043, 0.575301) (0.835094, 0.598039) (0.837145, 0.621421) (0.839196, 0.645505) (0.841247, 0.670362) (0.843298, 0.696076) (0.845349, 0.722754) (0.8474, 0.750533) (0.849451, 0.779594) (0.851502, 0.810189) (0.853553, 0.842692)
	};
	\addlegendentry{$n=10^7,\; \delta_{\mathrm{est}}=10^{-3},\;\varepsilon_{\text{s}}=\varepsilon_{\mathrm{EA}}=10^{-6}$}

	\addplot[orange,thick,smooth] coordinates {
	(0.771511, -0.0015508) (0.773562, 0.0110039) (0.775613, 0.0237357) (0.777664, 0.0366482) (0.779715, 0.0497451) (0.781766, 0.0630305) (0.783817, 0.0765085) (0.785868, 0.0901835) (0.787919, 0.10406) (0.78997, 0.118143) (0.792021, 0.132438) (0.794072, 0.14695) (0.796123, 0.161685) (0.798175, 0.176649) (0.800226, 0.191849) (0.802277, 0.207291) (0.804328, 0.222982) (0.806379, 0.238932) (0.80843, 0.255147) (0.810481, 0.271637) (0.812532, 0.288412) (0.814583, 0.305482) (0.816634, 0.322858) (0.818685, 0.340553) (0.820736, 0.35858) (0.822787, 0.376952) (0.824838, 0.395687) (0.82689, 0.414801) (0.828941, 0.434314) (0.830992, 0.454247) (0.833043, 0.474624) (0.835094, 0.495471) (0.837145, 0.516818) (0.839196, 0.538701) (0.841247, 0.561158) (0.843298, 0.584235) (0.845349, 0.607986) (0.8474, 0.632476) (0.849451, 0.657782) (0.851502, 0.683999) (0.853553, 0.711249)
	};
	\addlegendentry{$n=10^6,\; \delta_{\mathrm{est}}=10^{-3},\;\varepsilon_{\text{s}}=\varepsilon_{\mathrm{EA}}=10^{-3}$}
	
	\addplot[green,thick,smooth] coordinates {
	(0.773562, -0.00798184) (0.775613, 0.00458303) (0.777664, 0.0173252) (0.779715, 0.0302482) (0.781766, 0.0433558) (0.783817, 0.0566522) (0.785868, 0.0701415) (0.787919, 0.083828) (0.78997, 0.0977165) (0.792021, 0.111812) (0.794072, 0.126119) (0.796123, 0.140644) (0.798175, 0.155392) (0.800226, 0.170369) (0.802277, 0.185583) (0.804328, 0.201039) (0.806379, 0.216745) (0.80843, 0.23271) (0.810481, 0.24894) (0.812532, 0.265447) (0.814583, 0.282238) (0.816634, 0.299326) (0.818685, 0.31672) (0.820736, 0.334434) (0.822787, 0.35248) (0.824838, 0.370873) (0.82689, 0.389629) (0.828941, 0.408766) (0.830992, 0.428302) (0.833043, 0.44826) (0.835094, 0.468663) (0.837145, 0.489538) (0.839196, 0.510916) (0.841247, 0.53283) (0.843298, 0.555322) (0.845349, 0.578436) (0.8474, 0.602229) (0.849451, 0.626763) (0.851502, 0.652119) (0.853553, 0.678392)
	};
	\addlegendentry{$n=10^6,\; \delta_{\mathrm{est}}=10^{-3},\;\varepsilon_{\text{s}}=\varepsilon_{\mathrm{EA}}=10^{-4}$}
	
	\addplot[yellow,thick,smooth] coordinates {
	(0.775613, -0.0122004) (0.777664, 0.00039393) (0.779715, 0.0131662) (0.781766, 0.0261199) (0.783817, 0.039259) (0.785868, 0.0525874) (0.787919, 0.0661095) (0.78997, 0.0798296) (0.792021, 0.0937525) (0.794072, 0.107883) (0.796123, 0.122226) (0.798175, 0.136788) (0.800226, 0.151574) (0.802277, 0.166591) (0.804328, 0.181845) (0.806379, 0.197342) (0.80843, 0.213091) (0.810481, 0.2291) (0.812532, 0.245376) (0.814583, 0.26193) (0.816634, 0.27877) (0.818685, 0.295908) (0.820736, 0.313355) (0.822787, 0.331124) (0.824838, 0.349227) (0.82689, 0.36768) (0.828941, 0.386499) (0.830992, 0.405701) (0.833043, 0.425307) (0.835094, 0.445338) (0.837145, 0.465818) (0.839196, 0.486775) (0.841247, 0.50824) (0.843298, 0.530249) (0.845349, 0.552841) (0.8474, 0.576066) (0.849451, 0.599978) (0.851502, 0.624644) (0.853553, 0.650146)
	};
	\addlegendentry{$n=10^6,\; \delta_{\mathrm{est}}=10^{-3},\;\varepsilon_{\text{s}}=\varepsilon_{\mathrm{EA}}=10^{-5}$}
	
	\addplot[gray,thick,smooth] coordinates {
	(0.822787, -0.0102655) (0.824838, 0.00363856) (0.82689, 0.0177499) (0.828941, 0.0320736) (0.830992, 0.0466152) (0.833043, 0.0613805) (0.835094, 0.0763757) (0.837145, 0.0916072) (0.839196, 0.107082) (0.841247, 0.122808) (0.843298, 0.138792) (0.845349, 0.155044) (0.8474, 0.171572) (0.849451, 0.188385) (0.851502, 0.205496) (0.853553, 0.222914)
	};
	\addlegendentry{$n=10^5,\; \delta_{\mathrm{est}}=10^{-2},\;\varepsilon_{\text{s}}=\varepsilon_{\mathrm{EA}}=10^{-3}$}
	
	\end{axis}  
\end{tikzpicture}

\caption{$\eta_{\mathrm{opt}}(\omega_{\mathrm{exp}})$ for $\gamma=1$ and several choices of $\delta_{\mathrm{est}},\;n,\;\varepsilon_{\mathrm{EA}},$ and the smoothing parameter $\varepsilon_{\text{s}}$. Note that for the errors of the protocols to be meaningful the number of rounds $n$ should be at least of order $\delta_{\mathrm{est}}^{-2}$.  $\varepsilon_{\mathrm{EA}}$ and $\varepsilon_{\text{s}}$ affect the soundness error in the protocols of the following sections. The dashed line shows the optimal asymptotic ($n\rightarrow\infty$) rate under the assumption that the devices are such that Alice, Bob, and Eve share an (unknown) i.i.d.\@ state.}
\label{fig:eta_rates}
\end{figure}

\begin{proof}[Proof of Theorem~\ref{thm:main_generation_chsh}]
Based on Lemma~\ref{lem:main_soundness}, it will suffice to define a min-tradeoff function $f_{\min}$ such that Equation~\eqref{eq:eat_f_min_bound} is satisfied. Using the chain rule,
	\[
		H\left( A_i B_i | X_i Y_i T_i F_i R' \right)_{\mathcal{N}_i(\sigma)} \geq H\left( A_i | X_i Y_i T_i F_i R' \right)_{\mathcal{N}_i(\sigma)}.
	\]

Furthermore,  
\begin{equation}\label{eq:entropy_input_split}
\begin{split}
	H\left( A_i | X_i Y_i T_i F_i R' \right)_{\mathcal{N}_i(\sigma)} &= \Pr\left[ X_i=0\right] \cdot H\left( A_i | Y_i T_i F_i R' , X_i=0 \right)_{\mathcal{N}_i(\sigma)} \\
	&+\Pr\left[ X_i=1\right] \cdot H\left( A_i | Y_i T_i F_i R' , X_i=1 \right)_{\mathcal{N}_i(\sigma)} \;.
\end{split}
\end{equation}
In the following we find a bound on $H\left( A_i | Y_i T_i F_i R' , X_i=0 \right)_{\mathcal{N}_i(\sigma)}$. Using exactly the same steps the same bound can be derived on $H\left( A_i | Y_i T_i F_i R' , X_i=1 \right)_{\mathcal{N}_i(\sigma)}$. 

Due to the bipartite requirement on the untrusted device $D$ used to implement the protocol and since Alice's actions (and her device's) are independent of Bob's choice of $Y_i$ and $T_i$ for the case $X_i=0$ we have\footnote{We assume that the value of $T_i$ is exchanged over a classical authenticated channel to which the device $D$ does not have access. In particular, Alice's part of the device is independent from the value of $T_i$.}
\[
	H\left( A_i | Y_i T_i F_i R' , X_i=0  \right)_{\mathcal{N}_i(\sigma)} = H\left(A_i | F_i R' , X_i=0  \right)_{\mathcal{N}_i(\sigma)} \;.
\]	
Using the definition of the conditional entropy one can rewrite $H\left( A_i | F_i R' , X_i=0  \right)_{\mathcal{N}_i(\sigma)}$ as follows:
	\begin{align}
		H\left( A_i | F_i R' , X_i=0\right) &= H\left( A_i F_i R' |  X_i=0 \right) - H\left( F_i R' | X_i=0 \right) \notag \\
		&= H\left( A_i | X_i=0\right) + H\left(F_i R' | A_i, X_i=0 \right) - H\left(F_i R' | X_i=0\right)\notag  \\
		&= H\left( A_i |  X_i=0 \right)  - \chi\left( A_i : F_iR' | X_i=0 \right) \notag\\
		&= 1  - \chi\left( A_i : F_iR' | X_i=0 \right) \;, \label{eq:h-chain-1}
	\end{align}
	where $\chi\left( A_i : F_iR' | X_i=0\right)=H\left( F_iR' |  X_i=0 \right)-H\left( F_iR' | A_i, X_i=0 \right)$ and the last equality follows from the symmetrisation step, Step~\ref{prostep:symmetry}.
	
	For states leading to a CHSH violation of $\beta\in[2,2\sqrt{2}]$ (for inputs restricted to $\{0,1\}\times\{0,1\}$) a tight bound on $\chi\left( A_i : F_iR' | X_i=0 \right)$ was derived in~\cite[Section 2.3]{pironio2009device}:
	\begin{equation}\label{eq:chi-bound}
		 \chi\left( A_i : F_i R' | X_i=0\right) \leq h\left( \frac{1}{2} + \frac{1}{2}\sqrt{\frac{\beta^2}{4} -1} \right) \;.
	\end{equation}
	Since $\omega = \frac{1}{8}\beta + \frac{1}{2}$, for $\omega\in\left[\frac{3}{4},\frac{2+\sqrt{2}}{4}\right]$ (i.e. a violation in the quantum regime) we get 
	\[
		 \chi\left( A_i : F_iR' | X_i=0 \right) \leq h\left( \frac{1}{2} + \frac{1}{2}\sqrt{16\omega \left(\omega-1\right) +3}  \right) \;.
	\]
	Combining this bound with Equations~\eqref{eq:entropy_input_split} and~\eqref{eq:h-chain-1} we conclude that for a state with winning probability~$\omega$,
	\begin{equation}\label{eq:entropy_bound_for_min_tradeoff}
		H\left( A_i B_i | X_i Y_i T_i F_i R' \right) \geq 1 - h\left( \frac{1}{2} + \frac{1}{2}\sqrt{16\omega \left(\omega-1\right) +3}  \right) \;.
	\end{equation}

	Consider a probability distribution $p=\mathrm{freq}_\mathbf{c}$ resulting from the observed data. If $p(0)+p(1)\neq \gamma$ then the set of states fulfilling $\mathcal{N}_i(\sigma)_{C_i}=p$ is empty and the condition on the min-tradeoff function given in Definition~\ref{def:min_tradeoff_func} becomes trivial. Hence, for the construction of the min-tradeoff function we can restrict our attention to $p$ with $p(0)+p(1)= \gamma$. For such $p$ we can write $\omega=\frac{p(1)}{p(0)+p(1)}=\frac{p(1)}{\gamma}$. All together we have for all $p$ in the considered regime,
	\begin{equation}\label{eq:one_box_entropy_final}
		\inf_{\sigma_{R_{i-1}R'}:\mathcal{N}_i(\sigma)_{C_i}=p} H\left( A_i B_i | X_i Y_i T_i F_i R' \right)_{\mathcal{N}_i(\sigma)} \geq 1 - h\left( \frac{1}{2} + \frac{1}{2}\sqrt{16\;\frac{p(1)}{\gamma} \left(\frac{p(1)}{\gamma}-1\right) +3}  \right) \;.
	\end{equation}
	
	Define a function $g$ over $p$ for which $p(2)/\gamma\in[3/54,1]$ by  
	\begin{equation}\label{eq:def-g}
	g(p) \,=\,  \begin{cases}
			1 - h\left( \frac{1}{2} + \frac{1}{2}\sqrt{16\; \frac{p(1)}{\gamma} \left(\frac{p(1)}{\gamma}-1\right) +3}  \right) &  \frac{p(1)}{\gamma}\in\left[\frac{3}{4},\frac{2+\sqrt{2}}{4}\right] \\
			1& \frac{p(1)}{\gamma}\in\left[\frac{2+\sqrt{2}}{4},1\right] \;.
			\end{cases}
	\end{equation}

	From Equation~\eqref{eq:one_box_entropy_final} it follows that any choice of $f_{\min}(p)$ that is differentiable and satisfies $f_{\min}(p) \leq g(p)$ for all $p$ will satisfy Equation~\eqref{eq:eat_f_min_bound}.

	For $\frac{p(1)}{\gamma}=\frac{2+\sqrt{2}}{4}$ the derivative of $g$ is infinite. For the final bound of the EAT to be meaningful $f_{\min}$ should be chosen such that $\| \triangledown f_{\min} \|_{\infty}$ is finite. To remedy this problem we choose $f_{\min}$ by ``cutting'' the function $g$ and ``gluing'' it to a linear function at some point $p_t$ (which is later optimised), while keeping the function differentiable. By doing this we ensure that the gradient of $f_{\min}$ is bounded, at the cost of losing a bit of entropy for $p$ with $p(1)> p_t(1)$. Towards this, denote 
	\begin{equation}\label{eq:derivative} 
		a(p_t)= \Big\lceil\frac{\mathrm{d}}{\mathrm{d}p(1)} g(p)\big|_{p_t} \Big\rceil \qquad\text{and} \qquad b(p_t) =  g(p_t)-a(p_t)\cdot p_t(1). 
	\end{equation}

	We then make the following choice\footnote{Note that $f_{min}$ is nonpositive for $ \frac{p(1)}{\gamma}\leq 3/4$, but this regime is not relevant as it would lead to the protocol aborting; the extension of $f_{min}$ to that range of values is only for mathematical convenience.} for the min-tradeoff function $f_{\min}$ (see Figure~\ref{fig:f_min}):
	\begin{equation}\label{eq:f_min_choice}
		f_{\min}\left(p,p_t\right) = \begin{cases}
			g\left(p\right) &  p(1) \leq p_t(1) \\
			a(p_t) \cdot p(1) + b(p_t) & p(1)> p_t(1)\;
		\end{cases}
	\end{equation}
	From the definition of $a$ and $b$ in Equation~\eqref{eq:derivative} this function is differentiable and fulfils the condition given in Equation~\eqref{eq:eat_f_min_bound}. Furthermore, by definition for any choice of $p_t$ it holds that $\| \triangledown f_{\min}(\cdot,p_t) \|_{\infty} \leq a(p_t)$. 
	
	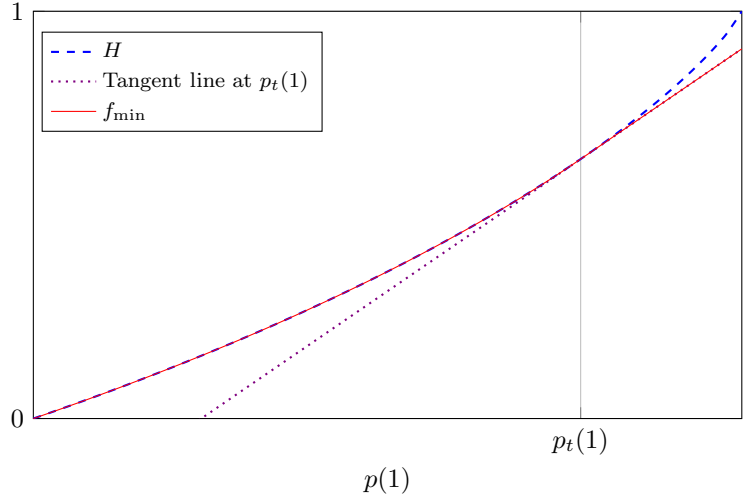
\begin{figure}
\centering
\begin{tikzpicture}
	\begin{axis}[
		height=7cm,
		width=11cm,
		grid=major,
		xlabel=$p(1)$,
		xmin=0.75,
		xmax=0.853553,
		ymax=1,
		ymin=0,
	     xtick={0.83},
	     xticklabels={$p_t(1)$},
          ytick={0,1},
		legend style={at={(0.21,0.95)},anchor=north,legend cell align=left,font=\footnotesize} 
	]
	

	\addplot[blue,thick,smooth,dashed] coordinates {
	(0.75, 0.) (0.752071, 0.0120347) (0.754142, 0.0242374) (0.756213, 0.0366113) (0.758284, 0.0491598) (0.760355, 0.0618861) (0.762426, 0.074794) (0.764497, 0.0878872) (0.766569, 0.10117) (0.76864, 0.114646) (0.770711, 0.128319) (0.772782, 0.142196) (0.774853, 0.156279) (0.776924, 0.170575) (0.778995, 0.185089) (0.781066, 0.199826) (0.783137, 0.214793) (0.785208, 0.229996) (0.787279, 0.245441) (0.78935, 0.261137) (0.791421, 0.277091) (0.793492, 0.293311) (0.795563, 0.309806) (0.797635, 0.326586) (0.799706, 0.343661) (0.801777, 0.361042) (0.803848, 0.378741) (0.805919, 0.396771) (0.80799, 0.415147) (0.810061, 0.433884) (0.812132, 0.452998) (0.814203, 0.47251) (0.816274, 0.49244) (0.818345, 0.51281) (0.820416, 0.533648) (0.822487, 0.554982) (0.824558, 0.576846) (0.82663, 0.599279) (0.828701, 0.622324) (0.830772, 0.646033) (0.832843, 0.670469) (0.834914, 0.695705) (0.836985, 0.721832) (0.839056, 0.748965) (0.841127, 0.777251) (0.843198, 0.806888) (0.845269, 0.838156) (0.84734, 0.871481) (0.849411, 0.907587) (0.851482, 0.948007) (0.853553, 1.)	
	};
	\addlegendentry{$H$}
	
	\addplot[violet,thick,smooth,dotted] coordinates {
	(0.75, -0.282055) (0.752071, -0.258259) (0.754142, -0.234463) (0.756213, -0.210667) (0.758284, -0.186871) (0.760355, -0.163075) (0.762426, -0.139279) (0.764497, -0.115484) (0.766569, -0.0916877) (0.76864, -0.0678918) (0.770711, -0.0440959) (0.772782, -0.0203) (0.774853, 0.00349582) (0.776924, 0.0272917) (0.778995, 0.0510876) (0.781066, 0.0748834) (0.783137, 0.0986793) (0.785208, 0.122475) (0.787279, 0.146271) (0.78935, 0.170067) (0.791421, 0.193863) (0.793492, 0.217659) (0.795563, 0.241455) (0.797635, 0.26525) (0.799706, 0.289046) (0.801777, 0.312842) (0.803848, 0.336638) (0.805919, 0.360434) (0.80799, 0.38423) (0.810061, 0.408026) (0.812132, 0.431821) (0.814203, 0.455617) (0.816274, 0.479413) (0.818345, 0.503209) (0.820416, 0.527005) (0.822487, 0.550801) (0.824558, 0.574597) (0.82663, 0.598393) (0.828701, 0.622188) (0.830772, 0.645984) (0.832843, 0.66978) (0.834914, 0.693576) (0.836985, 0.717372) (0.839056, 0.741168) (0.841127, 0.764964) (0.843198, 0.788759) (0.845269, 0.812555) (0.84734, 0.836351) (0.849411, 0.860147) (0.851482, 0.883943) (0.853553, 0.907739)
	};
	\addlegendentry{Tangent line at $p_t(1)$}
	
	\addplot[red,smooth] coordinates {
	(0.75, 0.) (0.752071, 0.0120347) (0.754142, 0.0242374) (0.756213, 0.0366113) (0.758284, 0.0491598) (0.760355, 0.0618861) (0.762426, 0.074794) (0.764497, 0.0878872) (0.766569, 0.10117) (0.76864, 0.114646) (0.770711, 0.128319) (0.772782, 0.142196) (0.774853, 0.156279) (0.776924, 0.170575) (0.778995, 0.185089) (0.781066, 0.199826) (0.783137, 0.214793) (0.785208, 0.229996) (0.787279, 0.245441) (0.78935, 0.261137) (0.791421, 0.277091) (0.793492, 0.293311) (0.795563, 0.309806) (0.797635, 0.326586) (0.799706, 0.343661) (0.801777, 0.361042) (0.803848, 0.378741) (0.805919, 0.396771) (0.80799, 0.415147) (0.810061, 0.433884) (0.812132, 0.452998) (0.814203, 0.47251) (0.816274, 0.49244) (0.818345, 0.51281) (0.820416, 0.533648) (0.822487, 0.554982) (0.824558, 0.576846) (0.82663, 0.599279) (0.828701, 0.622324) (0.830772, 0.645984) (0.832843, 0.66978) (0.834914, 0.693576) (0.836985, 0.717372) (0.839056, 0.741168) (0.841127, 0.764964) (0.843198, 0.788759) (0.845269, 0.812555) (0.84734, 0.836351) (0.849411, 0.860147) (0.851482, 0.883943) (0.853553, 0.907739)
	};
	\addlegendentry{$f_{\min}$}
			
	\end{axis}  
\end{tikzpicture}

\caption{The construction of the min-tradeoff function $f_{\min}$ as in Equation~\eqref{eq:f_min_choice}. The plot shows the values of the min-tradeoff function on a slice $p(0)+p(1)= \text{constant}$. }
		\label{fig:f_min}
\end{figure}

	Applying Lemma~\ref{lem:main_soundness}, we conclude that  for any $\frac{3}{4}<\frac{p_t(1)}{\gamma}<\frac{2+\sqrt{2}}{4} $, either the protocol aborts with probability greater than $1-\varepsilon_{\mathrm{EA}}$, or
	\begin{equation}\label{eq:EAT_statement}
		H^{\varepsilon_{\text{s}}}_{\min} \left( \mathbf{A B} | \mathbf{X Y T F} E \right)_{\rho_{|\Omega}} > n f_{\min}\left( \omega_{\mathrm{exp}}\gamma - \delta_{\mathrm{est}},p_t \right) - \sqrt{n} \zeta(p_t)  \;,
	\end{equation}
	for $\zeta(p_t,\varepsilon_{\text{s}},\varepsilon_{\mathrm{AE}}) = 2\left( \log 13 + a(p_t) \right)\sqrt{1-2 \log (\varepsilon_{\text{s}} \cdot\varepsilon_{\mathrm{AE}})}$ (as $d_{A_iB_i} = 6$). To obtain the optimal rate we optimise over $p_t$ . Denote $\eta(p,p_t,\varepsilon_{\text{s}},\varepsilon_{\mathrm{AE}}) =  f_{\min}\left(p, p_t\right) - \frac{1}{\sqrt{n}}\zeta(p_t,\varepsilon_{\text{s}},\varepsilon_{\mathrm{AE}})$ and let
	\[
		\eta_{\mathrm{opt}} (\varepsilon_{\text{s}},\varepsilon_{\mathrm{AE}})= \max_{\frac{3}{4} < \frac{p_t(1)}{\gamma} < \frac{2+\sqrt{2}}{4}} \eta(\omega_{\mathrm{exp}}\gamma - \delta_{\mathrm{est}},p_t,\varepsilon_{\text{s}},\varepsilon_{\mathrm{AE}}) \;.
	\]	
	Plugging this into Equation~\eqref{eq:EAT_statement} the theorem follows.
\end{proof}

We end this section by showing how the particular implementation of the symmetrisation step, Step~\ref{prostep:symmetry}, of Protocol~\ref{pro:randomness_generation} made here for the CHSH game can be ignored in any implementation of the protocol. For this,  rewrite Equation~\eqref{eq:main_thm} more formally as\footnote{Previously for ease of notation we wrote $\mathbf{AB}$ for the flipped outputs; here we denote the same bits as $ g_{\mathbf{F}}\left(\mathbf{AB}\right)$ to make the flipping operation explicit.} 
\begin{equation}\label{eq:smooth_with_symmetry}
	 H^{\varepsilon_{\text{s}}}_{\min} \left( g_{\mathbf{F}}\left(\mathbf{AB}\right) | \mathbf{X Y T F} E \right)_{\rho_{|\Omega}} > n\cdot \eta_{\mathrm{opt}} \;,
\end{equation}
where $g_{\mathbf{F}}$ is the function that flips the bits according to $\mathbf{F}$. Since for any fixed value of $F$, $g_F$ is a deterministic function it follows from~\cite[Lemma 1]{scarani2008security} that for any $\varepsilon_{\text{s}}\geq 0$,
\begin{equation}\label{eq:remove_symmetry}
	H^{\varepsilon_{\text{s}}}_{\min} \left( \mathbf{AB}  | \mathbf{X Y T} E \right)_{\rho_{|\Omega}} \geq H^{\varepsilon_{\text{s}}}_{\min} \left( g_{\mathbf{F}}\left(\mathbf{AB}\right) | \mathbf{X Y T F} E \right)_{\rho_{|\Omega}} \;.
\end{equation}
Combining Equations~\eqref{eq:smooth_with_symmetry} and~\eqref{eq:remove_symmetry} proves the following corollary. 

\begin{cor}\label{cor:main_generation}
Under the same assumptions as Theorem~\ref{thm:main_generation_chsh}, but for an implementation of Protocol~\ref{pro:randomness_generation} in which the symmetrisation step, Step~\ref{prostep:symmetry}, is omitted, for any $\varepsilon_{\mathrm{EA}},\varepsilon_{\text{s}}\in (0,1)$, either the protocol aborts with probability greater than $1-\varepsilon_{\mathrm{EA}}$ or
	\begin{equation}\label{eq:main_thmc}
		 H^{\varepsilon_{\text{s}}}_{\min} \left( \mathbf{A B} | \mathbf{X Y T} E \right)_{\rho_{|\Omega}} > n\cdot \eta_{\mathrm{opt}}(\varepsilon_{\text{s}},\varepsilon_{\mathrm{EA}}) \;, 
	\end{equation}
	where  $\eta_{\mathrm{opt}}$ is defined in Equation~\eqref{eq:eta_opt}. 
\end{cor}

In Appendix~\ref{sec:better_rate} we use a small modification of the entropy accumulation protocol and the above proof to get a similar bound on the entropy rate which has a better dependency on the probability of a test $\gamma$. This is of relevance for some applications such as DIQKD. 
The calculations presented in Appendix~\ref{sec:better_rate} are slightly more technical than the proof given above, but do not require any substantially different observations.

\section{DIQKD}\label{sec:diqkd}

\subsection{The protocol}
\label{sec:diqkd-protocol}

Our protocol for DIQKD is described as Protocol~\ref{pro:diqkd_chsh} below. An honest implementation is described in Section~\ref{sec:honest_qkd_imp}. 

In the first part of the protocol Alice and Bob use their devices to produce the raw data, similarly to what is done in the entropy accumulation protocol, Protocol~\ref{pro:randomness_generation} (with the game $G$ equal to the CHSH game, as in Section~\ref{sec:entropy-chsh}). The main difference is that Bob's outputs always contains Bob's $i$-th measurement outcome (instead of being set to $\perp$ in all rounds for which $T_i=0$); to make the distinction explicit we denote Bob's outputs in Protocol~\ref{pro:diqkd_chsh} with a tilde, $\tilde{\mathbf{B}}$. 

In the second part of the protocol Alice and Bob apply classical post-processing steps to produce their final keys. We choose classical post-processing steps that optimise the key rate, but which may not be optimal in other aspects, e.g., computation time. The protocol and the analysis can easily be adapted for other choices of classical post-processing. 

We now describe the three post-processing steps, error correction, parameter estimation and privacy amplification, in detail.\footnote{We remark that in Step~\ref{prostep:choosing_est_test} of Protocol~\ref{pro:diqkd_chsh} Alice and Bob choose $T_i$ together (or exchange its value between them) in every round of the protocol and choose their inputs accordingly. This is in contrast to choosing Alice and Bob's input from a product distribution and then adding a sifting step, as usually done in QKD protocols. 
It follows from our proof technique that making $T_i$ public as we do does not compromise the security of the protocol.}

\paragraph{Error correction.} 
Alice and Bob use an error correction protocol $\mathrm{EC}$ to obtain identical raw keys $K_A$ and $K_B$ from their bits $\mathbf{A},\tilde{\mathbf{B}}$. In our analysis we use a protocol, based on universal hashing, which minimises the amount of leakage to the adversary~\cite{brassard1993secret,renner2005simple}  (see also Section 3.3.2 in~\cite{beaudry2015assumptions} for details). To implement this protocol Alice chooses a hash function and sends the chosen function and the hashed value of her bits to Bob. We denote this classical communication by $O$. Bob uses $O$, together with his prior knowledge $\tilde{\mathbf{B}}\mathbf{XYT}$, to compute a guess $\hat{\mathbf{A}}$ for Alice's bits $\mathbf{A}$.
If $\mathrm{EC}$ raises a ``fail'' flag Alice and Bob abort; in an honest implementation this happens with probability at most $\varepsilon_{\mathrm{EC}}^c$. 
The probability that Alice and Bob do not abort but hold different raw keys $K_A=\mathbf{A}$ and $K_B=\hat{\mathbf{A}}\neq K_A$ is at most $\varepsilon_{\mathrm{EC}}$.

Due to the communication from Alice to Bob $\mathrm{leak_{EC}}$ bits of information are leaked to the adversary. The following guarantee follows for the described protocol~\cite{renner2005simple}:
\begin{equation}\label{eq:ec_leakage}
	\mathrm{leak_{EC}} \leq H_{0}^{\varepsilon'_{\mathrm{EC}}}\left(\mathbf{A}|\tilde{\mathbf{B}}\mathbf{XYT}\right) + \log\left(\frac{1}{\varepsilon_{\mathrm{EC}}}\right)  \;,
\end{equation}
for $\varepsilon_{\mathrm{EC}}^c = \varepsilon'_{\mathrm{EC}} + \varepsilon_{\mathrm{EC}}$ and where $H_{0}^{\varepsilon'_{\mathrm{EC}}}(\mathbf{A}|\tilde{\mathbf{B}}\mathbf{XYT})$ is evaluated on the state in an \emph{honest} implementation of the protocol. For example, for quantum channels with an i.i.d.\@ noise model $H_{0}^{\varepsilon'_{\mathrm{EC}}}\left(\mathbf{A}|\tilde{\mathbf{B}}\mathbf{XYT}\right)$ can be bounded by above using the asymptotic equipartition property~\cite{tomamichel2009fully} (see Equation~\eqref{eq:explicit_leak_ec} below for the explicit bound in that case).  
If a larger fraction of errors occur when running the actual DIQKD protocol (for instance due to adversarial interference) the error correction might not succeed, as Bob will not have a sufficient amount of information to obtain a good guess of Alice's bits. If so this will be detected with probability at least $ 1 - \varepsilon_{\mathrm{EC}}$ and the protocol will abort.

\paragraph{Parameter estimation.}
After the error correction step, Bob has all of the relevant information to perform parameter estimation from his data alone, without any further communication with Alice. Using $\tilde{\mathbf{B}}$ and $K_B$, Bob sets $C_i = w_{\text{CHSH}}\left(\hat{A}_i,\tilde{B}_i,X_i,Y_i\right)=w_{\text{CHSH}}\left({K_B}_i,\tilde{B}_i,X_i,Y_i\right)$ for the test rounds and $C_i = \perp$ otherwise. He aborts if the fraction of successful game rounds is too low, that is, if $\sum_j C_j < \left(\omega_{\mathrm{exp}}\gamma - \delta_{\mathrm{est}}\right) \cdot n\;$.

As Bob does the estimation using his guess of Alice's bits, the probability of aborting in this step in an honest implementation, $\varepsilon_{\mathrm{PE}}^c$, is bounded by
\begin{align}
	\varepsilon_{\mathrm{PE}}^c&\leq \Pr\Big( \sum_j C_j < \left(\omega_{\mathrm{exp}} \gamma - \delta_{\mathrm{est}}\right) \cdot \sum_j T_j \Big| K_A=K_B \Big) + \Pr \big( K_A \neq K_B \text{ and $\mathrm{EC}$ does not abort}\big) \nonumber \\
	&\leq \varepsilon_{\mathrm{EA}}^c + \varepsilon_{\mathrm{EC}} \;, \label{eq:pe_completeness}
\end{align}
since conditioned on the error correction protocol succeeding the probability of aborting in the honest case is exactly as in the entropy accumulation protocol (Protocol~\ref{pro:randomness_generation}).

\paragraph{Privacy amplification.} 
Finally, Alice and Bob use a (quantum-proof) privacy amplification protocol $\mathrm{PA}$ (which takes some random seed $S$ as input) to create their final keys $\tilde{K}_A$ and $\tilde{K}_B$ of length $\ell$, which are close to ideal keys, i.e., uniformly random and independent of the adversary's knowledge. 

For simplicity we use universal hashing~\cite{renner2005universally} as the privacy amplification protocol in the analysis below. Any other quantum-proof strong extractor, e.g., Trevisan's extractor~\cite{de2012trevisan}, can be used for this task and the analysis can be easily adapted. 

The secrecy of the final key depends only on the privacy amplification protocol used and the value of $H^{\varepsilon_{\text{s}}}_{\min} ( \mathbf{A} | \mathbf{X Y T} O E )$, evaluated on the state at the end of the protocol, conditioned on not aborting. For universal hashing, for every $\varepsilon_{\mathrm{PA}},\varepsilon_{\text{s}}\in(0,1)$ a secure key of maximal length 
\begin{equation}\label{eq:universal_hashing_length}
	\ell = H^{\varepsilon_{\text{s}}}_{\min} ( \mathbf{A} | \mathbf{X Y T} O E ) -2\log\frac{1}{\varepsilon_{\mathrm{PA}}} 
\end{equation}
is produced with probability\footnote{$\varepsilon_{\mathrm{PA}}$ is the error probability of the extractor when it is applied on a normalised state satisfying the relevant min-entropy bound. For universal hashing, when only a bound on the smooth min-entropy is given the smoothing parameter $\varepsilon_{\text{s}}$ should be added to the error $\varepsilon_{\mathrm{PA}}$. When working with other extractors one should adapt the parameters accordingly; see~\cite[Section 4.3]{arnon2015quantum}.} at least $1-\varepsilon_{\mathrm{PA}} - \varepsilon_{\text{s}}$.

\begin{algorithm}[t]
\caption{CHSH-based DIQKD protocol}
\label{pro:diqkd_chsh}
\begin{algorithmic}[1]
	\STATEx \textbf{Arguments:} 
		\STATEx\hspace{\algorithmicindent} $D$ -- untrusted device of two components that can play CHSH repeatedly
		\STATEx\hspace{\algorithmicindent} $n \in \mathbb{N}_+$ -- number of rounds
		\STATEx\hspace{\algorithmicindent} $\gamma \in (0,1]$ -- expected fraction of test rounds 

		\STATEx\hspace{\algorithmicindent} $\omega_{\mathrm{exp}}$ -- expected winning probability in an honest (perhaps noisy) implementation    
		\STATEx\hspace{\algorithmicindent} $\delta_{\mathrm{est}} \in (0,1)$ -- width of the statistical confidence interval for the estimation test
		
		\STATEx\hspace{\algorithmicindent} $\mathrm{EC}$ -- error correction protocol which leaks $\mathrm{leak_{EC}}$ bits and has error probability $\varepsilon_{\mathrm{EC}}$
		 \STATEx\hspace{\algorithmicindent} $\mathrm{PA}$ -- privacy amplification protocol with error probability $\varepsilon_{\mathrm{PA}}$
		
	\STATEx
	
	\STATE For every round $i\in[n]$ do Steps~\ref{prostep:choose_Ti}-\ref{prostep:use_device_qkd}:
		\STATE\hspace{\algorithmicindent} Alice and Bob choose a random $T_i\in\{0,1\}$ such that $\Pr(T_i=1)=\gamma$.  \label{prostep:choosing_est_test}
		\STATE\hspace{\algorithmicindent} If $T_i=0$ Alice and Bob choose $(X_i,Y_i)=(0,2)$ and otherwise $X_i,Y_i\in \{0,1\}$ uniformly at random.  
		\STATE\hspace{\algorithmicindent} Alice and Bob use $D$ with $X_i,Y_i$ and record their outputs as $A_i$ and $\tilde{B}_i$ respectively. \label{prostep:use_device_qkd}
	
	\STATEx
	
	\STATE \textbf{Error correction:} Alice and Bob apply the error correction protocol $\mathrm{EC}$, communicating $O$ in the process. If $\mathrm{EC}$ aborts they abort the protocol. Otherwise, they obtain raw keys denoted by $K_A$ and $K_B$. \label{prostep:ec}
	\STATE \textbf{Parameter estimation:} Using $\tilde{\mathbf{B}}$ and $K_B$, Bob sets $C_i = w_{\text{CHSH}}\left({K_B}_i,\tilde{B}_i,X_i,Y_i\right)$ for the test rounds and $C_i = \perp$ otherwise. He aborts if $\sum_j C_j < \left(\omega_{\mathrm{exp}}\gamma - \delta_{\mathrm{est}}\right) \cdot n;$. \label{prostep:abort_chsh_qkd}
	\STATE \textbf{Privacy amplification:} Alice and Bob apply the privacy amplification protocol $\mathrm{PA}$ on $K_A$ and $K_B$ to create their final keys $\tilde{K}_A$ and $\tilde{K}_B$ of length $\ell$ as defined in Equation~\eqref{eq:key_length_def}. \label{prostep:pa}	
\end{algorithmic}
\end{algorithm}

The main theorem of this section is the following security result for Protocol~\ref{pro:diqkd_chsh}:

\begin{thm}\label{thm:QKD_security}
	The DIQKD protocol given in Protocol~\ref{pro:diqkd_chsh} is $(\varepsilon_{\mathrm{QKD}}^s,\varepsilon_{\mathrm{QKD}}^c,\ell)$-secure according to Definition~\ref{def:security_QKD}, with $\varepsilon^s_{\mathrm{QKD}} \leq  \varepsilon_{\mathrm{EC}} + \varepsilon_{\mathrm{PA}} + \varepsilon_{\text{s}} + \varepsilon_{\mathrm{EA}}$, $\varepsilon^c_{\mathrm{QKD}} \leq \varepsilon^c_{EC} + \varepsilon_{\mathrm{EA}}^c + \varepsilon_{\mathrm{EC}}$, and
	\begin{equation}\label{eq:key_length_def}
	\begin{split}
		\ell = \; n \cdot \eta_{\mathrm{opt}}\left(\varepsilon_{\text{s}}/4,\varepsilon_{\mathrm{EA}} + \varepsilon_{\mathrm{EC}}\right) - \mathrm{leak_{EC}} - 3 \log\left(1-\sqrt{1-(\varepsilon_{\text{s}}/4)^2}\right)  \\
		- \gamma n - \sqrt{n} 2\log7\sqrt{1-2\log \left( \varepsilon_{\text{s}}/4 \cdot \left(\varepsilon_{\mathrm{EA}} + \varepsilon_{\mathrm{EC}}\right) \right)}   -2\log\left(\varepsilon^{-1}_{\mathrm{PA}}\right) \;,
	\end{split}
	\end{equation}
	where $\eta_{\mathrm{opt}}$ is specified in Equation~\eqref{eq:eta_opt}. 
\end{thm}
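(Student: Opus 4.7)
The proof decomposes into the three standard pieces: completeness, correctness, and secrecy. Completeness is a union bound over the two abort-inducing steps of Protocol~\ref{pro:diqkd_chsh}. The error correction step aborts with probability at most $\varepsilon^{c}_{\mathrm{EC}}$ by definition of $\mathrm{EC}$. Conditional on $\mathrm{EC}$ having succeeded (which in the honest run fails with probability at most $\varepsilon_{\mathrm{EC}}$), Bob's estimation in Step~\ref{prostep:abort_chsh_qkd} computes $C_i$ from $K_B = K_A$ and is statistically identical to the test of the entropy accumulation protocol, so by Lemma~\ref{lem:ea_completeness} it aborts with probability at most $\varepsilon^{c}_{\mathrm{EA}}$. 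Adding these three contributions yields $\varepsilon^{c}_{\mathrm{QKD}}$. Correctness is immediate: after a successful $\mathrm{EC}$ one has $K_A = K_B$, hence $\tilde{K}_A = \tilde{K}_B$, and so $\Pr(\tilde{K}_A \neq \tilde{K}_B) \leq \varepsilon_{\mathrm{EC}}$.

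The substantive work is soundness (secrecy). Let $\Omega$ be the event that Protocol~\ref{pro:diqkd_chsh} does not abort in Steps~\ref{prostep:ec}--\ref{prostep:abort_chsh_qkd}. The plan is to lower bound $H^{\varepsilon_{\text{s}}/2}_{\min}(\mathbf{A}\mid \mathbf{XYT}\,O\,E)_{\rho_{|\Omega}}$ and then apply leftover hashing. The first step is to relate $\Omega$ to the entropy accumulation event of~\eqref{eq:good_event_def}: on the set $\{K_A = K_B\}$ (which has probability at least $1-\varepsilon_{\mathrm{EC}}$) the two events coincide, so we may invoke Corollary~\ref{cor:main_generation} with error parameter $\varepsilon_{\mathrm{EA}}+\varepsilon_{\mathrm{EC}}$: either $\Pr(\Omega)\leq\varepsilon_{\mathrm{EA}}+\varepsilon_{\mathrm{EC}}$ (in which case secrecy is trivial), or
\[
  H^{\varepsilon_{\text{s}}/4}_{\min}\bigl(\mathbf{A}\mathbf{B}\bigm|\mathbf{XYT}E\bigr)_{\rho_{|\Omega}} \,>\, n\,\eta_{\mathrm{opt}}\bigl(\varepsilon_{\text{s}}/4,\,\varepsilon_{\mathrm{EA}}+\varepsilon_{\mathrm{EC}}\bigr).
\]
Here $\mathbf{B}$ is the truncated record in which $B_i=\perp$ whenever $T_i=0$; it differs from Bob's actual transcript $\tilde{\mathbf{B}}$ only by a function of $\mathbf{T}$, so the bound applies verbatim to the DIQKD state.

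The next step removes $\mathbf{B}$ from the conditioning side. I invoke the max-entropy half of Theorem~\ref{thm:eat} with the trivial max-tradeoff function $f_{\max}(p)=\gamma$: indeed, in every round $H(B_i\mid X_iY_iT_iF_iR') \leq \Pr[T_i=1]\cdot\log|B_i| = \gamma$ irrespective of the device's strategy, and $\|\triangledown f_{\max}\|_\infty=0$. With $d_{B_i}=3$ this yields
\[
  H^{\varepsilon_{\text{s}}/4}_{\max}\bigl(\mathbf{B}\bigm|\mathbf{XYT}E\bigr)_{\rho_{|\Omega}} \,<\, \gamma n + \sqrt{n}\,\cdot 2\log 7\,\sqrt{1-2\log\bigl(\varepsilon_{\text{s}}/4\cdot(\varepsilon_{\mathrm{EA}}+\varepsilon_{\mathrm{EC}})\bigr)},
\]
which is the source of the $-\gamma n$ and $-\sqrt{n}\,2\log 7\,\sqrt{\cdot}$ terms in~\eqref{eq:key_length_def}. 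A smooth-entropy chain rule of the form $H^{\varepsilon_1+\varepsilon_2+\varepsilon_3}_{\min}(\mathbf{A}\mid \mathbf{XYT}E) \geq H^{\varepsilon_1}_{\min}(\mathbf{AB}\mid \mathbf{XYT}E) - H^{\varepsilon_2}_{\max}(\mathbf{B}\mid \mathbf{XYT}E) - 3\log(1/\varepsilon_3)$ (taking $\varepsilon_3$ such that $-3\log\varepsilon_3$ matches the $-3\log(1-\sqrt{1-(\varepsilon_{\text{s}}/4)^2})$ smoothing loss) combines these two bounds.

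Finally, the error-correction communication $O$ is classical of length at most $\mathrm{leak_{EC}}$ bits, so the standard relation $H^{\varepsilon_{\text{s}}/2}_{\min}(\mathbf{A}\mid \mathbf{XYT}\,O\,E) \geq H^{\varepsilon_{\text{s}}/2}_{\min}(\mathbf{A}\mid \mathbf{XYT}E) - \mathrm{leak_{EC}}$ applies, and universal hashing~\eqref{eq:universal_hashing_length} then extracts a key of length $\ell$ given by~\eqref{eq:key_length_def} with secrecy error $\varepsilon_{\mathrm{PA}}+\varepsilon_{\text{s}}/2$. Accounting for correctness ($\varepsilon_{\mathrm{EC}}$) and the entropy-accumulation slack ($\varepsilon_{\mathrm{EA}}$, absorbed via $(1-\Pr(\Omega))\|\cdot\|_1$) gives $\varepsilon^{s}_{\mathrm{QKD}}\leq \varepsilon_{\mathrm{EC}}+\varepsilon_{\mathrm{PA}}+\varepsilon_{\text{s}}+\varepsilon_{\mathrm{EA}}$. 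The main obstacle is bookkeeping: one must carefully split the smoothing budget $\varepsilon_{\text{s}}$ across the min-EAT, max-EAT, chain rule, and privacy amplification, and cleanly transfer entropy bounds between the two slightly different abort events (using $K_A$ vs.\ using $K_B$) at the cost of the $\varepsilon_{\mathrm{EC}}$ slack folded into the second argument of $\eta_{\mathrm{opt}}$.
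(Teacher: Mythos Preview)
Your proposal is correct and follows essentially the same approach as the paper: completeness via the union bound and Lemma~\ref{lem:ea_completeness}, correctness from $\mathrm{EC}$, and secrecy by coupling the DIQKD non-abort event to the entropy-accumulation event through $\{K_A=K_B\}$, then combining Corollary~\ref{cor:main_generation} with the max-entropy EAT (trivial $f_{\max}=\gamma$) via the smooth chain rule, subtracting $\mathrm{leak_{EC}}$, and finishing with universal hashing. The paper's proof (Lemmas~\ref{lem:QKD_complete}, \ref{lem:smooth_bound_qkd}, \ref{lem:QKD_sound}) differs only in cosmetic ordering---it strips off $O$ before applying the chain rule rather than after, and it makes the event bookkeeping explicit by introducing the three events $\Omega$, $\hat{\Omega}$, $\tilde{\Omega}$---exactly the ``main obstacle'' you flag at the end.
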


In Section~\ref{sec:rates_comparison_qkd} we plot the resulting key rates, $\ell/n$, for different choices of parameters.

Theorem~\ref{thm:QKD_security} follows by combining Lemmas~\ref{lem:QKD_complete} and~\ref{lem:QKD_sound} that we prove in the following sections. 

\subsection{The honest implementation}\label{sec:honest_qkd_imp}

The honest (but possibly noisy) implementation of the protocol is one where the device $D$ performs in \emph{every round~$i$} of the protocol the measurements $\mathcal{M}_{x_i}^{a_i} \otimes \mathcal{M}_{y_i}^{b_i}$ on Alice and Bob's state $\rho_{Q_AQ_B}$. The state and measurements are such that the winning probability achieved in the CHSH game in a single round is $\omega_{\mathrm{exp}}$.\footnote{Note that in our notation, the noise that affects the winning probability in the CHSH game is already included in $\omega_{\mathrm{exp}}$.} For the measurements $(X_i,Y_i)=(0,2)$ we denote the quantum bit error rate, i.e., the probability that $A_i\neq B_i$ while using these measurements, by $Q$. 
Thus, in the honest case we assume the device $D$ behaves in an i.i.d.\@ manner (and in particular an i.i.d.\@ noise model for the quantum channels used in the protocol): it is initialised in an i.i.d.\@  bipartite state, $\rho_{Q_AQ_B}^{\otimes n}$, on which it makes i.i.d.\@ measurements.

As an example, one possible realisation of such an implementation is the following. Alice and Bob share the two-qubit Werner state $\rho_{Q_AQ_B} = (1-\nu) \ket{\phi^+}\bra{\phi^+} + \nu\mathbb{I}/4$ for $\ket{\Phi^+} = 1/\sqrt{2}\left(\ket{00}+\ket{11}\right)$ and $ \nu \in [0,1]$. The state $\rho_{Q_AQ_B}$ arises, e.g., from the state $\ket{\Phi^+}$ after going through a depolarisation channel. 
For every $i\in[n]$, Alice's measurements $X_i=0$ and $X_i=1$ correspond to $\sigma_z$ and $\sigma_x$ respectively and Bob's measurements $Y_i=0$, $Y_i=1$, and $Y_i=2$ to $\frac{\sigma_z + \sigma_x}{\sqrt{2}}$, $\frac{\sigma_z - \sigma_x}{\sqrt{2}}$ and $\sigma_z$ respectively. The winning probability in the CHSH game (restricted to $X_i,Y_i\in\{0,1\}$) using these measurements on $\rho_{Q_AQ_B}$ is $\omega_{\mathrm{exp}}=\frac{2+\sqrt{2}(1-\nu)}{4}$ and $Q=\frac{\nu}{2}$. 

\subsection{Completeness}

The completeness of the protocol follows from the honest i.i.d.\@ implementation described in Section~\ref{sec:honest_qkd_imp} and the completeness of the entropy accumulation protocol as shown in Section~\ref{sec:EA_completeness}.

\begin{lemma}\label{lem:QKD_complete}
	Protocol~\ref{pro:diqkd_chsh} is complete with completeness error $\varepsilon^c_{\mathrm{QKD}} \leq \varepsilon^c_{EC} + \varepsilon_{\mathrm{EA}}^c + \varepsilon_{\mathrm{EC}} $. That is, the probability that the protocol aborts for an honest implementation of the device $D$ is at most $\varepsilon^c_{\mathrm{QKD}}$.
\end{lemma}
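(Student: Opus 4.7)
The plan is to observe that, in the honest implementation described in Section~\ref{sec:honest_qkd_imp}, the protocol can abort only in two places: in the error correction step (Step~\ref{prostep:ec}) or in the parameter estimation step (Step~\ref{prostep:abort_chsh_qkd}). The privacy amplification step always succeeds in producing an output of length $\ell$ (it may produce a key that is far from ideal if the min-entropy bound fails, but this is a soundness issue and is not counted toward completeness). By the union bound, it therefore suffices to bound the two abort probabilities separately and sum them.

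The first bound is immediate from the specification of the error-correction subprotocol: by assumption, when applied on an honest i.i.d.\@ implementation, the protocol $\mathrm{EC}$ aborts with probability at most $\varepsilon^c_{EC}$. The second bound is exactly the content of Equation~\eqref{eq:pe_completeness}. I would recall its argument for completeness: decompose the event that Bob's test fails as the union of (i) the test fails while $K_A=K_B$, and (ii) the error correction produces $K_A\neq K_B$ without itself aborting. The event (ii) has probability at most $\varepsilon_{\mathrm{EC}}$ by the guarantee on $\mathrm{EC}$. For event (i), since $K_B=K_A=\mathbf{A}$ the variable $C_i$ that Bob computes equals $w_{\text{CHSH}}(A_i,\tilde{B}_i,X_i,Y_i)$, which is exactly the variable $C_i$ generated in Protocol~\ref{pro:randomness_generation} when the CHSH game is instantiated as in Section~\ref{sec:entropy-chsh}; hence Lemma~\ref{lem:ea_completeness} applied to the honest i.i.d.\@ strategy (which achieves expected winning probability $\omega_{\mathrm{exp}}$ per round) bounds the probability of this event by $\varepsilon^c_{EA}$.

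Putting the two contributions together by a union bound yields
\[
\varepsilon^c_{\mathrm{QKD}} \;\leq\; \varepsilon^c_{EC} + \bigl(\varepsilon_{\mathrm{EA}}^c + \varepsilon_{\mathrm{EC}}\bigr),
\]
which is exactly the claimed bound. There is no genuine obstacle here: the work has already been done in Lemma~\ref{lem:ea_completeness} (where Hoeffding's inequality gave the exponential bound on the statistical fluctuation of the empirical winning frequency in the i.i.d.\@ case) and in the specification of $\mathrm{EC}$. The only point that requires a moment of care is the conditioning subtlety in event (i)--(ii), namely that one must condition on $K_A=K_B$ before applying the i.i.d.\@ concentration bound, since on that event Bob's parameter estimation is statistically identical to running Protocol~\ref{pro:randomness_generation} with Alice's genuine outputs.
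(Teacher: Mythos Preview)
Your proposal is correct and follows essentially the same approach as the paper: identify the two possible abort points (error correction and parameter estimation), apply the union bound, and invoke Equation~\eqref{eq:pe_completeness} together with Lemma~\ref{lem:ea_completeness} for the parameter-estimation contribution. Your recapitulation of the argument behind Equation~\eqref{eq:pe_completeness} is slightly more detailed than the paper's own proof, but the logic is identical.
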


\begin{proof}
There are two steps at which	Protocol~\ref{pro:diqkd_chsh} may abort: after the error correction (Step~\ref{prostep:ec}) or in the Bell violation estimation (Step~\ref{prostep:abort_chsh_qkd}). By the union bound, the total probability of aborting is at most the probability of aborting in each of these steps. Using this and Equation~\eqref{eq:pe_completeness} we get:
	\[
		\varepsilon^c_{\mathrm{QKD}} \leq \varepsilon^c_{EC} + \varepsilon^c_{PE}  \leq \varepsilon^c_{EC} + \varepsilon_{\mathrm{EA}}^c + \varepsilon_{\mathrm{EC}} \;. \qedhere
	\]
\end{proof}

\subsection{Soundness}

To establish soundness, first note that by definition, as long as the protocol does not abort it produces a key of length $\ell$. Therefore it remains to verify correctness, which depends on the error correction step, and security, which is based on the privacy amplification step. To prove security we start with Lemma~\ref{lem:smooth_bound_qkd}, in which we assume that the error correction step is successful. We then use it to prove soundness in Lemma~\ref{lem:QKD_sound}.

Let $\tilde{\Omega}$ denote the event of Protocol~\ref{pro:diqkd_chsh} not aborting \emph{and} the $\mathrm{EC}$ protocol being successful, and let $\tilde{\rho}_{\mathbf{A}\tilde{\mathbf{B}}\mathbf{XYT}OE|\tilde{\Omega}}$ be the state at the end of the protocol, conditioned on this event.

Success of the privacy amplification step relies on the min-entropy $H^{\varepsilon_{\text{s}}}_{\min} ( \mathbf{A} | \mathbf{X Y T} O E )_{\tilde{\rho}_{|\tilde{\Omega}}}$ being sufficiently large.  The following lemma connects this quantity to 
 $H^{\frac{\varepsilon_{\text{s}}}{4}}_{\min} ( \mathbf{AB} | \mathbf{X Y T}  E )_{\rho_{|\Omega}}$, on which a lower bound is provided by Corollary~\ref{cor:main_generation}. 

\begin{lemma}\label{lem:smooth_bound_qkd}
	For any device $D$, let $\tilde{\rho}$ be the state generated in Protocol~\ref{pro:diqkd_chsh} right before the privacy amplification step, Step~\ref{prostep:pa}. Let $\tilde{\rho}_{|\tilde{\Omega}}$ be the state conditioned on not aborting the protocol and success of the $\mathrm{EC}$ protocol. Then, for any $\varepsilon_{\mathrm{EA}},\varepsilon_{\mathrm{EC}},\varepsilon_{\text{s}}\in (0,1)$, either the protocol aborts with probability greater than $1-\varepsilon_{\mathrm{EA}} - \varepsilon_{\mathrm{EC}}$ or
	\begin{equation}\label{eq:final_min_entropy_bound_qkd}
	\begin{split}
		H^{\varepsilon_{\text{s}}}_{\min} \left( \mathbf{A} | \mathbf{X Y T} O E \right)_{\tilde{\rho}_{|\tilde{\Omega}}} \geq n \cdot \eta_{\mathrm{opt}}\left(\varepsilon_{\text{s}}/4,\varepsilon_{\mathrm{EA}} + \varepsilon_{\mathrm{EC}}\right) - \mathrm{leak_{EC}}  -  3 \log\left(1-\sqrt{1-(\varepsilon_{\text{s}}/4)^2}\right)  \\
		- \gamma n - \sqrt{n} 2\log7\sqrt{1-2\log \left( \varepsilon_{\text{s}}/4 \cdot \left(\varepsilon_{\mathrm{EA}} + \varepsilon_{\mathrm{EC}}\right) \right)} \;.
	\end{split}
	\end{equation}
\end{lemma}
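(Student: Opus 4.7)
The plan is to combine the entropy-accumulation bound of Corollary~\ref{cor:main_generation} with a chain rule for smooth entropies and the max-entropy version of the EAT (Theorem~\ref{thm:eat}), and to relate the DIQKD state $\tilde{\rho}_{|\tilde{\Omega}}$ to the entropy-accumulation state $\rho_{|\Omega}$. First, if $\Pr[\tilde{\Omega}]\leq\varepsilon_{\mathrm{EA}}+\varepsilon_{\mathrm{EC}}$ the abort-clause of the lemma is satisfied, so in what follows I assume $\Pr[\tilde{\Omega}]>\varepsilon_{\mathrm{EA}}+\varepsilon_{\mathrm{EC}}$. The underlying state of the device and of Eve is identical in the DIQKD and in the entropy-accumulation protocols up to the post-processing that sets $B_i=\perp$ on generation rounds, so $\tilde{\rho}$ and the state $\rho$ of Corollary~\ref{cor:main_generation} agree on the marginal over $\mathbf{A}\mathbf{X}\mathbf{Y}\mathbf{T}E$. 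Moreover, conditioned on the event $\Omega_{\mathrm{EC}}=\{K_B=\mathbf{A}\}$ the DIQKD parameter-estimation test coincides with the estimation event $\Omega$ of the entropy-accumulation protocol, so $\tilde{\Omega}=\Omega\cap\Omega_{\mathrm{EC}}$ and in particular $\Pr[\Omega]\geq\Pr[\tilde{\Omega}]>\varepsilon_{\mathrm{EA}}+\varepsilon_{\mathrm{EC}}$. Corollary~\ref{cor:main_generation} applied with error parameter $\varepsilon_{\mathrm{EA}}+\varepsilon_{\mathrm{EC}}$ and smoothing $\varepsilon_{\text{s}}/4$ then yields
\[
  H^{\varepsilon_{\text{s}}/4}_{\min}(\mathbf{A}\mathbf{B}\,|\,\mathbf{X}\mathbf{Y}\mathbf{T}E)_{\rho_{|\Omega}} \;>\; n\cdot\eta_{\mathrm{opt}}(\varepsilon_{\text{s}}/4,\varepsilon_{\mathrm{EA}}+\varepsilon_{\mathrm{EC}}).
\]

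Next, a standard chain rule for smooth entropies allows me to peel off Bob's register $\mathbf{B}$,
\[
  H^{\varepsilon_{\text{s}}}_{\min}(\mathbf{A}\,|\,\mathbf{X}\mathbf{Y}\mathbf{T}E) \;\geq\; H^{\varepsilon_{\text{s}}/4}_{\min}(\mathbf{A}\mathbf{B}\,|\,\mathbf{X}\mathbf{Y}\mathbf{T}E) \;-\; H^{\varepsilon_{\text{s}}/4}_{\max}(\mathbf{B}\,|\,\mathbf{A}\mathbf{X}\mathbf{Y}\mathbf{T}E) \;-\; 3\log\tfrac{1}{1-\sqrt{1-(\varepsilon_{\text{s}}/4)^2}},
\]
after which the max-entropy of $\mathbf{B}$ is controlled by the max-tradeoff version of the EAT. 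Because $B_i=\perp$ is forced whenever $T_i=0$, the bound $H(B_i\,|\,A_i X_i Y_i T_i R')\leq\gamma$ holds uniformly over the inputs to the EAT channel, so the constant map $f_{\max}(p)=\gamma$ is a valid max-tradeoff function with zero gradient. With $d_{B_i}=3$ and $\Pr[\Omega]>\varepsilon_{\mathrm{EA}}+\varepsilon_{\mathrm{EC}}$, Theorem~\ref{thm:eat} gives
\[
  H^{\varepsilon_{\text{s}}/4}_{\max}(\mathbf{B}\,|\,\mathbf{A}\mathbf{X}\mathbf{Y}\mathbf{T}E)_{\rho_{|\Omega}} \;<\; \gamma n + \sqrt{n}\cdot 2\log 7\cdot\sqrt{1-2\log\bigl(\tfrac{\varepsilon_{\text{s}}}{4}(\varepsilon_{\mathrm{EA}}+\varepsilon_{\mathrm{EC}})\bigr)}.
\]

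The resulting bound on $H^{\varepsilon_{\text{s}}}_{\min}(\mathbf{A}\,|\,\mathbf{X}\mathbf{Y}\mathbf{T}E)_{\rho_{|\Omega}}$ transfers to $\tilde{\rho}_{|\tilde{\Omega}}$ because the two states agree on the marginal $\mathbf{A}\mathbf{X}\mathbf{Y}\mathbf{T}E$ and the extra conditioning on the high-probability event $\Omega_{\mathrm{EC}}$ is already absorbed into the replacement $\varepsilon_{\mathrm{EA}}\mapsto\varepsilon_{\mathrm{EA}}+\varepsilon_{\mathrm{EC}}$ made above. Finally, the classical EC-communication register $O$ is stripped off using
\[
  H^{\varepsilon_{\text{s}}}_{\min}(\mathbf{A}\,|\,\mathbf{X}\mathbf{Y}\mathbf{T}\,O\,E)_{\tilde{\rho}_{|\tilde{\Omega}}} \;\geq\; H^{\varepsilon_{\text{s}}}_{\min}(\mathbf{A}\,|\,\mathbf{X}\mathbf{Y}\mathbf{T}\,E)_{\tilde{\rho}_{|\tilde{\Omega}}} \;-\; \mathrm{leak_{EC}},
\]
which is a consequence of the standard inequality $H^{\varepsilon}_{\min}(A|BC)\geq H^{\varepsilon}_{\min}(A|B)-\log|\mathrm{range}(C)|$ for classical $C$ combined with the bound~\eqref{eq:ec_leakage} on $\log|O|$. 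Chaining the displayed inequalities together yields the announced bound~\eqref{eq:final_min_entropy_bound_qkd}.

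The main obstacle I anticipate is the rigorous justification of the transfer of the bound from $\rho_{|\Omega}$ to $\tilde{\rho}_{|\tilde{\Omega}}$: the EC-success event $\Omega_{\mathrm{EC}}$ arises globally after all $n$ measurement rounds and is not of the product form covered directly by the EAT, so one must carefully verify that the loss in smoothing budget induced by the extra conditioning on $\Omega_{\mathrm{EC}}$ is indeed absorbed into the replacement $\varepsilon_{\mathrm{EA}}\mapsto\varepsilon_{\mathrm{EA}}+\varepsilon_{\mathrm{EC}}$ rather than incurring an additional correction term. The remaining steps are essentially modular applications of the chain rule and of the EAT.
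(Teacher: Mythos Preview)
Your overall structure matches the paper's, but there is a genuine gap precisely at the transfer step you flag at the end. You apply Corollary~\ref{cor:main_generation} and the max-entropy EAT to $\rho_{|\Omega}$ (conditioned only on passing parameter estimation), and then assert that the resulting bound transfers to $\tilde{\rho}_{|\tilde{\Omega}}$ because ``the two states agree on the marginal $\mathbf{A}\mathbf{X}\mathbf{Y}\mathbf{T}E$''. They do not: $\tilde{\Omega}$ includes the EC-success event $\{K_B=\mathbf{A}\}$, so the correct identification is that $\tilde{\rho}_{|\tilde{\Omega}}$ agrees on $\mathbf{A}\mathbf{X}\mathbf{Y}\mathbf{T}E$ with $\rho_{|\hat{\Omega}}$ where $\hat{\Omega}=\Omega\cap\{K_B=\mathbf{A}\}$, not with $\rho_{|\Omega}$. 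Conditioning further on $\{K_B=\mathbf{A}\}$ genuinely changes the state, and the substitution $\varepsilon_{\mathrm{EA}}\mapsto\varepsilon_{\mathrm{EA}}+\varepsilon_{\mathrm{EC}}$ you made inside $\eta_{\mathrm{opt}}$ does not compensate for that: it only tightens the lower bound on $p_\Omega$ appearing in the EAT error term $v$; it does not license passing to a sub-event after the bound has been established.

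The paper's resolution is to bypass the transfer entirely: it applies both the min- and the max-entropy EAT \emph{directly} to the state conditioned on $\hat{\Omega}$, using $\Pr[\hat{\Omega}]>\varepsilon_{\mathrm{EA}}+\varepsilon_{\mathrm{EC}}$ in the $v$-term. This requires (as the paper notes in a footnote) the slightly more general form of the EAT from~\cite{dupuis2016entropy} in which the conditioning event may be defined through $\mathbf{A},\mathbf{B},\mathbf{X},\mathbf{Y}$ and not only $\mathbf{C}$. A secondary point: before invoking the max-EAT the paper first uses strong subadditivity to pass from $H^{\varepsilon_{\text{s}}/4}_{\max}(\mathbf{B}\,|\,\mathbf{A}\mathbf{X}\mathbf{Y}\mathbf{T}E)$ to $H^{\varepsilon_{\text{s}}/4}_{\max}(\mathbf{B}\,|\,\mathbf{T}E)$, because with $I_i=T_i$ the Markov condition $B_{1\dots i-1}\leftrightarrow T_{1\dots i-1}E\leftrightarrow T_i$ holds trivially, whereas including $A_i$ in $I_i$ as you do it need not.
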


\begin{proof}

	Consider the following events:
	\begin{enumerate}
		\item $\Omega$: the event of not aborting in the entropy accumulation protocol, Protocol~\ref{pro:randomness_generation}. This happens when the Bell violation, calculated using Alice and Bob's outputs (and inputs), is sufficiently high. 
		\item $\hat{\Omega}$: Suppose Alice and Bob run  Protocol~\ref{pro:randomness_generation}, and then execute the $\mathrm{EC}$ protocol. The event $\hat{\Omega}$ is defined by $\Omega$ \emph{and} $K_B = \mathbf{A}$.	
		\item $\tilde{\Omega}$: the event of not aborting the DIQKD protocol, Protocol~\ref{pro:diqkd_chsh}, \emph{and} $K_B = \mathbf{A}$. 
	\end{enumerate}
	The state $\rho_{|\hat{\Omega}}$ then denotes the state at the end of Protocol~\ref{pro:randomness_generation} conditioned on $\hat{\Omega}$. 

	As we are only interested in the case where the $\mathrm{EC}$ protocol outputs the correct guess of Alice's bits, that is $K_B = \mathbf{A}$ (which happens with probability $1-\varepsilon_{\mathrm{EC}}$), we have 
	$\tilde{\rho}_{\mathbf{A X Y T}  E|\tilde{\Omega}} = \rho_{\mathbf{A X Y T}  E|\hat{\Omega}}$ (note $\tilde{\mathbf{B}}$ and $\mathbf{B}$ were traced out from $\tilde{\rho}$ and $\rho$ respectively). Hence, 
	\begin{equation}\label{eq:marginals_equiv}
		H^{\varepsilon_{\text{s}}}_{\min} \left( \mathbf{A} | \mathbf{X Y T}  E \right)_{\tilde{\rho}_{|\tilde{\Omega}}} = H^{\varepsilon_{\text{s}}}_{\min} \left( \mathbf{A} | \mathbf{X Y T}  E \right)_{\rho_{|\hat{\Omega}}} \;.
	\end{equation}

	Using the chain rule given in~\cite[Lemma 6.8]{tomamichel2015quantum} together with Equation~\eqref{eq:marginals_equiv} we get  that 
	\begin{align}
		H^{\varepsilon_{\text{s}}}_{\min} \left( \mathbf{A} | \mathbf{X Y T} O E \right)_{\tilde{\rho}_{|\tilde{\Omega}}} &\geq H^{\varepsilon_{\text{s}}}_{\min} \left( \mathbf{A} | \mathbf{X Y T}  E \right)_{\tilde{\rho}_{|\tilde{\Omega}}} - \mathrm{leak_{EC}} \nonumber \\
		&= H^{\varepsilon_{\text{s}}}_{\min} \left( \mathbf{A} | \mathbf{X Y T}  E \right)_{\rho_{|\hat{\Omega}}} - \mathrm{leak_{EC}} \;. \label{eq:entropy_chain_1}
	\end{align}
	
	To apply Corollary~\ref{cor:main_generation} it remains to relate $H^{\varepsilon_{\text{s}}}_{\min} \left( \mathbf{A} | \mathbf{X Y T}  E \right)_{\rho_{|\hat{\Omega}}}$ to $H^{\varepsilon'_{\text{s}}}_{\min} \left( \mathbf{AB} | \mathbf{X Y T}  E \right)_{\rho_{|\hat{\Omega}}}$ for some~$\varepsilon'_{\text{s}}$. For this we first write
	\begin{align*}
		H^{\varepsilon_{\text{s}}}_{\min} \left( \mathbf{A} | \mathbf{X Y T}  E \right)_{\rho_{|\hat{\Omega}}} &\geq H^{\frac{\varepsilon_{\text{s}}}{4}}_{\min} \left( \mathbf{AB} | \mathbf{X Y T}  E \right)_{\rho_{|\hat{\Omega}}} - H^{\frac{\varepsilon_{\text{s}}}{4}}_{\max} \left( \mathbf{B} | \mathbf{ A X Y T}  E \right)_{\rho_{|\hat{\Omega}}}  - 3 \log\left(1-\sqrt{1-(\varepsilon_{\text{s}}/4)^2}\right) \\
		&\geq H^{\frac{\varepsilon_{\text{s}}}{4}}_{\min} \left( \mathbf{AB} | \mathbf{X Y T}  E \right)_{\rho_{|\hat{\Omega}}} - H^{\frac{\varepsilon_{\text{s}}}{4}}_{\max} \left( \mathbf{B} | \mathbf{T} E  \right)_{\rho_{|\hat{\Omega}}}  - 3 \log\left(1-\sqrt{1-(\varepsilon_{\text{s}}/4)^2}\right) \;,
	\end{align*}
	where the first inequality is due to the chain rule~\cite[Equation (6.57)]{tomamichel2015quantum} and the second is due to strong sub-additivity of the smooth max-entropy.

	One can now apply the EAT to upper bound $H^{\frac{\varepsilon_{\text{s}}}{4}}_{\max} \left( \mathbf{B} | \mathbf{ T}  E \right)_{\rho_{|\hat{\Omega}}}$ in the following way. We use Theorem~\ref{thm:eat} with the replacements $\mathbf{AB} \rightarrow \mathbf{B}, \; \mathbf{I}\rightarrow \mathbf{T}, \; E \rightarrow E$. The Markov conditions $B_{1,\dotsc, i-1} \leftrightarrow T_{1,\dotsc, i-1} E \leftrightarrow T_i$ then trivially hold and the condition on the max-tradeoff function reads
	\[
		f_{\max}(p) \geq \sup_{\sigma_{R_{i-1}R'}:\mathcal{N}_i(\sigma)} H\left(  B_i | T_i R' \right)_{\mathcal{N}_i(\sigma)} \;.
	\]
	By the definition of the EAT channels $\{\mathcal{N}_i\}$, $B_i \neq \perp$ only for $T_i = 1$, which happens with probability~$\gamma$. Hence, for any state $\sigma_{R_{i-1}R'}$ we have, 
	\begin{align*}
		 H\left(  B_i | T_i R' \right)_{\mathcal{N}_i(\sigma)} \leq  H\left(  B_i | T_i \right)_{\mathcal{N}_i(\sigma)} \leq \gamma 
	\end{align*}
	and the max-tradeoff function is simply $f_{\max}(p) = \gamma$ for any $p$ (and thus $\|  \triangledown f_{\max} \|_\infty =0$). Applying\footnote{Here a slightly more general version of the EAT than the one given in this paper is needed, in which the event $\Omega$ can be defined via $A,B,X,Y$ and not only $C$; see~\cite{dupuis2016entropy} for the details.} Theorem~\ref{thm:eat} with this choice of $f_{\max}$ we get
	\begin{equation}\label{eq:bound_max_entropy}
		H^{\frac{\varepsilon_{\text{s}}}{4}}_{\max} \left( \mathbf{B} | \mathbf{ T}  E \right)_{\rho_{|\hat{\Omega}}} < \gamma n + \sqrt{n} 2\log7\sqrt{1-2\log \left( \varepsilon_{\text{s}}/4 \cdot \left(\varepsilon_{\mathrm{EA}} + \varepsilon_{\mathrm{EC}}\right) \right)} \;.
	\end{equation}

	Combining Equation~\eqref{eq:entropy_chain_1} with the above inequalities we get that
	\begin{equation*}
	\begin{split}
		H^{\varepsilon_{\text{s}}}_{\min} \left( \mathbf{A} | \mathbf{X Y T} O E \right)_{\tilde{\rho}_{|\tilde{\Omega}}} \geq H^{\frac{\varepsilon_{\text{s}}}{4}}_{\min} \left( \mathbf{AB} | \mathbf{X Y T}  E \right)_{\rho_{|\hat{\Omega}}} - \mathrm{leak_{EC}}  - 3 \log\left(1-\sqrt{1-(\varepsilon_{\text{s}}/4)^2}\right) \\
		  - \gamma n - \sqrt{n} 2\log7\sqrt{1-2\log \left( \varepsilon_{\text{s}}/4 \cdot \left(\varepsilon_{\mathrm{EA}} + \varepsilon_{\mathrm{EC}}\right) \right)}   \;.
	\end{split}
	\end{equation*}
	
	Finally, note that by applying the EAT on $\rho_{|\hat{\Omega}}$, as in Corollary~\ref{cor:main_generation}, we have that either $1-\Pr(\hat{\Omega})\geq 1-\varepsilon_{\mathrm{EA}} - \varepsilon_{\mathrm{EC}}$, or 
	\begin{equation*}
		H^{\frac{\varepsilon_{\text{s}}}{4}}_{\min} ( \mathbf{AB} | \mathbf{X Y T}  E )_{\rho_{|\hat{\Omega}}} > n \cdot \eta_{\mathrm{opt}}\left(\varepsilon_{\text{s}}/4,\varepsilon_{\mathrm{EA}} + \varepsilon_{\mathrm{EC}}\right) \;.
	\end{equation*}
	
	The last two equations together give us the desired bound on $H^{\varepsilon_{\text{s}}}_{\min} \left( \mathbf{A} | \mathbf{X Y T} O E \right)_{\tilde{\rho}_{|\tilde{\Omega}}}$: either the protocol aborts with probability greater than $1-\varepsilon_{\mathrm{EA}} - \varepsilon_{\mathrm{EC}}$ or 
	\begin{equation*}
	\begin{split}
		H^{\varepsilon_{\text{s}}}_{\min} \left( \mathbf{A} | \mathbf{X Y T} O E \right)_{\tilde{\rho}_{|\tilde{\Omega}}} \geq n \cdot \eta_{\mathrm{opt}}\left(\varepsilon_{\text{s}}/4,\varepsilon_{\mathrm{EA}} + \varepsilon_{\mathrm{EC}}\right) - \mathrm{leak_{EC}}  -  3 \log\left(1-\sqrt{1-(\varepsilon_{\text{s}}/4)^2}\right)  \\
		- \gamma n - \sqrt{n} 2\log7\sqrt{1-2\log \left( \varepsilon_{\text{s}}/4 \cdot \left(\varepsilon_{\mathrm{EA}} + \varepsilon_{\mathrm{EC}}\right) \right)} \;. \qedhere
	\end{split}
	\end{equation*}
\end{proof}

Using Lemma~\ref{lem:smooth_bound_qkd}, we prove that  Protocol~\ref{pro:diqkd_chsh} is sound.

\begin{lemma}\label{lem:QKD_sound}
	For any device $D$ let $\tilde{\rho}$ be the state generated using Protocol~\ref{pro:diqkd_chsh}. Then either the protocol aborts with probability greater than $ 1 - \varepsilon_{EA} - \varepsilon_{EC}$ or it is $(\varepsilon_{\mathrm{EC}} + \varepsilon_{\mathrm{PA}} + \varepsilon_{\text{s}})$-correct-and-secret while producing keys of length $\ell$, as defined in Equation~\eqref{eq:key_length_def}.
\end{lemma}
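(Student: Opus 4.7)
The proof strategy is to split the soundness error into a correctness component and a secrecy component, using Lemma~\ref{lem:smooth_bound_qkd} to supply the min-entropy bound that drives the analysis of privacy amplification. First, I would invoke Lemma~\ref{lem:smooth_bound_qkd}: either the protocol aborts with probability at least $1-\varepsilon_{\mathrm{EA}}-\varepsilon_{\mathrm{EC}}$, in which case the claim holds trivially, or the smooth min-entropy $H^{\varepsilon_{\text{s}}}_{\min}(\mathbf{A}|\mathbf{XYT}OE)_{\tilde{\rho}_{|\tilde{\Omega}}}$ is bounded below by the right-hand side of Equation~\eqref{eq:final_min_entropy_bound_qkd}. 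By direct inspection this right-hand side equals $\ell+2\log(1/\varepsilon_{\mathrm{PA}})$ for $\ell$ as in Equation~\eqref{eq:key_length_def}.

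Given this min-entropy bound, the secrecy half proceeds by applying the universal hashing security statement (Equation~\eqref{eq:universal_hashing_length}) to the state $\tilde{\rho}_{|\tilde{\Omega}}$. This produces a final key $\tilde{K}_A$ of length $\ell$ satisfying
\[
\|\tilde{\rho}_{\tilde{K}_A \mathbf{XYT}OES|\tilde{\Omega}} - \rho_{U_\ell}\otimes \tilde{\rho}_{\mathbf{XYT}OES|\tilde{\Omega}}\|_1 \;\leq\; \varepsilon_{\mathrm{PA}}+\varepsilon_{\text{s}},
\]
where $S$ denotes the hashing seed. Multiplying through by $\Pr(\tilde{\Omega})\leq 1$ converts this into the analogous bound on the subnormalised state $\tilde{\rho}_{\cdot\wedge\tilde{\Omega}}$, which is the form that ultimately enters the definition of secrecy.

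For correctness, I would simply observe that privacy amplification is a deterministic function of the raw key once the public seed is fixed, so $K_A=K_B$ implies $\tilde{K}_A=\tilde{K}_B$. Combined with the error correction guarantee $\Pr(K_A\neq K_B)\leq\varepsilon_{\mathrm{EC}}$ this yields $\varepsilon_{\mathrm{corr}}\leq\varepsilon_{\mathrm{EC}}$.

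The main bookkeeping obstacle—and the step I would spend the most care on—is relating the event $\tilde{\Omega}$ (``protocol does not abort \emph{and} $K_A=K_B$''), on which Lemma~\ref{lem:smooth_bound_qkd} and the PA bound are stated, to the weaker event ``protocol does not abort'' that appears in the secrecy definition. These two events differ only by the sub-event ``not abort and $K_A\neq K_B$'', whose probability is at most $\varepsilon_{\mathrm{EC}}$ by the EC guarantee. By a standard triangle inequality argument on the subnormalised states, this discrepancy contributes at most $\varepsilon_{\mathrm{EC}}$ to the unnormalised secrecy distance, and combined with the contribution $\varepsilon_{\mathrm{PA}}+\varepsilon_{\text{s}}$ from privacy amplification and the correctness error bounded above one obtains $\varepsilon_{\mathrm{corr}}+\varepsilon_{\mathrm{sec}}\leq \varepsilon_{\mathrm{EC}}+\varepsilon_{\mathrm{PA}}+\varepsilon_{\text{s}}$, as claimed.
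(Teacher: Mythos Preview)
Your proposal is correct and mirrors the paper's proof: both invoke Lemma~\ref{lem:smooth_bound_qkd} for the min-entropy, feed it into the universal-hashing secrecy bound on $\tilde{\rho}_{|\tilde{\Omega}}$, obtain correctness from the determinism of hashing together with the EC guarantee, and then combine the ``EC successful'' and ``EC unsuccessful'' cases. The one bookkeeping caution is your final sum: if you charge $\varepsilon_{\mathrm{EC}}$ once to $\varepsilon_{\mathrm{corr}}$ and once more to $\varepsilon_{\mathrm{sec}}$ via the triangle inequality between $\tilde{\Omega}$ and ``not abort'', a literal addition gives $2\varepsilon_{\mathrm{EC}}+\varepsilon_{\mathrm{PA}}+\varepsilon_{\text{s}}$; the paper avoids this by bounding the combined correct-and-secret error directly through the case split (the single event ``EC fails but protocol does not abort'' is charged only once), which is what your argument should also do to land on the stated constant.
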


\begin{proof}
	Denote all the classical public communication during the protocol by $J=\mathbf{X Y T} O S$ where $S$ is the seed used in the privacy amplification protocol $\mathrm{PA}$. 
	Let $\overset{\approx}{\Omega}$ denote the event of not aborting Protocol~\ref{pro:diqkd_chsh} and $\tilde{\rho}_{\tilde{K}_A \tilde{K}_B JE|\overset{\approx}{\Omega}}$ be the final state of Alice, Bob, and Eve at the end of the protocol, \emph{conditioned on not aborting}.
	
	We consider two cases. First assume that the $\mathrm{EC}$ protocol was not successful (but did not abort). Then Alice and Bob's final keys might not be identical. This happens with probability at most~$\varepsilon_{\mathrm{EC}}$. 
	
	Otherwise, assume the $\mathrm{EC}$ protocol was successful, i.e., $K_B = \mathbf{A}$. In that case, Alice and Bob's keys must be identical also after the final privacy amplification step. That is, conditioned on $K_B = \mathbf{A}$, $\tilde{K}_A=\tilde{K}_B$.
		
	We continue to show that in this case the key is also secret. The secrecy depends only on the privacy amplification step, and for universal hashing a secure key is produced as long as Equation~\eqref{eq:universal_hashing_length} holds. 
	Hence, a uniform and independent key of length $\ell$ as in Equation~\eqref{eq:key_length_def} is produced by the privacy amplification step unless the smooth min-entropy is not high enough (i.e., the bound in Equation~\eqref{eq:final_min_entropy_bound_qkd} does not hold)
	or the privacy amplification protocol was not successful, which happens with probability at most $\varepsilon_{\mathrm{PA}} + \varepsilon_{\text{s}}$.
		
	According to Lemma~\ref{lem:smooth_bound_qkd}, either the protocol aborts with probability greater than $1-\varepsilon_{\mathrm{EA}}-\varepsilon_{\mathrm{EC}}$, or the entropy is sufficiently high for us to have
	\begin{equation*}
		\| \tilde{\rho}_{\tilde{K}_A JE|\overset{\approx}{\Omega}} - \rho_{U_l} \otimes \tilde{\rho}_{JE} \|_1 \leq \varepsilon_{\mathrm{PA}} + \varepsilon_{\text{s}}  \;.
	\end{equation*} 
	
	Combining both cases above we get that Protocol~\ref{pro:diqkd_chsh} is sound (that is, it produces identical and secret keys of length $\ell$ for Alice and Bob) with soundness error at most $ \varepsilon_{\mathrm{EC}} + \varepsilon_{\mathrm{PA}} + \varepsilon_{\text{s}}$. \qedhere
\end{proof}

\subsection{Key rate analysis}\label{sec:rates_comparison_qkd}

Theorem~\ref{thm:QKD_security} establishes a relation between the length $\ell$ of the secure key produced by our protocol and the different error terms. As this relation, given in Equation~\eqref{eq:key_length_def}, is somewhat hard to visualise, we analyse the key rate $r=\ell/n$ for some specific choices of parameters and compare it to the key rates achieved in device-dependent QKD with finite resources~\cite{scarani2008quantum,scarani2008security} and DIQKD with infinite resources and a restricted set of attacks~\cite{pironio2009device}. 

The key rate depends on the amount of leakage of information due to the error correction step, which in turn depends on the honest implementation of the protocol. We use the honest i.i.d.\@ implementation described in Section~\ref{sec:honest_qkd_imp} and assume that in the honest case the state of each round is the two-qubit Werner state $\rho_{Q_AQ_B} = (1-\nu) \ket{\phi^+}\bra{\phi^+} + \nu\mathbb{I}/4$ (and the measurements are as described in Section~\ref{sec:honest_qkd_imp}). 
The quantum bit error rate is then $Q=\frac{\nu}{2}$ and the expected winning probability is $\omega_{\mathrm{exp}}=\frac{2+\sqrt{2}(1-2Q)}{4}$. 

We emphasise that this is an assumption regarding the \emph{honest} implementation and it does not in any way restrict the actions of the adversary (and, in particular, the types of imperfections in the device). Furthermore, the analysis done below can be adapted to any other honest implementation of interest.

\subsubsection{Leakage due to error correction}\label{sec:leakage_ec_calc}

To compare the rates we first need to explicitly upper bound the leakage of information due to the error correction protocol, $\mathrm{leak_{EC}}\;$. As mentioned before, this can be done by evaluating $H_{0}^{\varepsilon'_{\mathrm{EC}}}(\mathbf{A}|\tilde{\mathbf{B}}\mathbf{XYT})$ on Alice and Bob's state in an honest i.i.d.\@ implementation of the protocol, described in Section~\ref{sec:honest_qkd_imp}. 

For this we first use the following relation between $H_{0}^{\varepsilon}$ and $H_{\max}^{\varepsilon'}$~\cite[Lemma 18]{tomamichel2011leftover}:
\[
	H_{0}^{\varepsilon'_{\mathrm{EC}}}(\mathbf{A}|\tilde{\mathbf{B}}\mathbf{XYT}) \leq H_{\max}^{\frac{\varepsilon'_{\mathrm{EC}}}{2}}\left(\mathbf{A}|\tilde{\mathbf{B}}\mathbf{XYT}\right) + \log \left( 8/\varepsilon'^2_{\mathrm{EC}} + 2/\left(2-\varepsilon'_{\mathrm{EC}}\right)\right) \;.
\]

The non-asymptotic version of the asymptotic equipartition property~\cite[Theorem 9]{tomamichel2009fully} (see also~\cite[Result 5]{tomamichel2012framework}) tells us that 
\[
	H_{\max}^{\frac{\varepsilon'_{\mathrm{EC}}}{2}}\left(\mathbf{A}|\tilde{\mathbf{B}}\mathbf{XYT}\right) \leq n H(A_i|\tilde{B}_iX_i Y_i T_i) + \sqrt{n} \delta (\varepsilon'_{\mathrm{EC}}, \tau) \;,
\]
for $\tau =2 \sqrt{2^{H_{\max}(A_i|\tilde{B}_iX_i Y_i T_i)}}+1$ and $\delta (\varepsilon'_{\mathrm{EC}}, \tau) = 4\log \tau \sqrt{2 \log \left(8/ {\varepsilon'}^2_{\mathrm{EC}} \right)}$. 

For the honest implementation of Protocol~\ref{pro:diqkd_chsh}, $H_{\max}(A_i|\tilde{B}_iX_i Y_i T_i)=1$  and 
\begin{align*}
	H(A_i|\tilde{B}_iX_i Y_i T_i) = &\Pr(T_i=0) \cdot H(A_i|\tilde{B}_i X_i Y_i, T_i=0) + \\
	&\Pr(T_i=1) \cdot H(A_i|\tilde{B}_iX_i Y_i ,T_i=1) \\
	=&\left( 1-\gamma \right) \cdot H(A_i|\tilde{B}_i X_i Y_i, T_i=0) + \\
	&\gamma \cdot H(A_i|\tilde{B}_iX_i Y_i ,T_i=1) \\
	=& \left( 1-\gamma \right)  h(Q) + \gamma h(\omega_{\mathrm{exp}}) \;,
\end{align*}
where the first equality follows from the definition of conditional entropy and the second from the way $T_i$ is chosen in Protocol~\ref{pro:diqkd_chsh}. The last equality holds since for generation rounds the error rate (i.e., the probability that $A_i$ and $\tilde{B}_i$ differ) in the honest case is $Q$ and for test rounds given $\tilde{B}_i,X_i$ and $Y_i$ Bob can guess $A_i$ correctly with probability $\omega_{\mathrm{exp}}$. 

We thus have 
\[
\begin{split}
	H_{0}^{\varepsilon'_{\mathrm{EC}}}\left(\mathbf{A}|\tilde{\mathbf{B}}\mathbf{XYT}\right) \leq n \left[\left( 1-\gamma \right)  h(Q) + \gamma h(\omega_{\mathrm{exp}}) \right] + \sqrt{n} 4\log \left(2\sqrt{2} +1\right) \sqrt{2 \log \left(8/ {\varepsilon'}^2_{\mathrm{EC}} \right)}  \\
	+ \log \left( 8/\varepsilon'^2_{\mathrm{EC}} + 2/\left(2-\varepsilon'_{\mathrm{EC}}\right)\right) \;.
	\end{split}
\]
Plugging this into Equation~\eqref{eq:ec_leakage} we get 
\begin{equation}\label{eq:explicit_leak_ec}
	\begin{split}
		\mathrm{leak_{EC}} \leq n \left[\left( 1-\gamma \right)  h(Q) + \gamma h(\omega_{\mathrm{exp}}) \right] + \sqrt{n} 4\log \left(2\sqrt{2} +1\right) \sqrt{2 \log \left(8/ {\varepsilon'}^2_{\mathrm{EC}} \right)}\\
		+ \log \left( 8/\varepsilon'^2_{\mathrm{EC}} + 2/\left(2-\varepsilon'_{\mathrm{EC}}\right)\right) + \log\left(\frac{1}{\varepsilon_{\mathrm{EC}}}\right) \;.
	\end{split}
\end{equation}

\subsubsection{Key rate curves}\label{sec:qkd_curves}

In Appendix~\ref{sec:better_rate} a slightly modified protocol is considered in which, instead of fixing the number of rounds in the protocol, only the expected number of rounds is fixed.

The completeness and soundness proofs follow the same lines as the proofs above, as detailed in Appendix~\ref{sec:better_rate}. The analysis presented in the appendix leads to improved key rates for the modified protocol, and are the ones presented here.\footnote{The key rate curves for a fixed number of rounds $n$ have the same shape as the curves presented here but require more signals to achieve the same rates (the difference is roughly two orders of magnitude).} 

\begin{figure}
\centering
\begin{tikzpicture}
	\begin{axis}[
		height=8.5cm,
		width=15cm,
		xlabel=$Q(\%)$,
		ylabel=$r$,
		xmin=0,
		xmax=0.145,
		ymax=1,
		ymin=0,
	     xtick={0.02,0.06,0.1,0.142},
	     xticklabels={$1$, $3$, $5$, $7.1$},
          ytick={0,0.1,0.2,0.3,0.4,0.5,0.6,0.7,0.8,0.9,1},
		legend style={at={(0.88,0.97)},anchor=north,legend cell align=left,font=\footnotesize} 
	]
	

	\addplot[blue,thick,smooth] coordinates {
		(1.*10^-10, 0.997348) (0.00584, 0.91723) (0.01168, 0.854049) (0.01752, 0.797357) (0.02336, 0.744869) (0.0292, 0.695502) (0.03504, 0.648617) (0.04088, 0.603795) (0.04672, 0.560736) (0.05256, 0.519218) (0.0584, 0.479068) (0.06424, 0.44015) (0.07008, 0.402351) (0.07592, 0.365578) (0.08176, 0.329752) (0.0876, 0.294806) (0.09344, 0.260681) (0.09928, 0.227327) (0.10512, 0.194699) (0.11096, 0.162758) (0.1168, 0.131467) (0.12264, 0.100796) (0.12848, 0.0707151) (0.13432, 0.0411983) (0.14016, 0.0122219) (0.146, -0.0162357)	
	};
	\addlegendentry{$\bar{n}=10^{15}$}
	
	\addplot[red,thick,smooth,dotted] coordinates {
	(1.*10^-10, 0.922753) (0.00584, 0.85326) (0.01168, 0.794151) (0.01752, 0.740126) (0.02336, 0.689641) (0.0292, 0.641886) (0.03504, 0.596356) (0.04088, 0.552704) (0.04672, 0.510679) (0.05256, 0.470087) (0.0584, 0.430779) (0.06424, 0.392632) (0.07008, 0.355544) (0.07592, 0.319433) (0.08176, 0.284226) (0.0876, 0.249862) (0.09344, 0.216287) (0.09928, 0.183453) (0.10512, 0.15132) (0.11096, 0.11985) (0.1168, 0.0890098) (0.12264, 0.0587699) (0.12848, 0.0291029) (0.13432, -0.0000156698) 
	};
	\addlegendentry{$\bar{n}=10^{10}$}

	\addplot[orange,thick,smooth,dashed] coordinates {
	(1.*10^-10, 0.694276) (0.00584, 0.636597) (0.01168, 0.586384) (0.01752, 0.539418) (0.02336, 0.494819) (0.0292, 0.452125) (0.03504, 0.411042) (0.04088, 0.371362) (0.04672, 0.332928) (0.05256, 0.295616) (0.0584, 0.259328) (0.06424, 0.22398) (0.07008, 0.189504) (0.07592, 0.15584) (0.08176, 0.122936) (0.0876, 0.090747) (0.09344, 0.0592343) (0.09928, 0.028362) (0.10512, -0.00190154) 
	};
	\addlegendentry{$\bar{n}=10^{8}$}
	
	\addplot[green,thick,smooth,dashdotted] coordinates {
	(1.*10^-10, 0.400565) (0.00584, 0.350245) (0.01168, 0.305868) (0.01752, 0.264116) (0.02336, 0.224246) (0.0292, 0.186061) (0.03504, 0.149213) (0.04088, 0.113476) (0.04672, 0.078735) (0.05256, 0.0448971) (0.0584, 0.011889) (0.06424, -0.02035) 
	};
	\addlegendentry{$\bar{n}=10^{7}$}

	\end{axis}  
\end{tikzpicture}

\caption{The expected key rate $r=\ell/\bar{n}$ as a function of the quantum bit error rate $Q$ for several values of the expected number of rounds $\bar{n}$. For $\bar{n}=10^{15}$ the curve essentially coincides with the curve for the i.i.d.\@ asymptotic case~\cite[Equation (12)]{pironio2009device}. 
	The following values for the error terms were chosen: $\varepsilon_{\mathrm{EC}}=10^{-10}, \; \varepsilon_{\mathrm{QKD}}^s=10^{-5}$ and $\varepsilon_{\mathrm{QKD}}^c=10^{-2}$.}
\label{fig:qkd_rates_Q_mod}
\end{figure}
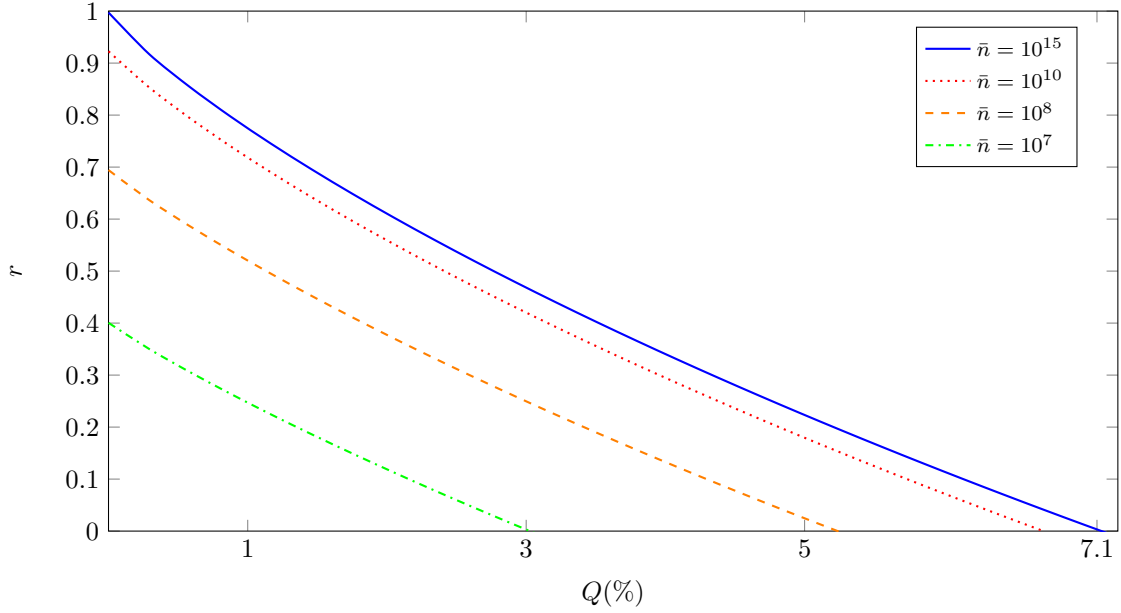

\begin{figure}
\centering
\begin{tikzpicture}
	\begin{axis}[
		height=8.5cm,
		width=15cm,
		xlabel=$\bar{n}$,
		ylabel=$r$,
		xmin=6,
		xmax=15,
		ymax=1,
		ymin=0,
	     xtick={6,7,8,9,10,11,15},
	     xticklabels={ $10^6$, $10^7$, $10^8$, $10^9$,$10^{10}$,$10^{11}$,$10^{15}$},
          ytick={0,0.1,0.2,0.3,0.4,0.5,0.6,0.7,0.8,0.9,1},
		legend style={at={(0.88,0.97)},anchor=north,legend cell align=left,font=\footnotesize} 
	]
	

	\addplot[blue,thick,smooth] coordinates {
	(6, -0.2442) (13/2, 0.0866932) (7, 0.318636) (15/2, 0.483919) (8, 0.601281) (17/2, 0.683763) (9, 0.741528) (19/2, 0.782662) (10, 0.810984) (21/2, 0.830336) (11, 0.841544) (23/2, 0.850964) (12, 0.856629) (25/2, 0.863243) (13, 0.867006) (27/2, 0.869136) (14, 0.870339) (29/2, 0.871017) (15, 0.871399)
	};
	\addlegendentry{$Q = 0.5\%$}
	
	\addplot[red,thick,smooth,dotted] coordinates {
	(6, -0.459711) (13/2, -0.152595) (7, 0.059624) (15/2, 0.208344) (8, 0.311673) (17/2, 0.383259) (9, 0.432467) (19/2, 0.465061) (10, 0.489571) (21/2, 0.504051) (11, 0.512275) (23/2, 0.520235) (12, 0.526662) (25/2, 0.531429) (13, 0.53412) (27/2, 0.535636) (14, 0.536489) (29/2, 0.536969) (15, 0.53724)
	};
	\addlegendentry{$Q = 2.5\%$}

	\addplot[orange,thick,smooth,dashed] coordinates {
	(6, -0.683142) (13/2, -0.398978) (7, -0.203553) (15/2, -0.0685348) (8, 0.02432) (17/2, 0.0884077) (9, 0.131977) (19/2, 0.160513) (10, 0.181837) (21/2, 0.193957) (11, 0.202814) (23/2, 0.207513) (12, 0.214511) (25/2, 0.218461) (13, 0.220688) (27/2, 0.221941) (14, 0.222646) (29/2, 0.223043) (15, 0.223266)
	};
	\addlegendentry{$Q = 5\%$}

	\end{axis}  
\end{tikzpicture}
\caption{The expected key rate $r=\ell/\bar{n}$ as a function of the expected number of rounds $\bar{n}$ for several values of the quantum bit error rate $Q$. For $Q=0.5\%,\;2.5\%$, and $5\%$ the achieved key rates are approximatly $r=87\%,\;53\%$, and $22\%$ respectively.
	The following values for the error terms were chosen: $\varepsilon_{\mathrm{EC}}=10^{-10}, \; \varepsilon_{\mathrm{QKD}}^s=10^{-5}$ and $\varepsilon_{\mathrm{QKD}}^c=10^{-2}$.}
\label{fig:qkd_rates_n_mod}
\end{figure}
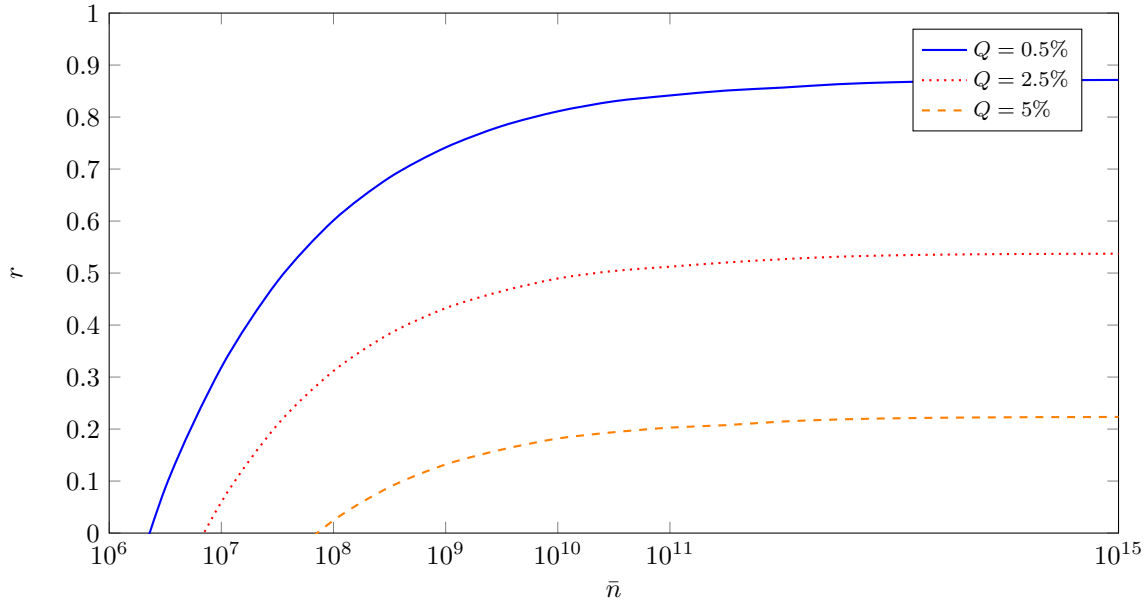

In Figure~\ref{fig:qkd_rates_Q_mod} the expected key rate $r=\ell/\bar{n}$ is plotted as a function of the quantum bit error rate $Q$ for several values of  the expected number of rounds $\bar{n}$. For $\bar{n}=10^{15}$ the curve already essentially coincides with the key rate achieved in the \emph{asymptotic i.i.d.\@} case, that is, when restricting the adversary to collective attacks~\cite[Equation~(12)]{pironio2009device} (see also Figure~2 therein). As the key rate for the asymptotic i.i.d.\@ case was shown to be optimal in~\cite{pironio2009device} (for practically the same protocol) it acts as an upper bound on the key rate and the amount of tolerable noise for the general case considered in this work. Hence, for large enough number of rounds our key rate becomes optimal and the protocol can tolerate up to the maximal error rate $Q=7.1\%$.

In an asymptotic analysis (i.e., with infinite resources $\bar{n}\rightarrow\infty$) it is well understood that the soundness and completeness errors $\varepsilon_{\mathrm{QKD}}^s,\varepsilon_{\mathrm{QKD}}^c$ should tend to zero as $\bar{n}$ increases. However, in the non-asymptotic scenario considered here these errors are always finite. We therefore fix some values for them which are considered to be realistic and relevant for actual applications. We choose the parameters such that the security parameters are at least as good (and in general even better) as in~\cite{scarani2008quantum}, such that a fair comparison can be made. All other parameters are chosen in a consistent way while (roughly) optimising the key rate.

In Figure~\ref{fig:qkd_rates_n_mod} $r$ is plotted as a function of $\bar{n}$ for several values of $Q$. As can be seen from the figure, the achieved rates are significantly higher than those achieved in previous works. Moreover, they are practically comparable to the key rates achieved in device-\emph{dependent} QKD (see Figure~1 in~\cite{scarani2008quantum}). The main difference between the curves for the device-dependent case  and the independent one is the minimal value of $\bar{n}$ which is required for a positive key rate. (That is, for the protocols considered in~\cite{scarani2008quantum} one can get a positive key rate with less rounds.) It is possible that by further optimising the parameters a positive key rate can also be achieved in our setting in the regime $\bar{n}=10^4-10^6$ for the different error rates.

\section{Randomness expansion}\label{sec:expansion}
We show how the entropy accumulation protocol can be used to perform randomness expansion. This can be achieved based on any non-local game for which one is able to prove a good bound in Equation~\eqref{eq:eat_f_min_bound}. For concreteness we focus on the CHSH game, for which an explicit bound is provided by Corollary~\ref{cor:main_generation}. Although the protocol can be used to achieve larger expansion factors, we give specific bounds that optimise the linear output rate, under the assumption that a small linear number of uniformly random bits is available to the experimenter for the execution of the protocol. 

In order to minimise the amount of randomness required to execute the protocol we adapt the main entropy accumulation protocol, Protocol~\ref{pro:randomness_generation}, by deterministically choosing inputs in the generation rounds from $\mathcal{X}_g = \{0\}$ and $\mathcal{Y}_g = \{0\}$. In particular there is no use for the input $2$ to the $B$ device, and no randomness is required for the generation rounds.~\footnote{This requires both users to know which rounds are selected as generation rounds, i.e. to share the random variable $T_i$. For the purposes of randomness expansion this does not even require communication as we may assume the parties are co-located.} Aside from the last step of randomness amplification the remainder of the protocol is essentially the same as Protocol~\ref{pro:randomness_generation} (in its instantiation with the CHSH game considered in Section~\ref{sec:entropy-chsh}). The complete protocol is described as Protocol~\ref{pro:randomness_expansion}.

\begin{algorithm}
\caption{Randomness expansion protocol}
\label{pro:randomness_expansion}
\begin{algorithmic}[1]
	\STATEx \textbf{Arguments:} 
		\STATEx\hspace{\algorithmicindent} $G$ -- CHSH game restricted to $\mathcal{X}_t=\mathcal{Y}_t=\{0,1\}$.
		\STATEx\hspace{\algorithmicindent} $D$ -- untrusted device of two components that can play $G$ repeatedly
		\STATEx\hspace{\algorithmicindent} $n \in \mathbb{N}_+$ -- number of rounds
		\STATEx\hspace{\algorithmicindent} $\gamma \in (0,1]$ -- expected fraction of test rounds 
		\STATEx\hspace{\algorithmicindent} $\omega_{\mathrm{exp}}$ -- expected winning probability in $G$ for an honest (perhaps noisy) implementation    
		\STATEx\hspace{\algorithmicindent} $\delta_{\mathrm{est}} \in (0,1)$ -- width of the statistical confidence interval for the estimation test
		
	\STATEx
	
	\STATE For every round $i\in[n]$ do Steps~\ref{prostep:choose_Ti}-\ref{prostep:calculate_Ci_RE}:
		\STATE\hspace{\algorithmicindent}Bob chooses a random bit $T_i\in\{0,1\}$ such that $\Pr(T_i=1)=\gamma$. \label{prostep:choose_Ti}
		\STATE\hspace{\algorithmicindent}If $T_i=0$ Alice and Bob choose $(X_i,Y_i)=(0,0)$. If $T_i=1$ they choose uniformly random inputs $(X_i,Y_i)\in\mathcal{X}_t\times\mathcal{Y}_t$.
		\STATE\hspace{\algorithmicindent}Alice and Bob use $D$ with $X_i,Y_i$ and record their outputs as $A_i$ and $B_i$ respectively. \label{prostep:measurement}
		\STATE\hspace{\algorithmicindent}If $T_i=1$ they set $C_i = w(A_i,B_i,X_i,Y_i)$.\label{prostep:calculate_Ci_RE}
	\STATE Alice and Bob abort if $\sum_j C_j < \left(\omega_{\mathrm{exp}}\gamma - \delta_{\mathrm{est}}\right) \cdot n$. \label{prostep:abort_RE}
	\STATE They return $\mathrm{Ext}(\mathbf{AB},\mathbf{S})$ where $\mathrm{Ext}$ is the extractor from Lemma~\ref{lem:ext_tre} and $\mathbf{S}$ is a uniformly random seed. \label{prostep:ext_RE}
\end{algorithmic}
\end{algorithm}

 Corollary~\ref{cor:main_generation} provides a lower bound on the min-entropy generated by the protocol. Given we are concerned here not only with \emph{generating} randomness, but also with \emph{expanding} the amount of randomness initially available to users of the protocol, we now evaluate the total number of random bits that is needed to execute Protocol~\ref{pro:randomness_expansion}.

\paragraph{Input randomness.}
Random bits are required to select which rounds are generation rounds, i.e. the random variable $\mathbf{T}$, to select inputs to the devices in the testing rounds, i.e. those for which $T_i=0$, and to select the seed for the extractor in Step~\ref{prostep:ext_RE}.

The random variables $T_i$ are chosen independently according to a biased Bernoulli$(\gamma)$ distribution. The following lemma shows that approximately $8h(\gamma) n$ uniformly random bits are sufficient to generate the $T_i$, provided one allows for the possibility of a small deviation error.  

\begin{lemma}\label{lem:generate-inputs}
	Let $\gamma>0$. There is an efficient procedure such that for any integer $n$, given $r=8h(\gamma) n$ uniformly random bits as inputs the procedure either aborts, with probability at most $\varepsilon_{\mathrm{SA}}=\exp(-\Omega(\gamma^3 \log^{-2}\gamma n))$, or outputs $n$ bits $T_1,\ldots,T_n$ whose distribution is within statistical distance at most $\varepsilon_{\mathrm{SA}}$ of $n$ i.i.d.\@ Bernoulli$(\gamma)$ random variables. 
\end{lemma}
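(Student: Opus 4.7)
The plan is to sample $T_1,\ldots,T_n$ by drawing the gaps between consecutive $1$'s: conditioned on there being $M$ ones at positions $P_1<\cdots<P_M$, the joint distribution of $n$ i.i.d.\@ $\mathrm{Bern}(\gamma)$ bits is equivalent to taking $G_1,G_2,\ldots$ i.i.d.\@ $\mathrm{Geom}(\gamma)$, setting $P_j=G_1+\cdots+G_j$, and halting as soon as $P_{M+1}>n$. This reformulation is convenient because each $\mathrm{Geom}(\gamma)$ sample carries the full information about a single $1$ and can be generated using only $h(\mathrm{Geom}(\gamma))=h(\gamma)/\gamma=\log(1/\gamma)+O(1)$ uniform bits in expectation, so the expected total number of bits needed is roughly $\gamma n\log(1/\gamma)$, which is within the budget $r=6\gamma n$ in the regime where the lemma is non-trivial.

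Concretely, I would use a Knuth--Yao-type sampler that reads uniform bits one-by-one from the input tape until it has produced an exact $\mathrm{Geom}(\gamma)$ sample $G_j$, recording the random bit-length $L_j$. The procedure draws $G_1,G_2,\ldots$ in sequence, updating $P_j=G_1+\cdots+G_j$, and outputs $T_i=\mathbf{1}\{i\in\{P_1,\ldots,P_M\}\}$ as soon as the next $P_{j+1}$ would exceed $n$; it aborts if at any point the cumulative bit-count $L_1+\cdots+L_j$ exceeds the budget $r=6\gamma n$. Conditioned on not aborting, the joint distribution of the output is by construction exactly $\mathrm{Bern}(\gamma)^{\otimes n}$, so the statistical distance in the lemma equals the abort probability and both are controlled by the same tail bound.

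The remaining work is to bound the abort probability. A Chernoff bound on $M=\min\{j:P_j\geq n\}$ shows that $M\leq 2\gamma n$ except with probability $\exp(-\Omega(\gamma n))$. Conditional on this, the total bit-cost $\sum_{j\leq M}L_j$ is a sum of i.i.d.\@ sub-exponential random variables with mean $\log(1/\gamma)+O(1)$ and variance $\Theta(\log^2(1/\gamma))$; Bernstein's inequality then yields
\[
	\Pr\Bigl[\textstyle\sum_{j\leq M}L_j > 6\gamma n\Bigr]\;\leq\;\exp\bigl(-\Omega\bigl(\gamma^3\log^{-2}\gamma\cdot n\bigr)\bigr),
\]
matching the bound claimed for $\varepsilon_{\mathrm{SA}}$ once the slack between the mean $\approx 2\gamma n\log(1/\gamma)$ and the budget $6\gamma n$ is taken into account. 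A union bound against the Chernoff estimate on $M$ absorbs the other failure mode.

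The step I expect to be most delicate is the variance bound $\mathrm{Var}(L_j)=\Theta(\log^2(1/\gamma))$, because the exponent in $\varepsilon_{\mathrm{SA}}$ depends quadratically on $1/\log(1/\gamma)$: a looser variance estimate would produce the wrong power of $\log(1/\gamma)$ in the denominator. This forces the sampler used for $\mathrm{Geom}(\gamma)$ to be designed with care (or at least analysed carefully) so that the moment generating function of $L_j$ behaves as required; an off-the-shelf Knuth--Yao analysis only gives the expectation and will need to be augmented by a direct second-moment computation on the associated decoding tree.
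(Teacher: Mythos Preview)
Your approach is workable but takes a genuinely different route from the paper. The paper does not sample gaps; instead it partitions $[n]$ into $t\approx h(\gamma)n$ consecutive blocks of size $m\approx 1/h(\gamma)$, runs the interval algorithm independently on each block to produce the $T_i$ in that block, and applies Hoeffding's inequality to the sum of the per-block bit-counts $N_i$. The virtue of the chunking is that each $N_i$ is \emph{deterministically bounded} by $O(m\log\gamma^{-1})$, so Hoeffding applies with no moment analysis at all, and the exponent $\gamma^3\log^{-2}\gamma\cdot n$ falls out of the ratio $(\text{deviation})^2/(\text{range}^2\cdot t)$. This completely bypasses the step you flagged as ``most delicate'': no variance or sub-exponential estimate on the sampler's bit-cost is ever needed. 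Your gap-sampling idea buys nothing extra here and costs you that additional analysis.

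Two smaller issues with your write-up. First, ``conditioned on not aborting, the output is exactly $\mathrm{Bern}(\gamma)^{\otimes n}$'' is not correct: the abort event depends on the $L_j$, which are correlated with the $G_j$, so conditioning on non-abort skews the output law. What is true, and sufficient, is only that the total-variation distance to $\mathrm{Bern}(\gamma)^{\otimes n}$ is at most the abort probability. Second, your budget arithmetic does not close as written: the mean cost you quote, $\approx 2\gamma n\log(1/\gamma)$, already exceeds the budget $6\gamma n$ once $\gamma<1/8$, leaving no slack for Bernstein. The paper's own Hoeffding computation in fact yields a budget of order $h(\gamma)n$ rather than $\gamma n$, so the constant $r=6\gamma n$ in the lemma statement appears to be a slip; with that corrected, both arguments go through.
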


\begin{proof}
	It is well-known that using the interval algorithm~\cite{hoshi1997interval} it is possible to sample exactly from $m$ i.i.d. Bernoulli($\gamma$) random variables using an expected number of random bits at most $h(\gamma)m+2$; furthermore the maximum number of random bits needed is at most $C m\log \gamma^{-1}$ for some constant $C$. 

	In order to obtain a bound on the maximum number of random bits used that holds with high probability, let $\alpha = h(\gamma)$ and partition $\{1,\ldots,n\}$ into at most $t=\lceil\alpha n\rceil$  chunks of $m=\lceil 1/\alpha\rceil$ consecutive integers each.
	Suppose we repeat the interval algorithm for each chunk. Let $N_i$ be the number of uniform bits used to generate the $T_j$ associated with the $i$-th chunk. Then by the above $\mathrm{E}[N_i]\leq h(\gamma)m+2$ and $N_i \leq C m\log \gamma^{-1}$. Applying the Hoeffding inequality,
	\begin{align*}
		\Pr\Big( \sum_{i=1}^t N_i > 2 (h(\gamma)m+2)t \Big) \leq e^{-C' \frac{(h(\gamma)m+2)^2}{m^2 \log^2 \gamma^{-1}} t}
		&\leq e^{-C'' \frac{\gamma^3}{\log^2 \gamma^{-1}} n}
	\end{align*}
	for some constants $C',C''>0$ and given our choice of $\alpha$. Using $mt\leq 2n$ and $t\leq n$ gives the claimed bound. 
\end{proof}

\begin{remark}
If one is willing to settle for a bound on the number of uniform bits used \emph{in expectation} then using the procedure from~\cite{hoshi1997interval} it is possible to  \emph{exactly} sample $n$ i.i.d.\@ Bernoulli$(\gamma)$ random variables using an expected number of random bits at most $h(\gamma)n+2$.
\end{remark}

It remains to account for the random bits required to generate inputs in the testing rounds, for which $T_i=0$. By Hoeffding's inequality there are at most $2\gamma n$ such rounds except with probability $\exp(-\Omega(\gamma^2 n))\leq \varepsilon_{\mathrm{SA}}$ for large enough $n$. Together with Lemma~\ref{lem:generate-inputs} we conclude that $10\gamma n$ uniformly random bits are sufficient to execute the protocol with a probability of success (up to but not including step~\ref{prostep:abort_RE}) at least $1-e^{-\tilde{\Omega}(\gamma^3)n}$. We also note that if one is only concerned with the expected number of random bits used then $(h(\gamma) + \gamma)n+2$ bits are sufficient.

\paragraph{Extraction.}
In the last step of the protocol, Step~\ref{prostep:ext_RE}, the user applies a quantum-proof extractor to  $\mathbf{AB}$ in order to produce a random string that is close to being uniformly distributed. This step requires the use of an additional seed $\mathbf{S}$ of uniformly random bits. We use the following construction based on Trevisan's extractor, designed to maximise the output length while not using too much seed. 

\begin{lemma}\label{lem:ext_tre}
	For any $\delta>0$ there is a $c=c(\delta)>0$ such that the following holds. For all large enough integer $n$ and any $k\geq \delta n$ there is an efficient procedure $\mathrm{Ext}:\{0,1\}^{2n} \times \{0,1\}^d \to \{0,1\}^m$ such that $d = \lceil \delta n\rceil$ and $m = \lceil k -9\log k \rceil$, and is such that for $\varepsilon_{\mathrm{EX}}=\exp(-c (n/\log n)^{1/2})$ and any classical-quantum state $\rho_{\mathbf{A}E}$ such that $H_{\min}^{\varepsilon_{\mathrm{EX}}}(\mathbf{A}|E)_\rho \geq k$ it holds that 
	\[
		\| \rho_{\mathrm{Ext}(\mathbf{A},\mathbf{S})\mathbf{S}E} - \rho_{U_m}\otimes \rho_{U_d} \otimes \rho_E \|_1\leq 2\varepsilon_{\mathrm{EX}}\;,
	\]
	where $\mathbf{S}\in\{0,1\}^d$ is a uniformly distributed random seed and $\rho_{U_m},\rho_{U_d}$ are totally mixed states on $m$ and $d$ bits respectively. 
\end{lemma}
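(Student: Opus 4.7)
The plan is to derive this as a direct instantiation of the quantum-proof Trevisan extractor of De, Portmann, Vidick, and Renner~\cite{de2012trevisan}. Their construction combines a weak combinatorial design with a one-bit quantum-proof extractor (e.g., based on the Hadamard code / inner product function); its general guarantee is that for input length $N$, min-entropy threshold $k$, and error $\varepsilon$, there is an efficient extractor with seed length $d = O(\log^2(N/\varepsilon)\log m)$ and output length $m = k - O(\log(1/\varepsilon))$ that is secure against quantum side information. I would first set $N = 2n$ and invoke this theorem with error parameter $\varepsilon_{\mathrm{EX}} = \exp(-c(n/\log n)^{1/2})$ and min-entropy parameter $k$. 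The factor $2$ in the final distance comes from combining the $\varepsilon_{\mathrm{EX}}$-smoothing of the min-entropy with the $\varepsilon_{\mathrm{EX}}$ error of the extractor itself via the triangle inequality.

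Next, I would verify the seed length budget. With $\log(1/\varepsilon_{\mathrm{EX}}) = \Theta(c\sqrt{n/\log n})$, we obtain
\[
    d \;=\; O\!\left(\log^2(N/\varepsilon_{\mathrm{EX}})\cdot \log m\right) \;=\; O\!\left(\tfrac{c^2 n}{\log n}\cdot \log n\right) \;=\; O(c^2 n),
\]
where the implicit constants depend only on the extractor construction. Choosing $c = c(\delta) > 0$ sufficiently small ensures $d \leq \lceil \delta n\rceil$ for all sufficiently large $n$, which establishes the claimed seed length.

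The output length follows from the standard accounting of entropy loss in the Trevisan construction: each output bit is produced by applying the one-bit extractor to a subset of the weak design specified by the seed, and the Markov-style composition used in the quantum security proof of~\cite{de2012trevisan} gives entropy loss $O(\log(1/\varepsilon_{\mathrm{EX}}))$. To obtain the precise form $m = \lceil k - 9\log k\rceil$, I would choose the per-bit error of the one-bit extractor as $\varepsilon_{\mathrm{EX}}/m$ and use that $\log(1/\varepsilon_{\mathrm{EX}}) = O(\sqrt{n/\log n}) = o(\log k)$ for large $n$, so the $O(\log(1/\varepsilon_{\mathrm{EX}}))$ loss is absorbed into $9\log k$ (for large enough $n$ the loss is in fact much smaller than $\log k$; the $9\log k$ figure is the textbook constant for Trevisan-type extractors and is used here for definiteness).

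The main obstacle, though not a conceptual one, is the bookkeeping: matching the specific $9\log k$ entropy loss to the parameters of the weak design and of the one-bit extractor, and verifying that the quantum reduction of~\cite{de2012trevisan} indeed yields $2\varepsilon_{\mathrm{EX}}$ rather than a larger multiple, requires carefully tracking the union bound over the $m$ output bits and the smoothing step. All of this is routine given the main theorem of~\cite{de2012trevisan}, and no new extractor analysis is needed.
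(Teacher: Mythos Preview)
Your overall plan matches the paper's proof: both invoke~\cite[Corollary~5.1]{de2012trevisan}, verify that the seed length $d = O(\log^2(n/\varepsilon_{\mathrm{EX}})\log m)$ fits within $\lceil \delta n\rceil$ by taking $c$ small enough, and obtain the factor $2$ in the trace distance from combining the smoothing with the extractor guarantee (the paper points to~\cite[Lemma~17]{arnon2015quantum} for this last step, which is exactly the triangle-inequality argument you describe).

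There is, however, a genuine slip in your output-length paragraph. You write that $\log(1/\varepsilon_{\mathrm{EX}}) = O(\sqrt{n/\log n}) = o(\log k)$, so that the $O(\log(1/\varepsilon_{\mathrm{EX}}))$ entropy loss is absorbed into $9\log k$. This inequality goes the wrong way: since $k\leq 2n$ we have $\log k = O(\log n)$, whereas $\sqrt{n/\log n}$ grows much faster than $\log n$. Hence $\log(1/\varepsilon_{\mathrm{EX}}) \gg \log k$, and an entropy loss of order $\log(1/\varepsilon_{\mathrm{EX}})$ would \emph{not} be dominated by $9\log k$. Your attempted derivation of the output length therefore does not go through as written.

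The paper sidesteps this entirely: it does not re-derive the output length from a generic $m = k - O(\log(1/\varepsilon))$ bound but simply takes the specific parameters of~\cite[Corollary~5.1]{de2012trevisan} at face value, so the figure $m = \lceil k - 9\log k\rceil$ is quoted rather than argued for. If you want to expand on the paper's one-line citation, you need to look at the actual statement of that corollary and import its output-length guarantee directly, rather than trying to bound it via the asymptotics of $\varepsilon_{\mathrm{EX}}$.
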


\begin{proof}
	We use the construction given in~\cite[Corollary~5.1]{de2012trevisan}. To get the parameters stated here we note that provided $c$ is chosen small enough with respect to $\delta$ our choice of $\varepsilon_{\mathrm{EX}}$ ensures that the seed length $d = O(\log^2(n/\varepsilon_{\mathrm{EX}})\log m)$ can be made smaller than $\delta n$. The conclusion on the trace distance follows from the guarantee of strong extractor given by~\cite[Corollary~5.1]{de2012trevisan} using an argument similar to the proof of~\cite[Lemma 17]{arnon2015quantum}.
\end{proof}

We state the results of the above discussions as the following theorem stating the guarantees of the randomness expansion protocol.

\begin{thm}\label{thm:expansion}
	Let $\gamma,\delta>0$. Let $\varepsilon_{\mathrm{EX}}$ be as in Lemma~\ref{lem:ext_tre}. Then for all large enough $n$, $\varepsilon_{\mathrm{EA}} \in (0,1)$, and $\varepsilon_s$ such that $\varepsilon_{\mathrm{SA}} < \varepsilon_s +\varepsilon_{\mathrm{SA}} \leq \varepsilon_{\mathrm{EX}}$, Protocol~\ref{pro:randomness_generation} is an $(\varepsilon_{EA}^c + \varepsilon_{\mathrm{SA}},2\varepsilon_{\mathrm{EX}})$-secure $[(8h(\gamma)+\delta)n]\to [n\cdot \eta_{\mathrm{opt}}(\varepsilon_s-\varepsilon_{\mathrm{SA}} ,\varepsilon_{\mathrm{EA}})-9\log n]$ randomness expansion protocol. That is, either Protocol~\ref{pro:randomness_generation} aborts with probability greater than $\varepsilon_{\mathrm{EA}}$ or it generates a string of length $m \geq n\cdot \eta_{\mathrm{opt}}(\varepsilon_s,\varepsilon_{\mathrm{EA}})-9\log n$ (where  $\eta_{\mathrm{opt}}$ is defined in Equation~\eqref{eq:eta_opt}) such that
	\[
		\| \rho_{ZRE} - \rho_{U_m}\otimes \rho_{RE} \|_1\leq 2\varepsilon_{\mathrm{EX}} + \varepsilon_{\mathrm{SA}}\;,
	\]
	where $R$ is a register holding all the initial random bits used in the protocol (including the seed $\mathbf{S}$). 
\end{thm}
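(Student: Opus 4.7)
The plan is to combine three ingredients: the min-entropy bound on the raw output $\mathbf{AB}$ provided by Corollary~\ref{cor:main_generation}, the quantum-proof extractor of Lemma~\ref{lem:ext_tre}, and the bookkeeping of initial randomness provided by Lemma~\ref{lem:generate-inputs}. First I would observe that Protocol~\ref{pro:randomness_expansion} is, up to Step~\ref{prostep:abort_RE}, an instance of the entropy accumulation protocol (Protocol~\ref{pro:randomness_generation}) with the CHSH game and the particular choices $\mathcal{X}_g=\mathcal{Y}_g=\{0\}$, $\mathcal{X}_t=\mathcal{Y}_t=\{0,1\}$, and with the (optional) symmetrisation step omitted. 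Hence Corollary~\ref{cor:main_generation} applies verbatim: letting $\Omega$ be the event in Equation~\eqref{eq:good_event_def} that the protocol does not abort at Step~\ref{prostep:abort_RE}, either $1-\Pr(\Omega)\geq 1-\varepsilon_{\mathrm{EA}}$, or
\[
  H^{\varepsilon_s}_{\min}\bigl(\mathbf{AB}\,\big|\,\mathbf{XYT}\,E\bigr)_{\rho_{|\Omega}} \;>\; n\cdot\eta_{\mathrm{opt}}(\varepsilon_s,\varepsilon_{\mathrm{EA}})\;.
\]
Since the initial randomness $R$ (used to sample $\mathbf{T}$, the test-round inputs, and the extractor seed $\mathbf{Z}$) is by assumption independent of the device and of Eve's register, conditioning additionally on $R$ (and in particular on the seed $\mathbf{Z}$) does not decrease the smooth min-entropy above.

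Next I would apply the extractor $\mathrm{Ext}$ of Lemma~\ref{lem:ext_tre} in Step~\ref{prostep:ext_RE}, with input length $2n$ (for $\mathbf{AB}$), with parameter $\delta$ so that the seed length is $d\leq \lceil \delta n\rceil$, and with $k = \lfloor n\cdot\eta_{\mathrm{opt}}(\varepsilon_s,\varepsilon_{\mathrm{EA}}) \rfloor$. Because $\varepsilon_s\leq \varepsilon_{\mathrm{EX}}$, the smooth min-entropy bound above implies $H^{\varepsilon_{\mathrm{EX}}}_{\min}(\mathbf{AB}|\mathbf{XYT}\,E)_{\rho_{|\Omega}}\geq k$, so the extractor guarantee yields an output of length $m\geq k-9\log k \geq n\cdot\eta_{\mathrm{opt}}(\varepsilon_s,\varepsilon_{\mathrm{EA}})-9\log n$ satisfying
\[
  \bigl\| \rho_{S\,\mathbf{Z}\,\mathbf{XYT}\,E\,|\,\Omega} \;-\; \rho_{U_m}\otimes \rho_{U_d}\otimes \rho_{\mathbf{XYT}\,E\,|\,\Omega}\bigr\|_1 \;\leq\; 2\varepsilon_{\mathrm{EX}}\;.
\]
Absorbing $\mathbf{Z}$, $\mathbf{XYT}$, and any unused bits of the initial string into a single register $R$ holding all initial randomness, and using that $\mathbf{Z}$ is uniform and independent of everything else, gives the soundness claim of the theorem.

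For the randomness budget I would account separately for the three sources of consumption: generating the Bernoulli$(\gamma)$ string $\mathbf{T}$, sampling the test-round inputs, and producing the extractor seed. Lemma~\ref{lem:generate-inputs} (or its in-expectation version via the interval algorithm of~\cite{hoshi1997interval}) gives $h(\gamma)n+2$ bits on expectation (and $6\gamma n$ except with probability $\varepsilon_{\mathrm{SA}}$) for $\mathbf{T}$. A Hoeffding bound shows that the number of test rounds is at most $2\gamma n$ except with probability $\exp(-\Omega(\gamma^2 n))\leq \varepsilon_{\mathrm{SA}}$; each test round consumes $2$ bits, giving at most $\gamma n$ bits on expectation and $4\gamma n$ bits with high probability. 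Finally Lemma~\ref{lem:ext_tre} uses $d\leq \delta n$ seed bits. Summing yields $|R|\leq (h(\gamma)+\gamma+\delta)n+2$ in expectation and $(10\gamma+\delta)n$ with probability at least $1-2\varepsilon_{\mathrm{SA}}$. Completeness with error $\varepsilon_{\mathrm{EA}}^c=\exp(-2n\delta_{\mathrm{est}}^2)$ follows immediately from Lemma~\ref{lem:ea_completeness}, since on an i.i.d.\@ honest implementation none of the randomness-sampling or extraction steps can cause an abort.

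The main subtlety I expect is to justify carefully that the extractor output remains close to uniform \emph{jointly} with the seed $\mathbf{Z}$ and the other initial randomness $R$, rather than only with the quantum side information $E$ and inputs $\mathbf{XYT}$. This reduces, as noted above, to observing that $R$ is generated independently of the device and Eve, so the smooth min-entropy is unaffected by conditioning on it, and that a strong extractor by definition produces output that is close to uniform jointly with its seed. The only other mildly delicate point is the conditioning on the event $\Omega$: here I would use that $\Omega$ is determined by $\mathbf{ABXYT}$, and that Corollary~\ref{cor:main_generation} already delivers the smooth min-entropy bound directly on $\rho_{|\Omega}$, so no additional renormalisation loss is incurred.
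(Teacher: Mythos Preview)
Your proposal is correct and follows essentially the same approach as the paper: apply Corollary~\ref{cor:main_generation} to obtain the smooth min-entropy bound on $\mathbf{AB}$ conditioned on non-abort, feed this into the extractor guarantee of Lemma~\ref{lem:ext_tre} (using $\varepsilon_s\leq\varepsilon_{\mathrm{EX}}$), and invoke Lemma~\ref{lem:ea_completeness} for completeness; the randomness accounting via Lemma~\ref{lem:generate-inputs}, the Hoeffding bound on test rounds, and the seed length from Lemma~\ref{lem:ext_tre} is exactly the bookkeeping the paper carries out in the paragraphs preceding the theorem. Your write-up is in fact more explicit than the paper's brief proof about the independence of $R$ from the device and Eve (justifying that conditioning on $R$ preserves the min-entropy) and about the strong-extractor property giving joint uniformity with the seed, points the paper leaves implicit.
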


\begin{proof}
	Let $D$ be any device and $\rho$ the state (as defined in Equation~\eqref{eq:final_state_before_abort}) generated right before Step~\ref{prostep:abort_RE} of Protocol~\ref{pro:randomness_expansion}. Let $\Omega$ (as defined in Equation~\eqref{eq:good_event_def}) be the event that the protocol does not abort, and $\rho_{|\Omega}$ the state conditioned on $\Omega$. Then, applying Corollary~\ref{cor:main_generation} we obtain that for any $\varepsilon_{\mathrm{EA}},\varepsilon'_{\text{s}}\in(0,1)$, either the protocol aborts with probability greater than $1-\varepsilon_{\mathrm{EA}}$ or
	\begin{equation}\label{eq:main_thmb}
		 H^{\varepsilon'_{\text{s}}+\varepsilon_{\mathrm{SA}}}_{\min} \left( \mathbf{A B} | \mathbf{X Y T} E \right)_{\rho_{|\Omega}} > n\cdot \eta_{\mathrm{opt}}(\varepsilon'_{\text{s}},\varepsilon_{\mathrm{EA}}) \;, 
	\end{equation}
	where  $\eta_{\mathrm{opt}}$ is defined in Equation~\eqref{eq:eta_opt} and the additional smoothness parameter $\varepsilon_{\mathrm{SA}}$ accounts for the error in the verifier's input sampling procedure, as described in Lemma~\ref{lem:generate-inputs}.\footnote{The $\log(13)$ term in the definition of $\eta$ in Equation~\eqref{eq:eta_opt} could be replaced by a $\log(9)$ to account for the fact that here $d_{B_i}=2$, instead of $d_{B_i}=3$ in Section~\ref{sec:entropy-chsh}.} Given the bound Equation~\eqref{eq:main_thmb}, the guarantee on $m$ claimed in the theorem follows from Lemma~\ref{lem:ext_tre}. 
	
	Finally, completeness of the protocol follows directly from completeness of the entropy accumulation protocol, Protocol~\ref{pro:randomness_generation}, as stated in Lemma~\ref{lem:ea_completeness}, and the verifier's input sampling procedure, described in Lemma~\ref{lem:generate-inputs}. 
\end{proof}

Assuming a choice $\delta=\gamma$, the number of random bits required in the protocol scales linearly, roughly as~$\sim 9\gamma n$. 

For $\gamma\to 1$, which corresponds to randomness generation (also called randomness certification, i.e. the guarantee of ``fresh'' randomness independent of the inputs) the values of $\eta_{opt}$ plotted in Figure~\ref{fig:eta_rates} give a good idea of the rate of randomness generation that can be achieved from Protocol~\ref{pro:randomness_expansion} for different choices of the security parameters. The rate is to be compared to the rate that was shown achievable for randomness expansion in the case of \emph{classical} adversaries only in~\cite{pironio2010random} (Figure~2; see also~\cite{pironio2013security}). Our result, in contrast, holds against \emph{quantum} adversaries. The rate is much better than the ones obtained in~\cite{vazirani2012certifiable,miller2014robust}. 

For randomness expansion one would select a small value of $\gamma$. If $\gamma$ is a small constant then Theorem~\ref{thm:expansion} guarantees a constant expansion factor. One may wish to go further, by selecting a $\gamma$ that scales sub-linearly, polynomially or even poly-logarithmically,  in the number of rounds (e.g., to achieve exponential expansion).
It is possible to adapt our results to guarantee a linear production of  randomness even for such parameters by suitably adapting Protocol 3 so  that rounds are grouped in blocks, as in the modification of the entropy accumulation protocol described in Appendix~\ref{sec:better_rate}.

\section{Open questions}\label{sec:open_questions}

Several questions are left open. 
\begin{enumerate}
	\item Our results yield essentially optimal values for the leading and second-order constants, $c_1$ and $c_2$, that govern the achievable rate curves. 
	As loophole-free Bell tests (a necessity for DI cryptography) are finally being realized~\cite{hensen2015loophole,shalm2015strong,giustina2015significant}, it becomes increasingly relevant to achieve the best possible dependence of the rate curves on the number of rounds $n$, even for very small values of $n$. As can be seen from Figures~\ref{fig:eta_rates} and~\ref{fig:qkd_rates_Q_mod} our rate curves approach (and essentially coincide) with the optimal curves as the number of rounds increases. One thing that can perhaps still be further optimized is the dependency on the number of rounds, or in other words, how fast the curves approach the optimal curve. The explicit dependency on $n$ given in Equation~\eqref{eq:key_length_def} is already close to optimal, but the numerical analysis used to plot the curves can be made somewhat better for the range of $n=10^4-10^6$. Although this seems like a minor issue, it can make actual implementations more feasible. 
	\item Are there similar protocols, based on a different Bell inequality, that can lead to better entropy rates?  To apply our proof to other Bell inequalities one should find a good bound on the min-tradeoff function, as done in Equation~\eqref{eq:one_box_entropy_final} for the CHSH inequality. For many Bell inequalities such bounds are known, but for the min-entropy instead of the von Neumann entropy. In most cases using a bound on the min-entropy will result in far from optimal rate curves. Therefore, to adapt our protocol to other Bell inequalities one should probably bound the min-tradeoff function using the von Neumann entropy directly. Unfortunately, we do not know of any general technique to achieve such tight bounds. 
	\item Are there other protocols, e.g., with two-way classical post-processing, which achieve better key rates? The optimality of our key rates is only with respect to the structure of the considered protocol.
\end{enumerate}

\paragraph{Acknowledgements.} 
The authors thank Fr\'{e}d\'{e}ric Dupuis and Omar Fawzi for discussion about the Entropy Accumulation Theorem, as well as Max Kessler and Christopher Portman for helpful comments and J\'{e}r\'{e}my Ribeiro and Gl\'{a}ucia Murta for spotting a mistake in a previous version of this work.
RAF and RR were supported by the Stellenbosch Institute for Advanced Study (STIAS), by the European Commission via the project “RAQUEL”, by the Swiss National Science Foundation via the  National Center for Competence in Research, QSIT, and by the Air Force Office of Scientific Research (AFOSR) via grant~FA9550-16-1-0245. TV was partially supported by NSF CAREER Grant CCF-1553477, an AFOSR YIP award, the IQIM, and NSF Physics Frontiers Center (NFS Grant PHY-1125565) with support of the Gordon and Betty Moore Foundation (GBMF-12500028).

\appendix
\appendixpage

\section{Summary of Parameters, constants, and variables}

\begin{table}[H]
	\begin{center}
	\begin{tabular}{| l | l | l |} 
	\hline
		Symbol & Meaning & Relation to other parameters \\ \hhline{|=|=|=|}
		$n\in\mathbb{N}_+$ & Number of rounds  & \\ \hline
		$\gamma\in(0,1]$ & Expected fraction of Bell violation estimation rounds & \\ \hline
		$\omega_{\mathrm{exp}}\in[0,1]$ & \specialcell{Expected winning probability in an honest  (perhaps  \\ noisy) implementation} & \\ \hline
		$\delta_{\mathrm{est}} \in (0,1)$ & \specialcell{Width of the statistical confidence interval \\ for the Bell violation estimation test} & \\ \hline
		$\varepsilon_{\text{s}}$ & Smoothing parameter & \\ \hline
		$\varepsilon^c_{EA} $ & Completeness error of the entropy accumulation protocol & Given in Eq.~\eqref{eq:completeness_error_EA}\\ \hline
		$\varepsilon_{\mathrm{EA}}$ & The error probability of the entropy accumulation protocol & \\ \hline
		$\mathrm{leak_{EC}}$ & The leakage of the error correction protocol & Given in Eq.~\eqref{eq:ec_leakage} \\ \hline
		$\varepsilon_{\mathrm{EC}},\varepsilon'_{\mathrm{EC}}$ & Error probabilities of the error correction protocol & \\ \hline
		$\varepsilon_{\mathrm{EC}}^c$ & Completeness error of the error correction protocol & $\varepsilon_{\mathrm{EC}}^c=\varepsilon_{\mathrm{EC}}'+\varepsilon_{\mathrm{EC}}$\\ \hline
		$\varepsilon_{\mathrm{PE}}^c$ & Completeness error of the parameter estimation step & Given in Eq.~\eqref{eq:pe_completeness} \\ \hline
		$\varepsilon_{\mathrm{PA}}$ & Error probability of the privacy amplification protocol & Given in Eq.~\eqref{eq:universal_hashing_length}\\ \hline
		$\ell$ & Final key length in the DIQKD protocol & Given in Eq.~\eqref{eq:key_length_def} \\ \hline
		$\varepsilon^c_{QKD}$ & Completeness error of the DIQKD protocol & $\varepsilon^c_{QKD}\leq\varepsilon^c_{EC} + \varepsilon^c_{EA}+\varepsilon_{EC} $ \\ \hline
		$\varepsilon^s_{QKD}$ & Soundness error of the DIQKD protocol & $\varepsilon^s_{\mathrm{QKD}} \leq \varepsilon_{\mathrm{EC}} + \varepsilon_{\mathrm{PA}} + \varepsilon_{\text{s}}$ \\ \hline
	$\varepsilon_{\mathrm{SA}}$ & Error probability of the input sampling procedure used  & Given in Lemma~\ref{lem:generate-inputs}\\ 
	& in the randomness expansion protocol& \\ \hline
		$\varepsilon_{\mathrm{EX}}$ & Error probability of the extractor used in the & Given in Lemma~\ref{lem:ext_tre} \\
		& randomness expansion protocol& \\ \hline
	 \end{tabular}
	\end{center}\caption{Parameters and constants used throughout the paper}\label{tb:parameters}
\end{table}

\begin{table}[H]
	\begin{center}
	\begin{tabular}{| l | l |} 
	\hline
		Random variables and systems & Meaning \\ \hhline{|=|=|}
		$X_i\in\mathcal{X}$ & Alice's input in round $i\in[n]$ \\ \hline
		$Y_i\in\mathcal{Y}$ & Bob's input in round $i\in[n]$ \\ \hline
		$A_i\in\mathcal{A}$ & Alice's output in round $i\in[n]$ \\ \hline
		$B_i\in\mathcal{B}$ & Bob's output in round $i\in[n]$ \\ \hline
		$T_i\in\{0,1\}$ & \specialcell{Indicator of the estimation test in round $i$: \\ 
					$T_i = \begin{cases}
					0 &\text{$i$'th round is not a test round} \\
					1 &\text{$i$'th round is a test round} 
					\end{cases}$} \\ \hline
		$F_i\in\{0,1\}$ & A random uniform bit for the symmetrisation step in round $i\in[n]$ \\ \hline
		$C_i \in \{\perp,0,1\}$ & \specialcell{Indicator of the correlation in the test rounds: \\ 
					$C_i = \begin{cases}
					\perp &T_i =0 \\
					0 &\text{$T_i=1$ and the test fails} \\
					1 &\text{$T_i=1$ and the test succeeds} \;.
					\end{cases}$} \\ \hhline{|=|=|}
		$E$ & Register of Eve's quantum state \\ \hline
		$R_i$ & \specialcell{Register of the (unknown) quantum state $\rho_{Q_AQ_B}^i$ of Alice and \\ Bob's devices after step $i$ of the protocol, for $i\in\{0\}\cup[n]$.} \\ \hline
	 \end{tabular}
	\end{center}\caption{Random variables and quantum systems used throughout the paper}\label{tb:RV_table}
\end{table}

\section{An improved dependency on the test probability $\gamma$}\label{sec:better_rate}

In this section we show how the EAT can be used in a slightly different way than what was done in the main text. This results in an entropy rate which has a better dependency on the probability of a test round $\gamma$, compared to the entropy rate given in Equation~\eqref{eq:eta_opt} in Section~\ref{sec:entropy-chsh}. The improved entropy rate derived here is the one used for calculating the  key rates of the DIQKD protocol is Section~\ref{sec:qkd_curves}.

\subsection{Modified entropy accumulation protocol}

We use a different entropy accumulation protocol, given as Protocol~\ref{pro:randomness_generation_mod}.
In this modified protocol instead of considering each round separately we consider blocks of rounds. A block is defined by a sequence of rounds: in each round a test is carried out with probability $\gamma$ (and otherwise the round is a generation round). The block ends when a test round is being performed and then the next block begins.  If for $s_{\max}$ rounds there was no test, the block ends without performing a test and the next begins.  Thus, the blocks can be of different length, but they all consist at most $s_{\max}$ rounds.

In this setting, instead of fixing the number of rounds $n$ in the beginning of the protocol, we fix the number of blocks $m$. The expected length of  block is 
\begin{align}
	\bar{s} = \sum_{s\in [s_{\max}]}\left[ s (1-\gamma)^{(s-1)}\gamma \right] +s_{\max}(1-\gamma)^{s_{\max}} &=  \frac{1- (1-\gamma)^{s_{\max}}}{\gamma} \nonumber \\
	&= \sum_{s\in [s_{\max}]}\left[ (1-\gamma)^{(s-1)} \right] \;.\label{eq:exp_size}
\end{align}
The expected number of rounds is denoted by $\bar{n} = m\cdot \bar{s}$.

Compared to the main text, we now have a RV $\tilde{C}_j\in\{0,1,\perp\}$ for each block, instead of each round. Alice and Bob set $\tilde{C}_j$ to be $0$ or $1$ depending on the result of the game in the block's test round (i.e., the last round of the block), or $\tilde{C}_j=\perp$ if a test round was not carried out in the block. By the definition of the blocks we have $\Pr[\tilde{C}_j=\perp]=(1-\gamma)^{s_{\max}}$. 

\begin{algorithm}
\caption{Modified entropy accumulation protocol}
\label{pro:randomness_generation_mod}
\begin{algorithmic}[1]
	\STATEx \textbf{Arguments:} 
		\STATEx\hspace{\algorithmicindent} $G$ -- two-player non-local game
		\STATEx\hspace{\algorithmicindent} $\mathcal{X}_g,\mathcal{X}_t \subset \mathcal{X}$ -- generation and test inputs for Alice
		\STATEx\hspace{\algorithmicindent} $\mathcal{Y}_g,\mathcal{Y}_t \subset \mathcal{Y}$ -- generation and test inputs for Bob
		\STATEx\hspace{\algorithmicindent} $D$ -- untrusted device of (at least) two components that can play $G$ repeatedly
		\STATEx\hspace{\algorithmicindent} $m\in \mathbb{N}_+$ -- number of blocks
		\STATEx\hspace{\algorithmicindent} $s_{\max} \in \mathbb{N}_+$ --maximal length of a block
		\STATEx\hspace{\algorithmicindent} $\gamma \in (0,1]$ -- probability of a test round 
		
		\STATEx\hspace{\algorithmicindent} $\omega_{\mathrm{exp}}$ -- expected winning probability in $G$ for an honest (perhaps noisy) implementation    
		\STATEx\hspace{\algorithmicindent} $\delta_{\mathrm{est}} \in (0,1)$ -- width of the statistical confidence interval for the estimation test
		
	\STATEx
	
	\STATE For every block $j\in[m]$ do Steps~\ref{prostep:ini_block}-\ref{prostep:calculate_Ci_EA_mod}:
		\STATE\hspace{\algorithmicindent} Set $i=0$ and $C_j=\perp$.\label{prostep:ini_block}
		\STATE\hspace{\algorithmicindent} If $i \leq s_{\max}$:
		
		\STATE\hspace{\algorithmicindent}\hspace{\algorithmicindent} Set $i=i+1$.
		
		\STATE\hspace{\algorithmicindent}\hspace{\algorithmicindent}Alice and Bob choose $T_i\in\{0,1\}$ at random such that $\Pr(T_i=1)=\gamma$. 
		\STATE\hspace{\algorithmicindent}\hspace{\algorithmicindent}If $T_i=0$ Alice and Bob choose inputs $X_i\in\mathcal{X}_g$ and $Y_i\in \mathcal{Y}_g$ respectively. If $T_i=1$ they choose inputs $X_i\in\mathcal{X}_t$ and $Y_i\in \mathcal{Y}_t$. 
		\STATE\hspace{\algorithmicindent}\hspace{\algorithmicindent}Alice and Bob use $D$ with $X_i,Y_i$ and record their outputs as $A_i$ and $B_i$ respectively. \label{prostep:measurementb}
		\STATE\hspace{\algorithmicindent}\hspace{\algorithmicindent}If $T_i=0$ Bob updates $B_i$ to $B_i = \perp$.
		\STATE\hspace{\algorithmicindent}\hspace{\algorithmicindent}If $T_i=1$ they set $\tilde{C}_j =w\left(A_i,B_i,X_i,Y_i\right)$ and $i=s_{\max}+1$.\label{prostep:calculate_Ci_EA_mod}

	\STATE Alice and Bob abort if $\sum_{j\in[m]} \tilde{C}_j < \left[\omega_{\mathrm{exp}} \left(1-(1-\gamma)^{s_{\max}}\right) - \delta_{\mathrm{est}}\right] \cdot m$. \label{prostep:abort_modified_EA}
\end{algorithmic}
\end{algorithm}

Note that the symmetrisation step was dropped in Protocol~\ref{pro:randomness_generation_mod} just for simplicity, as it plays no role in the considered modification; it can (and should) be handled exactly as done in Section~\ref{sec:entropy-chsh}.

\subsection{Modified min-tradeoff function}

Below, we apply the EAT on blocks of outputs instead of single rounds directly. Let $\mathcal{M}_j$ denote the EAT channels defined by the actions of Steps~\ref{prostep:ini_block}-\ref{prostep:calculate_Ci_EA_mod} in Protocol~\ref{pro:randomness_generation_mod}, combined with the quantum channels that model the device's actions in those steps. 
It is easy to verify that $\mathcal{M}_j$ fulfil the necessary conditions given in Definition~\ref{def:eat_channels}. 

We now construct a min-tradeoff function for $\mathcal{M}_j$. Let $\tilde{p}$ be a probability distribution over $\{0,1,\perp\}$. Our goal is to find $F_{\min}$ such that 
\begin{equation}\label{eq:eat_f_min_bound_mod}
	\forall j\in[m]\qquad	F_{\min}(\tilde{p}) \leq \inf_{\sigma_{R_{j-1}R'}:\mathcal{M}_j(\sigma)_{\tilde{C}_j}=\tilde{p}} H\left( \vec{A_j} \vec{B_j} | \vec{X_j} \vec{Y_j} \vec{T_j}  R' \right)_{\mathcal{M}_j(\sigma)} \;,
\end{equation}
where $\vec{A_j}$ is a vector of varying length (but at most $s_{\max}$). We use $A_{j,i}$ to denote the $i$'th entry of $\vec{A_j}$ and $A_{j,1}^{j,i-1} = A_{j,1}\dotsc A_{j,i-1}$. Since we will only be interested in the entropy of $\vec{A_j}$ we can also describe it as a vector of length $s_{\max}$ which is initialised to be all $\perp$. For every actual round being performed in the block the value of $A_{j,i}$ is updated. Thus, the entries of  $\vec{A_j}$ which correspond to rounds which were not performed do not contribute to the entropy. We use similar notation for the other vectors of RVs.

To lower-bound the right-hand side of Equation~\eqref{eq:eat_f_min_bound_mod} we first use the chain rule
\begin{align}\label{eq:block_cr}
	H\left( \vec{A_j} \vec{B_j} | \vec{X_j} \vec{Y_j} \vec{T_j}  R' \right) = \sum_{i\in [s_{\max}]}  H(A_{j,i}B_{j,i}|\vec{X_j} \vec{Y_j} \vec{T_j}  R' A_{j,1}^{j,i-1}B_{j,1}^{j,i-1} )\;.
\end{align}
Next, for every $i\in[s_{\max}]$,
\begin{align*}
	H(A_{j,i}B_{j,i}|\vec{X_j} \vec{Y_j} \vec{T_j}  R' A_{j,1}^{j,i-1}B_{j,1}^{j,i-1} ) =&\Pr[T_{j,1}^{j,i-1} = \vec{0}] H(A_{j,i}B_{j,i}|\vec{X_j} \vec{Y_j}  R' A_{j,1}^{j,i-1}B_{j,1}^{j,i-1} T_{j,i}^{j,s_{\max}} T_{j,1}^{j,i-1} = \vec{0} ) \\
	&+ \Pr[T_{j,1}^{j,i-1} \neq \vec{0}] H(A_{j,i}B_{j,i}|\vec{X_j} \vec{Y_j}  R' A_{j,1}^{j,i-1}B_{j,1}^{j,i-1} T_{j,i}^{j,s_{\max}} T_{j,1}^{j,i-1} \neq \vec{0} ) \\
	=& (1-\gamma)^{(i-1)} H(A_{j,i}B_{j,i}|\vec{X_j} \vec{Y_j}  R' A_{j,1}^{j,i-1}B_{j,1}^{j,i-1} T_{j,i}^{j,s_{\max}} T_{j,1}^{j,i-1} = \vec{0} )
\end{align*}
since the entropy is not zero only if the $i$'th round is being performed in the block, i.e., if a test was not performed before that round. Plugging this into Eq.~\eqref{eq:block_cr} we get
\[
	H\left( \vec{A_j} \vec{B_j} | \vec{X_j} \vec{Y_j} \vec{T_j}  R' \right) = \sum_{i\in [s_{\max}]} (1-\gamma)^{(i-1)} H(A_{j,i}B_{j,i}|\vec{X_j} \vec{Y_j}  R' A_{j,1}^{j,i-1}B_{j,1}^{j,i-1} T_{j,i}^{j,s_{\max}} T_{j,1}^{j,i-1} = \vec{0} )\;.
\]

Each term in the sum can now be identified as the entropy of a single round. We can therefore use the bound derived in the main text, as given in Equation~\eqref{eq:entropy_bound_for_min_tradeoff}. For this we denote by $\omega_i$ the winning probability in the $i$'th round (given that a test was not performed before). Then it holds that
\begin{equation}\label{eq:to_mini}
	H\left( \vec{A_j} \vec{B_j} | \vec{X_j} \vec{Y_j} \vec{T_j}  R' \right) \geq \sum_{i\in [s_{\max}]} (1-\gamma)^{(i-1)} \left[1 - h\left( \frac{1}{2} + \frac{1}{2}\sqrt{16\omega_i \left(\omega_i-1\right) +3}  \right) \right] \;,
\end{equation}
where, by the actions of the EAT channel $\mathcal{M}_j$, the $\omega_i$'s must fulfil the constraint
\begin{equation}\label{eq:constraint}
	\tilde{p}(1) = \sum_{i\in[s_{\max}]} \gamma (1-\gamma)^{(i-1)}\omega_i \;.
\end{equation}

Note that, similarly to what was done in the main text,  we only need to consider $\tilde{p}$ for which $\tilde{p}(1) + \tilde{p}(0) = 1-(1-\gamma)^{s_{\max}}$ (otherwise the condition on the min-tradeoff function is trivial, as the infimum is over an empty set). 

To find the min-tradeoff function $F_{\min}(\tilde{p})$ we therefore need to minimise Equation~\eqref{eq:to_mini} under the constraint of Equation~\eqref{eq:constraint}. The following lemma shows that the minimum is achieved when all $\omega_i$ are equal.

\begin{lemma}
	The minimum of the function given on the righthand-side of Equation~\eqref{eq:to_mini} over $\omega_i$ constrained by Equation~\eqref{eq:constraint} is achieved for $\omega^*_i=\frac{\tilde{p}(1)}{1-(1-\gamma)^{s_{\max}}}$ for all $i\in[s_{\max}]$.
\end{lemma}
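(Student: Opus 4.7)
The plan is to reduce the constrained minimization to an application of Jensen's inequality after a suitable reparameterization. First, I would normalize the geometric weights: set $W = \sum_{i \in [s_{\max}]}(1-\gamma)^{i-1} = (1-(1-\gamma)^{s_{\max}})/\gamma$ and $\pi_i = (1-\gamma)^{i-1}/W$, so that $(\pi_i)_{i\in[s_{\max}]}$ is a probability distribution. Dividing the constraint of Equation~\eqref{eq:constraint} by $\gamma W = 1-(1-\gamma)^{s_{\max}}$ rewrites it as $\sum_i \pi_i \omega_i = \bar{\omega}$, where $\bar{\omega} = \tilde{p}(1)/(1-(1-\gamma)^{s_{\max}})$, while the objective in Equation~\eqref{eq:to_mini} becomes $W \sum_i \pi_i \, g(\omega_i)$, with $g$ the single-round function defined in Equation~\eqref{eq:def-g}.

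Second, the analytic core of the argument is to establish that $g$ is convex on the quantum-achievable interval $[3/4,(2+\sqrt{2})/4]$ of CHSH winning probabilities (which is where the bound Equation~\eqref{eq:entropy_bound_for_min_tradeoff} contributes nontrivially). I would verify this by a direct second-derivative computation: writing $u(\omega) = \sqrt{16\omega(\omega-1)+3}$ and $\tilde{g}(u) = 1 - h(1/2 + u/2)$ so that $g = \tilde{g}\circ u$, one computes $g''(\omega)$ and checks $g''\geq 0$ on the interval. Note that although $u$ is concave in $\omega$ (as is easily seen from $u^2 = 4(2\omega-1)^2-1$), the blow-up of $\tilde{g}''$ near $u=1$, i.e.\ near $\omega = (2+\sqrt{2})/4$, dominates, producing overall convexity; this is consistent with the fact that the Pironio~\textit{et al.}~\cite{pironio2009device} bound in Equation~\eqref{eq:chi-bound} is known to be tight, as any nontrivial concave portion would be contradicted by convex-combining the two tight strategies at its endpoints.

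Third, given convexity of $g$, Jensen's inequality yields
\[
\sum_i \pi_i g(\omega_i) \;\geq\; g\Bigl(\sum_i \pi_i \omega_i\Bigr) \;=\; g(\bar{\omega}),
\]
with equality attained exactly when all $\omega_i$ are equal. The uniform assignment $\omega_i^* = \bar{\omega}$ is feasible (it satisfies the constraint by construction), so it attains the minimum. Multiplying back through by $W$ recovers the statement, giving $\omega_i^* = \tilde{p}(1)/(1-(1-\gamma)^{s_{\max}})$ for all $i\in[s_{\max}]$ as the claimed minimizer.

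The main obstacle is the convexity verification: because $g$ is a composition of a convex increasing function with a concave function, convexity does not follow from a general composition rule and has to be unpacked by hand. Once that is done, the change of variables and the Jensen step are routine, and no further ingredients are required.
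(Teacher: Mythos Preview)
Your approach is correct and genuinely different from the paper's. The paper uses Lagrange multipliers: setting $\nabla f=-\lambda\nabla g$ yields that the derivative of the single-round entropy function must take the same value $-\lambda\gamma$ at every $\omega_i^*$, and then the paper asserts that this derivative is strictly increasing, forcing all $\omega_i^*$ to coincide. Your Jensen route and the paper's route rest on the same analytic fact---convexity of $g$ on $[3/4,(2+\sqrt{2})/4]$, which is exactly the statement that $g'$ is strictly increasing---but you make this the explicit centerpiece, whereas the paper invokes it in one sentence without verification. Your approach buys a cleaner global conclusion: Jensen's inequality directly certifies that the equal-$\omega_i$ point is the \emph{global} minimum over the whole constraint set, while the Lagrange argument as written only locates interior critical points and does not discuss boundary configurations or second-order conditions. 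The paper's argument is shorter on the page precisely because it leaves those issues implicit. Your heuristic about tightness of the Pironio~\emph{et al.} bound ruling out concavity is suggestive but not a proof; the second-derivative computation you outline (writing $g=\tilde g\circ u$ with $u^2=4(2\omega-1)^2-1$ and checking that the convexity of $\tilde g$ dominates the concavity of $u$) is the right way to close it, and it does go through.
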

\begin{proof}
	Let $\vec{\omega} = \omega_1,\dotsc,\omega_{s_{\max}}$ and 
	\begin{align*}
		&f(\vec{\omega}) \equiv \sum_{i\in [s_{\max}]} (1-\gamma)^{(i-1)} \left[1 - h\left( \frac{1}{2} + \frac{1}{2}\sqrt{16\omega_i \left(\omega_i-1\right) +3}  \right) \right] \; ; \\
		&g(\vec{\omega}) \equiv \sum_{i\in[s_{\max}]} \gamma (1-\gamma)^{(i-1)}\omega_i  - \tilde{p}(1) \;. 
	\end{align*}
	Using the method of Lagrange multipliers, we should look for $\vec{\omega}^{*}$ such that $g(\vec{\omega}^{*})=0$ and $\nabla f (\vec{\omega}^{*})= - \lambda \nabla g (\vec{\omega}^{*})$ for some constant $\lambda$. $\nabla f (\vec{\omega}^{*})= - \lambda \nabla g (\vec{\omega}^{*})$ implies that for any $i$, 
	\[
		(1-\gamma)^{(i-1)} \frac{\mathrm{d}}{\mathrm{d}\omega_i}\left[1 - h\left( \frac{1}{2} + \frac{1}{2}\sqrt{16\omega_i \left(\omega_i-1\right) +3}  \right) \right] \Big|_{\omega^*_i}= - \lambda \gamma (1-\gamma)^{(i-1)} 
	\]
	and therefore
	\[
		 \frac{\mathrm{d}}{\mathrm{d}\omega_i}\left[1 - h\left( \frac{1}{2} + \frac{1}{2}\sqrt{16\omega_i \left(\omega_i-1\right) +3}  \right) \right] \Big|_{\omega^*_i}= -\lambda \gamma \;.
	\]
	The function on the left-hand side of the above equation is strictly increasing. Hence, it must be that all $\omega^*_i$ are equal to some constant $\omega^*$.
	
	Lastly,  we must have  $g(\vec{\omega}^{*})=0$. Thus,
	\[
		\sum_{i\in[s_{\max}]} \gamma (1-\gamma)^{(i-1)}\omega^*  - \tilde{p}(1) =0 
	\]
	which means
	\[
		\omega^* = \frac{\tilde{p}(1)}{\sum_{i\in[s_{\max}]} \gamma (1-\gamma)^{(i-1)}} =\frac{\tilde{p}(1)}{1-(1-\gamma)^{s_{\max}}} \;. \qedhere
	\]
\end{proof}

Plugging the minimal values of $\omega_i$ into Equation~\eqref{eq:to_mini} we get that
\begin{align*}
	H\left( \vec{A_j} \vec{B_j} | \vec{X_j} \vec{Y_j} \vec{T_j}  R' \right) &\geq \sum_{i\in [s_{\max}]} (1-\gamma)^{(i-1)} \left[1 - h\left( \frac{1}{2} + \frac{1}{2}\sqrt{16\frac{\tilde{p}(1)}{1-(1-\gamma)^{s_{\max}}} \left(\frac{\tilde{p}(1)}{1-(1-\gamma)^{s_{\max}}}-1\right) +3}  \right) \right] \\
	&= \bar{s} \left[1 - h\left( \frac{1}{2} + \frac{1}{2}\sqrt{16\frac{\tilde{p}(1)}{1-(1-\gamma)^{s_{\max}}} \left(\frac{\tilde{p}(1)}{1-(1-\gamma)^{s_{\max}}}-1\right) +3}  \right) \right] \;,
\end{align*}
where we used Equation~\eqref{eq:exp_size} to get the last equality. 

From this point we can follow the same steps as in Section~\ref{sec:entropy-chsh} (cutting and gluing the function etc.\@). The resulting min-tradeoff function is given by

\begin{align}
	&g(\tilde{p}) =  \begin{cases} 
			 \bar{s} \left[1 - h\left( \frac{1}{2} + \frac{1}{2}\sqrt{16\frac{\tilde{p}(1)}{1-(1-\gamma)^{s_{\max}}} \left(\frac{\tilde{p}(1)}{1-(1-\gamma)^{s_{\max}}}-1\right) +3}  \right) \right] &  \frac{\tilde{p}(1)}{1-(1-\gamma)^{s_{\max}}}\in\left[\frac{3}{4},\frac{2+\sqrt{2}}{4}\right] \\
			\bar{s}& \frac{\tilde{p}(1)}{1-(1-\gamma)^{s_{\max}}}\in\left[\frac{2+\sqrt{2}}{4},1\right]\;,
			\end{cases}\notag\\
	&F_{\min}\left(\tilde{p},\tilde{p}_t\right) = \begin{cases}
	g\left(\tilde{p}\right)&  \tilde{p}(1) \leq \tilde{p}_t(1) \;  \\
	\frac{\mathrm{d}}{\mathrm{d}\tilde{p}(1)} g(\tilde{p})\big|_{\tilde{p}_t}  \cdot \tilde{p}(1)+ \Big( g(\tilde{p}_t) -	\frac{\mathrm{d}}{\mathrm{d}\tilde{p}(1)} g(\tilde{p})\big|_{\tilde{p}_t} \cdot \tilde{p}_t(1) \Big)& \tilde{p}(1)> \tilde{p}_t(1)\;.
	\end{cases} \nonumber
\end{align}
The min-tradeoff function given above is effectively identical to the one derived in the main text; although it gives us a bound on the von Neumann entropy in a block, instead of a single round, this bound is exactly the expected length of a block, $\bar{s}$, times the entropy in one round.  For $s_{\max}=1$ the min-tradeoff function constructed in the main text is retrieved. 

\subsection{Modified entropy rate}

Since we apply the EAT on the blocks, the entropy rate is now defined to be the entropy \emph{per block}. We therefore get
\begin{align*}
	&\eta(\tilde{p},\tilde{p}_t,\varepsilon_{\text{s}},\varepsilon_{\text{e}}) =  F_{\min}\left(\tilde{p}, \tilde{p}_t\right) - \frac{1}{\sqrt{m}}2\left( \log (1+2\cdot 2^{s_{\max}} 3^{s_{\max}}) + \Big\lceil \frac{\mathrm{d}}{\mathrm{d}\tilde{p}(1)} g(\tilde{p})\Big|_{\tilde{p}_t}\Big\rceil  \right)\sqrt{1-2 \log (\varepsilon_{\text{s}} \cdot \varepsilon_{\text{e}})}\;, \nonumber\\
	&\eta_{\mathrm{opt}}(\varepsilon_{\text{s}}, \varepsilon_{\text{e}}) = \max_{\frac{3}{4} < \tilde{p}_t(1) < \frac{2+\sqrt{2}}{4}} \; \eta(\omega_{\mathrm{exp}}\left[ 1- (1-\gamma)^{s_{\max}} \right] - \delta_{\mathrm{est}},\tilde{p}_t,\varepsilon_{\text{s}},\varepsilon_{\text{e}}) \;,
\end{align*}
and the total amount of entropy is given by
\begin{equation}\label{eq:full_entropy_mod}
		 H^{\varepsilon_{\text{s}}}_{\min} \left( \mathbf{A B} | \mathbf{X Y T F} E \right)_{\rho_{|\Omega}} > m\cdot \eta_{\mathrm{opt}}(\varepsilon_{\text{s}},\varepsilon_{\mathrm{EA}}) = \frac{\bar{n}}{\bar{s}}\cdot \eta_{\mathrm{opt}}(\varepsilon_{\text{s}},\varepsilon_{\mathrm{EA}})  \;.
\end{equation}

By choosing  $s_{\max} = \lceil\frac{1}{\gamma}\rceil$ the scaling of the entropy rate with $\gamma$ is better than the rate derived in the main text.
In particular, a short calculation reveals that the second order term scales, roughly, as $\sqrt{\bar{n}/\gamma}$ instead of $\sqrt{n}/\gamma$.

\subsection{Modified key rate}

To get the final key rate we need to repeat the same steps from the main text, but this time applied to random variables of varying length. 

For this we first observe that, with high probability, the actual number of rounds, $n$, cannot be much larger than the expected number of rounds $\bar{n}$. Let $S_i$ be the RV describing the length of block $i$, for $i\in[m]$, and $N$ the RV describing the total number of rounds. Then $N=S_1+\dots +S_m$. Since all the $S_i$ are independent, identical, and have values in $\left[1,\frac{1}{\gamma}\right]$ we have
\[
	\Pr[N\geq \bar{n}+t] \leq \exp\left[ - \frac{2t^2\gamma^2}{m(1-\gamma)^2}\right] \;.
\]
Let $\varepsilon_t=\exp\left[ - \frac{2t^2\gamma^2}{m(1-\gamma)^2}\right]$ then
\[
	t=\sqrt{-\frac{m(1-\gamma)^2\log\varepsilon_t}{2\gamma^2}} \;.
\]

The first step in the derivation of the key rate which needs to be changed is the one given in Equation~\eqref{eq:bound_max_entropy}. The quantity that needs to be upper bounded is $H^{\frac{\varepsilon_{\text{s}}}{4}}_{\max} \left( \mathbf{B} | \mathbf{ T}  E N \right)_{\rho_{|\hat{\Omega}}} $; $N$ can be included in the entropy since its value is fixed by $\mathbf{T}$. 
By the definition of the smooth max-entropy we have
\[
	H^{\frac{\varepsilon_{\text{s}}}{4}}_{\max} \left( \mathbf{B} | \mathbf{ T}  E N \right) \leq H^{\frac{\varepsilon_{\text{s}}}{4} - \sqrt{\varepsilon_t}}_{\max} \left( \mathbf{B} | \mathbf{ T}  E N ,N\leq \bar{n}+t \right) \;.
\]
Following the same steps as in the proof of Lemma~\ref{lem:smooth_bound_qkd} we have
\[
	H^{\frac{\varepsilon_{\text{s}}}{4} - \sqrt{\varepsilon_t}}_{\max} \left( \mathbf{B} | \mathbf{ T}  E N ,N\leq \bar{n}+t \right) _{\rho_{|\hat{\Omega}}} < \gamma (\bar{n} + t) + 2\log7\sqrt{\bar{n} + t} \sqrt{1-2\log \left(( \varepsilon_{\text{s}}/4 -\sqrt{\varepsilon_t} )\cdot \left(\varepsilon_{\mathrm{EA}} + \varepsilon_{\mathrm{EC}}\right) \right)} \;.
\]
With this modification and the modified entropy rate given in Equation~\eqref{eq:full_entropy_mod} we get
\begin{equation*}
	\begin{split}
	H^{\varepsilon_{\text{s}}}_{\min} \left( \mathbf{A} | \mathbf{X Y T} O E \right)_{\tilde{\rho}_{|\tilde{\Omega}}} \geq \frac{\bar{n}}{\bar{s}} \cdot \eta_{\mathrm{opt}}\left(\varepsilon_{\text{s}}/4,\varepsilon_{\mathrm{EA}} + \varepsilon_{\mathrm{EC}}\right) - \mathrm{leak_{EC}}  -  3 \log\left(1-\sqrt{1-(\varepsilon_{\text{s}}/4)^2}\right)  \\
	- \gamma (\bar{n} + t) -2\log7 \sqrt{\bar{n} + t} \sqrt{1-2\log \left(( \varepsilon_{\text{s}}/4 -\sqrt{\varepsilon_t})\cdot \left(\varepsilon_{\mathrm{EA}} + \varepsilon_{\mathrm{EC}}\right) \right)}  \;.\qedhere
\end{split}
\end{equation*}

Similarly, the amount of leakage due to the error correction step $ \mathrm{leak_{EC}}$ should be modified as well. Following the steps in Section~\ref{sec:leakage_ec_calc}, the quantity to be upper bounded is $H_{\max}^{\frac{\varepsilon'_{\mathrm{EC}}}{2}}\left(\mathbf{A}|\tilde{\mathbf{B}}\mathbf{XYT}N\right)$. Here as well we have
\[
	H_{\max}^{\frac{\varepsilon'_{\mathrm{EC}}}{2}}\left(\mathbf{A}|\tilde{\mathbf{B}}\mathbf{XYT}N\right) \leq H_{\max}^{\frac{\varepsilon'_{\mathrm{EC}}}{2}-\sqrt{\varepsilon_t}}\left(\mathbf{A}|\tilde{\mathbf{B}}\mathbf{XYT}N,N\leq \bar{n}+t\right) \;.
\]
The asymptotic equipartition property can be used with the maximal length $\bar{n}+t$ to get
\[
	H_{\max}^{\frac{\varepsilon'_{\mathrm{EC}}}{2}-\sqrt{\varepsilon_t}}\left(\mathbf{A}|\tilde{\mathbf{B}}\mathbf{XYT}N,N\leq \bar{n}+t\right) \leq (\bar{n}+t) \cdot H(A_i|\tilde{B}_iX_i Y_i T_i) + \sqrt{\bar{n}+t}\; \delta (\varepsilon'_{\mathrm{EC}}-2\sqrt{\varepsilon_t}, \tau) \;,
\]
for $\tau =2 \sqrt{2^{H_{\max}(A_i|\tilde{B}_iX_i Y_i T_i)}}+1$ and $\delta (\varepsilon'_{\mathrm{EC}}-2\sqrt{\varepsilon_t},\tau) = 4\log \tau \sqrt{2 \log \left(8/ (\varepsilon'_{\mathrm{EC}}-2\sqrt{\varepsilon_t})^2\right)}$.
Continuing exactly as in Section~\ref{sec:leakage_ec_calc} we get
\begin{equation*}
	\begin{split}
		\mathrm{leak_{EC}} \leq (\bar{n}+t) \cdot  \left[\left( 1-\gamma \right)  h(Q) + \gamma h(\omega_{\mathrm{exp}}) \right] +  4\log \left(2\sqrt{2} +1\right)\sqrt{\bar{n}+t} \, \sqrt{2 \log \left(8/ (\varepsilon'_{\mathrm{EC}}-2\sqrt{\varepsilon_t})^2\right)}\\
		+ \log \left( 8/\varepsilon'^2_{\mathrm{EC}} + 2/\left(2-\varepsilon'_{\mathrm{EC}}\right)\right) + \log\left(\frac{1}{\varepsilon_{\mathrm{EC}}}\right) \;.
	\end{split}
\end{equation*}

The parameter $\varepsilon_t$  should be chosen such that the key rate is optimised. The resulting key rates are shown in Figures~\ref{fig:qkd_rates_Q_mod} and~\ref{fig:qkd_rates_n_mod} in the main text.

\bibliographystyle{alpha}
\bibliography{refs}

\end{document}